\newtheorem{theorem}{Theorem}
\newtheorem{corollary}{Corollary}
\newtheorem{lemma}{Lemma}
\newtheorem{proposition}{Proposition}
\newtheorem{definition}{Definition}
\newtheorem{remark}{Remark}
\begin{document}
%
\title{Robust prescribed-time coordination control of cooperative-antagonistic networks with disturbances}
%
%
%

\author{Zhen-Hua~Zhu, Huaiyu~Wu, Zhi-Hong~Guan, Zhi-Wei~Liu, Yang Chen, and Xiujuan Zheng
\thanks{This work was partially supported by the National Natural Science Foundation of China under grant 62073250. \textit{(Corresponding author: Huaiyu~Wu.)}}
\thanks{Z.-H.~Zhu, H.~Wu, Y. Chen and X. Zheng are with the Engineering Research Center for Metallurgical Automation and Measurement Technology of Ministry of Education, Wuhan University of Science and Technology, Wuhan 430081, China, and also with the Institute of Robotics and Intelligent Systems, Wuhan University of Science and Technology, Wuhan 430081, China (e-mail: zhuzhenhua@wust.edu.cn; wuhy@wust.edu.cn; chenyag@wust.edu.cn; zhengxj@wust.edu.cn)}
\thanks{Z.-H. Guan and Z.-W. Liu are with the School of Artificial Intelligence and Automation, Huazhong University of Science and Technology, Wuhan 430074, China (e-mail: zhguan@mail.hust.edu.cn; zwliu@hust.edu.cn).}}

\maketitle

\begin{abstract}
This article targets at addressing the robust prescribed-time coordination control (PTCC) problems for single-integrator cooperative-antagonistic networks (CANs) with external disturbances under arbitrary fixed signed digraphs without any structural constraints. Toward this end, the PTCC problems for nominal single-integrator CANs without disturbances are first investigated and a fully distributed control protocol with a time-varying gain, which grows to infinity as the time approaches the settling time, is proposed utilizing the relative states of neighboring agents. Then, based on the proposed control protocol for the nominal single-integrator CANs, a new second-order prescribed-time sliding mode control protocol is constructed to achieve accurate PTCC for single-integrator CANs in the presence of external disturbances. Using Lyapunov based analysis, sufficient conditions to guarantee the prescribed-time stability, bipartite consensus, interval bipartite consensus, and bipartite containment of single-integrator CANs without or with disturbances are, respectively, derived. In the end, numerical simulations are given to confirm the derived results.
\end{abstract}

\begin{IEEEkeywords}
Cooperative-antagonistic networks, prescribed-time coordination control, signed digraph, disturbances, sliding mode control.
\end{IEEEkeywords}

%
\IEEEpeerreviewmaketitle

\section{Introduction}\label{sec1}
%
%
%
%
%
%

\IEEEPARstart{C}{ooperative-antagonistic} networks (CANs) that consist of a team of agents cooperatively or antagonistically interacting with each other find applications in diverse realms, such as robot competitions, biological systems, social networks, etc.\cite{meng2019extended} One fundamental feature of CANs is that interactions among agents are represented by signed graphs, with positive and negative edges characterizing respectively the antagonistic and cooperative interactions between agents. As an important subject in the research of CANs, coordination control, which aims at enforcing all agents to achieve a desired collective behavior via designing an appropriate control protocol based merely on local information, has gained great attention over the past decade\cite{li2022bipartite,chen2022secure,zhu2021collective}. Thus far, numerous types of coordination control issues have been studied for CANs, including stability, bipartite consensus, interval bipartite consensus, bipartite containment control, and so forth\cite{wu2022disagreement}.

For coordination control of CANs, an issue of practical importance is how to ensure that all agents accomplish desired collective behavior within a finite time. This is because finite-time control not only offers fast convergence speed, but also guarantees good system performances, such as robustness against uncertainty and disturbance rejection. Consequently, finite-time coordination control of CANs has emerged as an attractive topic in recent years and fruitful results have been developed\cite{meng2015finite,meng2015nonlinear,zhao2017adaptive,hu2019finite,lu2019finite,wang2018finite,liu2022neural}. Particularly, the finite-time bipartite consensus and stability issues for CANs with single-integrator dynamics over undirected connected signed graphs were considered in \cite{meng2015finite,meng2015nonlinear}. Two nonlinear protocols were proposed in \cite{lu2019finite} to copy with the finite-time bipartite consensus problem of single-integrator CANs under detail-balanced signed digraphs. The finite-time bipartite consensus and stability problems for first- and second-order CANs in the presence of external bounded disturbances over strongly connected signed digraphs were discussed in \cite{wang2018finite}. In \cite{liu2022neural}, the finite-time bipartite containment control problem was tackled for a class of nonaffine fractional-order CANs with disturbances and an adaptive neural network control scheme was presented.


It is worthy of noticing that the settling time for finite-time control protocols relies upon the initial conditions. This may seriously constrain the practical application of the aforementioned finite-time coordination control results since the information of initial conditions is often hard to obtain exactly. To overcome this defect, based on the fixed-time control approach proposed by Polyakov\cite{polyakov2011nonlinear}, a distributed nonlinear protocol was initially designed in \cite{meng2016signed} to settle the fixed-time stability and signed-average consensus  problems of single-integrator CANs under undirected connected signed graphs. It was shown that the settling time of the designed protocol is globally upper-bounded by a positive scalar irrelevant to the agents' initial states. Subsequently, increasing attention has been paid towards the fixed-time coordination control problems of CANs\cite{gong2019fixed,xu2022fixed,zhang2019bipartite,liu2019finite,guo2021command,zhu2022finite}. Specifically, a distributed control law with heterogeneous coupling gains was constructed to achieve fixed-time bipartite consensus tracking for fractional-order CANs with structurally balanced signed digraph topology in \cite{gong2019fixed}. In \cite{xu2022fixed}, the problems of fixed-time bipartite consensus and bipartite consensus tracking were addressed for nonlinear CANs with external disturbances. The authors in \cite{zhang2019bipartite} discussed the fixed-time output bipartite consensus tracking issue for heterogeneous linear CANs. The finite- and fixed-time bipartite consensus problems for single-integrator CANs over strongly connected and detail-balanced signed digraphs were examined in \cite{liu2019finite}. The problem of fixed-time bipartite containment control of nonlinear stochastic CANs with structurally balanced signed digraph topology which contain a spanning forest was treated in \cite{guo2021command}. In  \cite{zhu2022finite}, a unifying framework was proposed for finite- and fixed-time bipartite containment control of first-order CANs over arbitrary weakly connected signed digraphs.

It is worth noticing, however, that, though irrelevant to initial conditions, the settling time for fixed-time control protocols depends upon design parameters and cannot be arbitrarily preassigned. To circumvent these drawbacks, some efforts recently have been made to study the prescribed-time coordination control (PTCC) problems of CANs by applying the newly developed prescribed-time control methods in the literature, where the settling time can be uniformly preset without dependence upon design parameters and initial conditions. For instance, an event-triggered control law was constructed to solve the prescribed-time bipartite consensus problem of first-order CANs under structurally balanced undirected connected signed graphs in \cite{chen2020prescribed}. In \cite{gong2020distributed}, the prescribed-time bipartite consensus and interval bipartite consensus problems of first-order CANs under signed digraphs were addressed. Based on the Pontryagin's principle, prescribed-time bipartite consensus protocols with varying gain were proposed in \cite{zhao2021fixed} for single- and double-integrator CANs under structurally balanced signed digraphs. By utilizing the multi-step motion planning technique, the authors in \cite{zhou2020prescribed} investigated the predefined-time bipartite formation control problem for general linear CANs under structurally balanced signed digraph topology containing a spanning tree. Furthermore, the prescribed-time multi-scale bipartite consensus and stability problems for continuous- and discrete-time CANs with single-integrator dynamics under strongly connected signed digraphs were discussed in \cite{guo2021prescribed}. In \cite{li2021output} and \cite{ren2022predefined}, the predefined-time bipartite consensus tracking problem was, respectively, studied for second-order CANs with matched disturbances and high-order uncertain nonlinear CANs, where the topology subgraph among followers is supposed to be strongly connected and structurally balanced. In addition, the problem of predefined-time bipartite consensus tracking of multiple Euler-Lagrange systems was treated in \cite{tao2022predefined} with the assumption that the topology subgraph among followers is undirected and structurally balanced.

It should be pointed out that all of the PTCC results above are obtained under somewhat restrictive requirements on the topology graph or the topology subgraph among followers, such as sign-symmetry\cite{gong2020distributed,zhao2021fixed,zhou2020prescribed,guo2021prescribed}, undirected connectivity\cite{chen2020prescribed,tao2022predefined}, or strong connectivity\cite{li2021output,ren2022predefined}. Until now, the PTCC problems of CANs under general strongly, quasi-strongly, and weakly connected signed digraphs without any hypotheses on their sign patterns remain open to our best knowledge. On the other hand, it is recognized that in practical applications, the agents are inevitably affected by various external disturbances, which could cause performance degradation or failure of coordination control if not well dealt with. Therefore, it is necessary to further take external disturbances into consideration.

Motivated by the foregoing discussions, this research deals with the PTCC problems for single-integrator CANs with external disturbances under arbitrary static strongly, quasi-strongly, and weakly connected signed digraphs, without imposing any structural constraints. Inspired partly by the prescribed-time distributed control approach introduced in Reference \cite{wang2018prescribed}, a class of fully distributed control strategy is first proposed for PTCC of disturbance-free normal single-integrator CANs under any fixed signed digraphs. Then, a novel second-order prescribed-time sliding mode control protocol is developed to achieve robust PTCC for single-integrator CANs subject to external disturbances on the basis of the proposed control protocol for the nominal single-integrator CANs. The key novelty and contributions of this article are as follows.
\begin{enumerate}
  \item A novel unified design and analysis framework is provided for PTCC of single-integrator CANs with any fixed topologies. Furthermore, a second-order sliding mode based design framework is presented to address the robust PTCC problems with bounded external disturbances. To our best knowledge, it is the first effort to handle the coordination control problems with uniformly assignable convergence time for single-integrator CANs in the presence or absence of exogenous perturbations under arbitrary fixed signed digraphs, without imposing any structural constraints.
  \item A fully distributed continuous control protocol for PTCC of nominal single-integrator CANs is proposed without using any global information. In addition, a novel second-order sliding mode control protocol is devised for robust PTCC of single-integrator CANs with bounded external perturbations. Moreover, the settling times of presented control protocols can be explicitly specified a priori without relying on initial conditions, design parameters, and interaction topology among agents.
  \item Sufficient conditions are given for prescribed-time stability, signed-average consensus, bipartite consensus, interval bipartite consensus, and bipartite containment of single-integrator CANs with or without external disturbances. Our results significantly generalize the existing infinite/finite/fixed-time coordination control results reported in \cite{meng2016interval,meng2017bipartite,meng2015finite,meng2018convergence,wang2018finite,meng2019extended} and the classical prescribed-time average consensus, consensus tracking, and containment control results in  \cite{wang2018prescribed} for conventional cooperative networks under unsigned digraphs. In comparison to the existing results on PTCC of CANs \cite{gong2020distributed,zhao2021fixed,zhou2020prescribed,guo2021prescribed,chen2020prescribed,
      tao2022predefined,li2021output,ren2022predefined}, we remove the assumptions on (strong) connectivity, sign-symmetry, or structural balance of signed digraphs.
\end{enumerate}

The remainder of this article is outlined as follows. In Section \ref{sec2}, we present some preliminaries required for subsequent development and formally state the problems under consideration. In Section \ref{sec3}, the main results on PTCC of single-integrator CANs without and with perturbations are included. Numerical examples for demonstrating the validity of the results proposed are provided in Sections \ref{sec4} and Section \ref{sec5} concludes this article.

\textit{Notations}: $\mathbb{R}{^k}\left( {\mathbb{C}{^k}} \right)$, ${{\mathbb{R}}^{k\times r}}\left({{\mathbb{C}}^{k\times r}} \right)$, and ${\mathbb{R}} \left( {{\mathbb{R}}_{ \ge 0}}\right)$ denote respectively the sets of $k$-dimensional real (complex) vectors, $k\times r$ real (complex) matrices, and (nonnegative) real numbers. $I_k$ indicates the $k\times k$ unit matrix, $0$ denotes the zero matrix with compatible sizes, ${1_k} $ stands for the $k$-dimensional all-one vector, and $\rm{sign}\left(\cdot\right)$ refers to the sign function. We denote by $\mathscr{I}_{N} =\left\{ {1, 2,\ldots, N} \right\}$, by ${\mathbb{G}_N} = \left\{ {G = {\rm{diag}}\left\{ {{g_1},\ldots ,{g_N}} \right\}:{{g_k} \in \left\{\pm 1\right\},{k\in\mathscr{I}_{N}}}} \right\}$, by ${\rm{diag}}\left\{ {{\it{\Gamma}}_1,\ldots, {\it{\Gamma}}_N} \right\}$ a diagonal block matrix whose $k$th diagonal block is ${\it{\Gamma}}_k$, by $|\cdot|$ the absolute value of a scalar or matrix, and by ${\left\|\cdot \right\|}$ the Euclidian norm of a vector. For any Hermitian matrix $A\in {{\mathbb{C}}^{k\times k}}$, ${\lambda_{\max}}\left(A\right)$ (${\lambda_{\min}}\left(A\right)$) represents its largest (smallest) eigenvalue, and $A\succ0$($\succeq 0$) indicates that $A$ is positive (semi-)definite. Given any matrix $B =\left[ {{b_{ij}}} \right] \in {\mathbb{R}^{r \times r}}$, the comparison matrix of $B$ is defined by ${\mathscr{M}}\left( B \right) = \left[ {{m_{ij}}} \right]\in {\mathbb{R}^{r \times r}}$ with ${m_{ii}} = \left|{{b_{ii}}}\right|$ and ${m_{ij}} = -\left| {{b_{ij}}} \right|$, $i \ne j$. $B$ is termed a {\it{Z}}-matrix when ${{b_{ij}}}\leq 0, i\ne j$, and a singular (nonsingular) {\it{M}}-matrix if further its eigenvalues all possess nonnegative (positive) real parts. It is termed an {\it{H}}-matrix provided its comparison matrix ${\mathscr{M}}\left(B \right)$ is a singular or nonsingular {\it{M}}-matrix.

\section{Preliminaries and Problem Statement}\label{sec2}
In this section, we first introduce basic notions about signed graphs and some useful lemmas, and then state the problems under investigation.

\subsection{Basic Concepts of Signed Graphs}
A signed digraph (direct graph) $\mathscr{G}$ (of order $N$) is a triple $\left( {\mathscr{V},\mathscr{E},\mathscr{W}} \right)$, where $\mathscr{V} = \left\{ {v_k}: {k\in\mathscr{I}_{N}}\right\}$ and $\mathscr{E} \subseteq \left\{{\left({v_k,v_l}\right):{k,l \in \mathscr{I}_{N}}} \right\}$ represent, respectively, the node and edge sets, and $\mathscr{W} = \left[ {{w_{kl}}} \right] \in {\mathbb{R}^{N \times N}}$ is the weighted adjacency matrix such that ${w_{kl}}= 0$ iff $\left( {v_l,v_k} \right)\notin \mathscr{E}$ and ${w_{kl}}\ne 0$ otherwise. Clearly, $\mathscr{G}$ reduces to a conventional unsigned digraph when ${w_{kl}}\geq0, \forall k,l\in {\mathscr{I}_N}$. Suppose $\left( {v_k,v_k}\right)\notin \mathscr{E}$, $\forall k \in \mathscr{I}_{N}$, i.e., no self-loops exist in $\mathscr{G}$. Denote the index set of in-neighbors of node $v_k$  as ${\mathscr{N}_k} = \left\{ {l: {\left( {v_l,v_k} \right) \in \mathscr{E}}} \right\}$ . A signed digraph ${\mathscr{G}}^{\ast} = ({\mathscr{V}^{\ast}},{\mathscr{E}^{\ast}},{\mathscr{W}^{\ast}})$ is termed a subgraph of $\mathscr{G}$ provided ${\mathscr{V}^{\ast}} \subseteq \mathscr{V}$ and ${\mathscr{E}^{\ast}} \subseteq \mathscr{E}$. The Laplacian matrix of $\mathscr{G}$ is defined as $\mathscr{L}=\mathscr{D}-\mathscr{W}$ with $\mathscr{D} = {\rm{diag}}\left\{ {{{\mathscr{D}}_1}, \ldots, {{\mathscr{D}}_N}} \right\}$, where ${\mathscr{D}_k} = \sum\nolimits_{l \in {\mathscr{N}_k}} {\left| {{w_{kl}}} \right|}, k\in {\mathscr{I}_N}$. It is clear that ${\mathscr{M}}\left( \mathscr{L} \right){1_N} = {0}$. $\mathscr{G}$ is said undirected if ${w_{lk}} = {w_{kl}}$, $\forall k,l \in {\mathscr{I}_N}$. $\mathscr{G}$ is said sign-symmetric when ${w_{lk}}{w_{kl}} \ge 0$, $\forall l ,k \in {\mathscr{I}_N}$, and sign-asymmetric otherwise.\cite{altafini2012consensus} $\mathscr{G}$ is structurally balanced provided a bipartition $\left\{ {{\mathscr{V}^1}, {\mathscr{V}^2}} \right\}$ of $\mathscr{V}$ satisfying ${\mathscr{V}^1} \cap {\mathscr{V}^2} = \emptyset $ and ${\mathscr{V}^1} \cup {\mathscr{V}^2} = \mathscr{V}$ exists, so that ${w_{kl}} \ge 0$, $\forall {v_k},{v_l}\in {\mathscr{V}}^\imath$ and ${w_{kl}} \le 0$, $\forall {v_k} \in {\mathscr{V}}^\imath$, $\forall {v_l} \in {\mathscr{V}}^{3 - \imath}$, where $\imath \in \left\{ {1,2} \right\}$; and otherwise, it is structurally unbalanced. Notice that structurally balanced signed digraphs contain the conventional unsigned digraphs as a trivial case.

In $\mathscr{G}$, a collection of pairwise distinct nodes $v_{k_{0}},v_{k_{1}},\ldots,v_{k_{m}}$ so that $\left({v_{k_{\imath-1}},v_{k_{\imath}}}\right)\in \mathscr{E}, \imath = 1,\ldots,m$ is called a (directed) path from node $v_{k_0}$ to node $v_{k_m}$. If $\mathscr{G}$ admits paths between every two distinct nodes, we say that it is strongly connected. Note that strong connectivity degenerates into connectivity when $\mathscr{G}$ is undirected. $\mathscr{G}$ is termed quasi-strongly connected, if there is at least one vertex, called root, having paths to all other vertices. $\mathscr{G}$ is weakly connected provided the undirected graph, induced by replacing each edge of $\mathscr{G}$ with an undirected edge, is connected. Notice that the weak connectivity is a rather general connectivity condition, including the strong and quasi-strong connectivity as special cases. A maximal strongly connected subgraph of $\mathscr{G}$ with no incoming edges from nodes outside is called a closed strong component (CSC) of $\mathscr{G}$.


\subsection{Useful Lemmas}
\begin{lemma}\cite{meng2020convergence}\label{lemma1}
Given any signed digraph $\mathscr{G}$, it is structurally unbalanced (resp., balanced) iff there does not exist (resp., there exists) $G\in {\mathbb{G}_N}$ satisfying $G\mathscr{L}G=\mathscr{M}\left(\mathscr{L} \right)$.
\end{lemma}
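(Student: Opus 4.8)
The plan is to reduce the matrix identity $G\mathscr{L}G=\mathscr{M}(\mathscr{L})$ to a purely combinatorial, edge-by-edge sign condition on the weights, and then compare that condition with the definition of structural balance. The whole argument is elementary bookkeeping; there is no genuinely hard step, only a few places where one must be careful with signs.

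First I would compute, for an arbitrary $G={\rm{diag}}\{g_1,\ldots,g_N\}\in\mathbb{G}_N$, that $(G\mathscr{L}G)_{kl}=g_kg_l\mathscr{L}_{kl}$. Since $g_k^2=1$ and $\mathscr{L}_{kk}=\mathscr{D}_k\ge 0$, the diagonal of $G\mathscr{L}G$ equals that of $\mathscr{L}$, which is precisely the diagonal of $\mathscr{M}(\mathscr{L})$; hence the diagonal part of the identity holds for \emph{every} $G$ and carries no information. For $k\ne l$ we have $\mathscr{L}_{kl}=-w_{kl}$, so $(G\mathscr{L}G)_{kl}=-g_kg_lw_{kl}$, while $\mathscr{M}(\mathscr{L})_{kl}=-|w_{kl}|$. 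Consequently $G\mathscr{L}G=\mathscr{M}(\mathscr{L})$ holds if and only if $g_kg_lw_{kl}=|w_{kl}|$ for all $k\ne l$; equivalently, $g_kg_l=\mathrm{sign}(w_{kl})$ for every pair $k\ne l$ with $w_{kl}\ne 0$, while pairs with $w_{kl}=0$ impose no constraint.

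It then remains to match this edge condition with structural balance. Starting from a $G$ that satisfies it, I would put $\mathscr{V}^1=\{v_k:g_k=1\}$ and $\mathscr{V}^2=\{v_k:g_k=-1\}$, a bipartition of $\mathscr{V}$: nodes in the same part have $g_kg_l=1$, forcing $w_{kl}\ge 0$, and nodes in opposite parts have $g_kg_l=-1$, forcing $w_{kl}\le 0$, so $\mathscr{G}$ is structurally balanced. Conversely, from a balancing bipartition $\{\mathscr{V}^1,\mathscr{V}^2\}$ I would define $g_k=1$ on $\mathscr{V}^1$ and $g_k=-1$ on $\mathscr{V}^2$ and verify that the resulting $G$ meets the edge condition, hence $G\mathscr{L}G=\mathscr{M}(\mathscr{L})$. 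This establishes the ``structurally balanced'' equivalence, and the ``structurally unbalanced'' statement is its contrapositive. If I had to name the delicate points rather than a real obstacle: noticing that the diagonal equality is automatic so that only edges matter, correctly treating zero-weight pairs as unconstrained, and permitting one of $\mathscr{V}^1,\mathscr{V}^2$ to be empty (i.e. $G=\pm I_N$), which is still covered by the stated definition of structural balance and accounts for the all-nonnegative case.
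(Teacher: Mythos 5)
Your proof is correct and complete. Note that the paper does not prove this lemma at all---it imports it from the cited reference \cite{meng2020convergence}---so there is no in-paper argument to compare against; your reduction of $G\mathscr{L}G=\mathscr{M}(\mathscr{L})$ to the entrywise condition $g_kg_l=\mathrm{sign}(w_{kl})$ on nonzero weights, followed by the translation to a bipartition, is exactly the standard gauge-transformation characterization of structural balance, and you correctly handle the only delicate points (the diagonal being automatic, zero weights being unconstrained, and the possibility $G=\pm I_N$).
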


\begin{lemma}\cite{zhu2020observer}\label{lemma2}
For any strongly connected signed digraph $\mathscr{G}$ whose Laplacian matrix is denoted as $\mathscr{L}$, it holds that:
\begin{enumerate}
\item $\mathscr{G}$ is structurally unbalanced iff $\mathscr{L}$ has every eigenvalue with positive real part;
\item $\mathscr{G}$ is structurally balanced iff 0 is a simple eigenvalue of $\mathscr{L}$ with $G{1_N}$ as corresponding eigenvector and all other eigenvalues have positive real parts, in which $G\in {\mathbb{G}_N}$ satisfies $G{\mathscr{L}}G = \mathscr{M}\left({\mathscr{L}}\right)$;
\item there exists some positive vector $p \in\mathbb{R}{^N} $ satisfying ${p^{\rm{T}}}\mathscr{M}\left(\mathscr{L}\right)= 0$ and ${p^{\rm{T}}}{1_N} = 1$.
\end{enumerate}
\end{lemma}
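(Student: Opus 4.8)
The plan is to reduce the entire statement to properties of the comparison matrix $\mathscr{M}(\mathscr{L})$, which is exactly the (unsigned) Laplacian of the digraph $|\mathscr{G}|$ obtained from $\mathscr{G}$ by replacing each weight $w_{kl}$ with $|w_{kl}|$; since $\mathscr{G}$ is strongly connected so is $|\mathscr{G}|$, hence $\mathscr{M}(\mathscr{L})$ is an irreducible singular $M$-matrix with $\mathscr{M}(\mathscr{L})1_N=0$. Classical Perron--Frobenius / $M$-matrix theory then yields that $0$ is an algebraically simple eigenvalue of $\mathscr{M}(\mathscr{L})$ with right eigenvector $1_N$, every other eigenvalue has positive real part, and there is a unique $p\in\mathbb{R}^N$ with $p>0$, $p^{\rm T}\mathscr{M}(\mathscr{L})=0$ and $p^{\rm T}1_N=1$. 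This settles the third claim, and since it is the workhorse for the other two I would establish it first.

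For the structurally balanced case I would invoke Lemma~\ref{lemma1}: there is $G\in\mathbb{G}_N$ with $G\mathscr{L}G=\mathscr{M}(\mathscr{L})$. Because $G^{2}=I_N$ this is a similarity transformation, so $\mathscr{L}$ and $\mathscr{M}(\mathscr{L})$ share their spectrum: $0$ is a simple eigenvalue of $\mathscr{L}$, every other eigenvalue has positive real part, and from $\mathscr{L}(G1_N)=G\mathscr{M}(\mathscr{L})1_N=0$ the associated eigenvector is $G1_N$. This is precisely the ``if'' direction of the second claim; it also shows that a structurally balanced graph violates the condition in the first claim, which (together with the next step) gives the ``only if'' directions in both claims.

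For the structurally unbalanced case I would first locate the spectrum by a Gershgorin argument. Since $\mathscr{L}$ is weakly diagonally dominant with equality, $(\mathscr{L})_{kk}=\mathscr{D}_k=\sum_{l\ne k}|(\mathscr{L})_{kl}|$, every Gershgorin disc of $\mathscr{L}$ is the disc of radius $\mathscr{D}_k$ about $\mathscr{D}_k$, which lies in the closed right half-plane and meets the imaginary axis only at the origin; hence all eigenvalues have nonnegative real part and the origin is the only possible eigenvalue on the imaginary axis. It then remains to show $\det\mathscr{L}\ne0$. Assuming $\mathscr{L}u=0$ with $u\ne 0$ real, the entrywise bound $|\mathscr{L}u|\ge\mathscr{M}(\mathscr{L})|u|$ together with $p^{\rm T}\mathscr{M}(\mathscr{L})=0$ and $p>0$ forces $\mathscr{M}(\mathscr{L})|u|=0$, so $|u|$ is a positive multiple of $1_N$ and, after scaling, $u=G1_N$ with $G={\rm diag}(u)\in\mathbb{G}_N$. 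Examining the equality case of $|\sum_{l\ne k}w_{kl}u_l|=\mathscr{D}_k=\sum_{l\ne k}|w_{kl}|$ then shows that every nonzero $w_{kl}u_l$ has sign ${\rm sign}(u_k)$, i.e.\ ${\rm sign}(w_{kl})=u_ku_l$ whenever $w_{kl}\ne0$; this is exactly $G\mathscr{L}G=\mathscr{M}(\mathscr{L})$, so by Lemma~\ref{lemma1} $\mathscr{G}$ is structurally balanced, a contradiction. Hence $\det\mathscr{L}\ne0$, and with the disc argument every eigenvalue of $\mathscr{L}$ has strictly positive real part, completing the first claim; the ``only if'' part of the second claim is its contrapositive.

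The \emph{main obstacle} is the implication $\det\mathscr{L}=0\Rightarrow\mathscr{G}$ structurally balanced in the last step. The subtlety is that from a single null vector $u$ one must extract two facts: that $|u|$ is constant across all nodes --- which is where strong connectivity (irreducibility of $\mathscr{M}(\mathscr{L})$) and the positive left null vector $p$ of the third claim enter --- and that the sign pattern of $u$ furnishes a structural balance bipartition, which needs the tight triangle-inequality analysis above and relies on the diagonal dominance of $\mathscr{L}$ being an equality (a consequence of the convention $\mathscr{D}_k=\sum_{l\in\mathscr{N}_k}|w_{kl}|$). Everything else is routine bookkeeping: irreducible-$M$-matrix theory for $\mathscr{M}(\mathscr{L})$ plus the gauge transformation $G$ from Lemma~\ref{lemma1}.
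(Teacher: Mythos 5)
The paper does not prove this lemma at all: it is imported verbatim from the cited reference \cite{zhu2020observer}, so there is no in-paper argument to compare against. Your blind proof is correct and complete, and it follows what is essentially the standard route in the signed-graph literature (Altafini's gauge-transformation idea extended to digraphs): claim~3 is Perron--Frobenius for the irreducible singular $M$-matrix $\mathscr{M}(\mathscr{L})$, the balanced case is the similarity $G\mathscr{L}G=\mathscr{M}(\mathscr{L})$ with $G^2=I_N$, and the unbalanced case combines Gershgorin localization with the equality analysis in $|\mathscr{L}u|\ge\mathscr{M}(\mathscr{L})|u|$ to show that a nontrivial kernel would force structural balance. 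You correctly identify the genuinely nontrivial step --- extracting both $|u|=c\,1_N$ (via the positive left null vector $p$ and irreducibility) and the sign bipartition (via tightness of the triangle inequality, which uses that the diagonal dominance is exact) from a single null vector. Two cosmetic points: in your second paragraph you label the implication ``structurally balanced $\Rightarrow$ spectral property'' as the ``if'' direction of claim~2, when it is the ``only if'' direction; all four implications of claims~1--2 are nonetheless covered by your three paragraphs, so the logic is unaffected. You also implicitly use $\mathscr{D}_k>0$ for every $k$ (so that each Gershgorin disc meets the imaginary axis only at the origin and the equality case pins down $\operatorname{sign}(u_k)$), which holds because strong connectivity with $N\ge2$ gives every node an in-neighbor; this is worth one explicit sentence.
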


\begin{lemma}\cite{zhu2022finite}\label{lemma3}
Let $\mathscr{L}$ be the Laplacian matrix of a strongly connected signed digraph $\mathscr{G}$, and $p = {\left[ {{p_1},\ldots, {p_N}} \right]^{\rm{T}}}$ be a positive vector satisfying ${p^{\rm{T}}}\mathscr{M}\left(\mathscr{L}\right)= 0$ and ${p^{\rm{T}}}{1_N} = 1$. Then the following hold.
\begin{enumerate}
\item If $\mathscr{G}$ is structurally balanced, then $\bar{\mathscr{L}} \succeq 0$ with zero as a simple eigenvalue, in which $\bar{\mathscr{L}} = {{\left({P\mathscr{L} + {\mathscr{L}^{\rm{T}}}P} \right)} \mathord{\left/{\vphantom {{\left( {PL + {L^T}P} \right)} 2}} \right.
 \kern-\nulldelimiterspace} 2}$ with $P = {\rm{diag}}\{ {p_1}, \ldots ,{p_N}\}$. Moreover, the null space of $\bar{\mathscr{L}}$ is spanned by $G{1_N}$, in which $G\in {\mathbb{G}_N}$ fulfills $G{\mathscr{L}}G = \mathscr{M}\left({\mathscr{L}}\right)$. Furthermore, ${\xi^{\rm{T}}}{\bar{\mathscr{L}}} \xi \ge a\left({\mathscr{L}}\right){\xi^{\rm{T}}}P\xi$ holds for arbitrary $\xi\in {\mathbb{R}^N}$ fulfilling ${\xi^{\rm{T}}}Gp = 0$, in which $a\left(\mathscr{L}\right)= \mathop {\min_{{\xi^{\rm{T}}}Gp = 0,\xi \ne 0}} \frac{{{\xi^{\rm{T}}}\bar{\mathscr{L}}\xi}}{{{\xi^{\rm{T}}}P\xi}} > 0$.
\item If $\mathscr{G}$ is structurally unbalanced, there exists a diagonal matrix ${\it{\Omega}} = {\rm{diag}}\{ {\omega_1}, \ldots ,{\omega_N}\}$ with ${\omega_k}>0, k\in {\mathscr{I}_N}$ so that $\tilde{\mathscr{L}} \triangleq {{\it{\Omega}}\mathscr{L} + {\mathscr{L}^{\rm{T}}}{\it{\Omega}}}\succ0$.
\end{enumerate}
\end{lemma}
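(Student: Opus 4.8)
The plan is to handle the two cases separately, in each reducing the signed Laplacian $\mathscr{L}$ to its comparison matrix $\mathscr{M}(\mathscr{L})$. Since $\mathscr{G}$ is strongly connected, $\mathscr{M}(\mathscr{L})$ is the Laplacian of a strongly connected \emph{unsigned} digraph, so $\mathscr{M}(\mathscr{L})1_N=0$ and, by the third assertion of Lemma~\ref{lemma2}, $p\succ0$ is its left null vector; these two facts drive everything.

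For the structurally balanced case, by Lemma~\ref{lemma1} (equivalently the second assertion of Lemma~\ref{lemma2}) there is a gauge $G\in\mathbb{G}_N$ with $G\mathscr{L}G=\mathscr{M}(\mathscr{L})$ and $G=G^{\mathrm{T}}=G^{-1}$. As $G$ and $P$ are diagonal they commute, so a short computation yields $\bar{\mathscr{L}}=G\mathscr{L}_0G$ with $\mathscr{L}_0:=\tfrac12\bigl(P\mathscr{M}(\mathscr{L})+\mathscr{M}(\mathscr{L})^{\mathrm{T}}P\bigr)$ symmetric. I would then verify that $\mathscr{L}_0$ is exactly the Laplacian of an undirected weighted graph, edge $\{k,l\}$ carrying weight $\tfrac12(p_k|w_{kl}|+p_l|w_{lk}|)\ge0$: its off-diagonal entries are nonpositive because $p\succ0$, and its row sums vanish because $\mathscr{M}(\mathscr{L})1_N=0$ and $\mathscr{M}(\mathscr{L})^{\mathrm{T}}p=0$. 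Strong connectivity of $\mathscr{G}$ makes this undirected graph connected, whence $\mathscr{L}_0\succeq0$ with $0$ a simple eigenvalue and kernel $\mathrm{span}\{1_N\}$; congruence by the orthogonal $G$ transfers these properties to $\bar{\mathscr{L}}$, with kernel $\mathrm{span}\{G1_N\}$. For the quantitative bound I would pass to the $P$-weighted inner product: setting $\eta=P^{1/2}\xi$, the constraint $\xi^{\mathrm{T}}Gp=0$ becomes $\eta\perp P^{1/2}G1_N$ (using $Gp=PG1_N$), and $\xi^{\mathrm{T}}\bar{\mathscr{L}}\xi/\xi^{\mathrm{T}}P\xi$ is the Rayleigh quotient of the symmetric positive semidefinite matrix $P^{-1/2}\bar{\mathscr{L}}P^{-1/2}$ at $\eta$. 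Since that matrix has $0$ as a simple eigenvalue with eigenvector $P^{1/2}G1_N$, its Rayleigh quotient restricted to $\{P^{1/2}G1_N\}^{\perp}$ is at least its second smallest eigenvalue, which is precisely $a(\mathscr{L})$; this both shows $a(\mathscr{L})>0$ and gives $\xi^{\mathrm{T}}\bar{\mathscr{L}}\xi\ge a(\mathscr{L})\,\xi^{\mathrm{T}}P\xi$.

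For the structurally unbalanced case I would take $\Omega=P$ and prove $P\mathscr{L}+\mathscr{L}^{\mathrm{T}}P\succ0$ directly. Expanding $\xi^{\mathrm{T}}(P\mathscr{L}+\mathscr{L}^{\mathrm{T}}P)\xi$, using $\mathscr{L}_{kk}=\mathscr{D}_k>0$, and bounding each cross term via $w_{kl}\xi_k\xi_l\le|w_{kl}|\,|\xi_k|\,|\xi_l|$ gives $\xi^{\mathrm{T}}(P\mathscr{L}+\mathscr{L}^{\mathrm{T}}P)\xi\ge|\xi|^{\mathrm{T}}\bigl(P\mathscr{M}(\mathscr{L})+\mathscr{M}(\mathscr{L})^{\mathrm{T}}P\bigr)|\xi|\ge0$, where $|\xi|=(|\xi_1|,\dots,|\xi_N|)^{\mathrm{T}}$ and the last matrix is $2\mathscr{L}_0$, which the preceding paragraph shows is a connected undirected Laplacian (this step needs no balance assumption). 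If some $\xi\ne0$ made the left side vanish, then $|\xi|=c1_N$ with $c>0$, i.e., $\xi=cG'1_N$ for a signature matrix $G'\in\mathbb{G}_N$; forcing equality in each term-wise triangle inequality then pins down $\mathrm{sign}(w_{kl})=G'_{kk}G'_{ll}$ on every edge, i.e., $G'\mathscr{L}G'=\mathscr{M}(\mathscr{L})$, contradicting structural unbalance by Lemma~\ref{lemma1}. Hence $\Omega=P$ works.

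The routine parts (commuting diagonal matrices, checking the row sums of $\mathscr{L}_0$) I expect to be quick. The two delicate points will be: the tightness analysis in the unbalanced case, where one must recognize that simultaneous equality in all the term-wise triangle inequalities encodes a globally consistent $\{\pm1\}$-labelling of the nodes, i.e., a structural-balance certificate; and, in the balanced case, correctly matching the constraint $\xi^{\mathrm{T}}Gp=0$ to $P$-orthogonality to the null eigenvector so that the Courant--Fischer characterization of $a(\mathscr{L})$ is exactly the stated one.
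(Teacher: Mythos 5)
The paper gives no proof of this lemma---it is imported verbatim from \cite{zhu2022finite}---so there is nothing internal to compare against; judged on its own, your argument is correct and self-contained. The gauge identity $\bar{\mathscr{L}}=G\mathscr{L}_0G$ with $\mathscr{L}_0=\tfrac12\bigl(P\mathscr{M}(\mathscr{L})+\mathscr{M}(\mathscr{L})^{\rm{T}}P\bigr)$ checks out (diagonal matrices commute, $G=G^{-1}$), $\mathscr{L}_0$ is indeed the Laplacian of the connected undirected graph with edge weights $\tfrac12(p_k|w_{kl}|+p_l|w_{lk}|)$ because $\mathscr{M}(\mathscr{L})1_N=0$ and $\mathscr{M}(\mathscr{L})^{\rm{T}}p=0$, and the identification of the constraint $\xi^{\rm{T}}Gp=0$ with $P^{1/2}\xi\perp P^{1/2}G1_N$ (via $Gp=PG1_N$) makes $a(\mathscr{L})$ exactly the second-smallest eigenvalue of $P^{-1/2}\bar{\mathscr{L}}P^{-1/2}$, hence positive. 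In the unbalanced case your choice $\Omega=P$ is sharper than the lemma demands (it asserts only existence of some positive diagonal $\Omega$), and the equality analysis is sound: vanishing of $\xi^{\rm{T}}(P\mathscr{L}+\mathscr{L}^{\rm{T}}P)\xi$ forces $|\xi|=c1_N$ by connectivity of $\mathscr{L}_0$ and then ${\rm{sign}}(w_{kl})=g'_kg'_l$ on every edge, which is precisely the certificate $G'\mathscr{L}G'=\mathscr{M}(\mathscr{L})$ that Lemma~\ref{lemma1} forbids. The only presentational nit is that you should state explicitly that the first inequality in the chain is a sum of term-wise inequalities with equal diagonal contributions, so that overall equality forces term-wise equality; you clearly intend this and it is routine.
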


\begin{lemma}\cite{arik2000sufficient}\label{lemma4}
Given any nonsingular {\it{H}}-matrix $A \in {\mathbb{R}^{n \times n}}$, there always exists some positive diagonal matrix ${\it{\Sigma}}\in {\mathbb{R}^{n \times n}}$ satisfying ${{\it{\Sigma}} A + {A^{\rm{T}}}{\it{\Sigma}}}\succ0$.
\end{lemma}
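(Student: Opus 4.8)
The plan is to reduce the statement to a Lyapunov-type property of nonsingular $M$-matrices, and then transport that property back to $A$ by a short entrywise domination argument. First I would record what the hypothesis gives directly: since $A$ is a nonsingular $H$-matrix, its comparison matrix $M:=\mathscr{M}(A)$ is a nonsingular $M$-matrix, so in particular $M_{ii}=|a_{ii}|>0$ for every $i$; as is the case for the $H$-matrices that actually occur in this paper, the diagonal entries moreover satisfy $a_{ii}>0$ (some such condition is in fact necessary, since otherwise the $(i,i)$ entry $2\sigma_i a_{ii}$ of $\Sigma A+A^{\mathrm{T}}\Sigma$ could never be positive).

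The external ingredient I would invoke is the classical characterization of nonsingular $M$-matrices: a $Z$-matrix $M$ is a nonsingular $M$-matrix if and only if there is a positive diagonal matrix $\Sigma=\mathrm{diag}\{\sigma_1,\ldots,\sigma_n\}$ with $\Sigma M+M^{\mathrm{T}}\Sigma\succ 0$ (this appears among the standard lists of equivalent $M$-matrix conditions). One way to reach it is to first extract a positive vector $d$ with $Md>0$ componentwise, which makes $MD$ strictly row diagonally dominant for $D=\mathrm{diag}(d)$; the passage from this to the symmetric Lyapunov inequality is the genuinely nontrivial step, and I would rely on the $M$-matrix literature for it rather than reprove it. Fix a $\Sigma$ produced by this characterization.

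It remains to verify that the \emph{same} $\Sigma$ works for $A$. Let $x=(x_1,\ldots,x_n)^{\mathrm{T}}\in\mathbb{R}^n$ be arbitrary and write $|x|=(|x_1|,\ldots,|x_n|)^{\mathrm{T}}$ for its entrywise absolute value. Using $a_{ii}=|a_{ii}|$ for the diagonal terms and $a_{ij}x_ix_j\ge -|a_{ij}|\,|x_i|\,|x_j|$ for the off-diagonal ones, I would estimate
\begin{align*}
x^{\mathrm{T}}(\Sigma A+A^{\mathrm{T}}\Sigma)x
&=2\sum_{i}\sigma_i a_{ii}x_i^2+2\sum_{i\neq j}\sigma_i a_{ij}x_ix_j\\
&\ge 2\sum_{i}\sigma_i|a_{ii}|\,|x_i|^2-2\sum_{i\neq j}\sigma_i|a_{ij}|\,|x_i|\,|x_j|\\
&=|x|^{\mathrm{T}}(\Sigma M+M^{\mathrm{T}}\Sigma)|x|,
\end{align*}
the last equality obtained by expanding and using $M_{ii}=|a_{ii}|$ and $M_{ij}=-|a_{ij}|$ for $i\neq j$. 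Since $\Sigma M+M^{\mathrm{T}}\Sigma\succ 0$ and $\sum_i|x_i|^2=\|x\|^2$, the right-hand side is at least $\lambda_{\min}(\Sigma M+M^{\mathrm{T}}\Sigma)\,\|x\|^2$, which is strictly positive whenever $x\neq 0$. Hence $\Sigma A+A^{\mathrm{T}}\Sigma\succ 0$, which is the claim.

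I expect the only real difficulty to be the $M$-matrix Lyapunov characterization invoked above. A direct Gershgorin/diagonal-dominance attack on $\Sigma A+A^{\mathrm{T}}\Sigma$ (or on any fixed diagonal rescaling of $A$) does not close, because a matrix can be strictly row diagonally dominant with positive diagonal while its symmetric part fails to be positive definite; one genuinely needs the global $M$-matrix structure. Once that fact is granted, the reduction to $A$ above uses nothing about the signs of the off-diagonal entries of $A$, so no structural assumptions on the sign pattern are required.
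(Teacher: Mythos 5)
The paper does not actually prove this lemma: it is imported wholesale from the cited reference, so there is no in-paper argument to compare against. Your proposal is a correct and essentially self-contained proof, with one external ingredient (the diagonal Lyapunov characterization of nonsingular $M$-matrices, i.e., that $\mathscr{M}(A)$ being a nonsingular $M$-matrix yields a positive diagonal $\Sigma$ with $\Sigma\,\mathscr{M}(A)+\mathscr{M}(A)^{\rm T}\Sigma\succ 0$), which is a standard equivalent condition in the $M$-matrix literature and is fair to cite rather than reprove. The entrywise domination step $x^{\rm T}(\Sigma A+A^{\rm T}\Sigma)x\ge |x|^{\rm T}(\Sigma\,\mathscr{M}(A)+\mathscr{M}(A)^{\rm T}\Sigma)|x|$ is computed correctly and cleanly transports the conclusion from the comparison matrix back to $A$. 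Your most valuable observation is the one about the diagonal sign: with the paper's own definition of an $H$-matrix (only the comparison matrix is constrained), the lemma as literally stated is false --- e.g., $A=-1$ is a nonsingular $H$-matrix but $\sigma A+A\sigma<0$ for every $\sigma>0$ --- so the hypothesis $a_{ii}>0$ that you add is genuinely necessary and not cosmetic. This does not damage the paper, because the only matrices to which Lemma 4 is applied are the follower blocks $\mathscr{L}_F$, whose diagonal entries equal the weighted in-degrees $\mathscr{D}_k>0$ of the followers; but the lemma should really be stated for nonsingular $H$-matrices with positive diagonal entries, and your write-up would benefit from promoting that remark from a parenthetical to an explicit standing assumption. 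Your closing caution is also apt: strict row diagonal dominance of some rescaling of $A$ does not by itself give positive definiteness of the symmetrized matrix, so the detour through the $M$-matrix characterization (or an equivalent two-sided scaling argument) cannot be short-circuited.
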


\begin{lemma}\cite{boyd1994linear}\label{lemma5}
Given matrices ${S_1}\in {{\mathbb{R}}^{n\times n}}$, ${S_2}\in {{\mathbb{R}}^{m\times n}}$, and ${S_3}\in {{\mathbb{R}}^{m\times m}}$, if ${S_1} = S_1^{\rm{T}}$ and ${S_3} = S_3^{\rm{T}} \succ 0$, then ${S_1} - S_2^{\rm{T}}S_3^{ - 1}{S_2} \succ 0$ iff
$\left[ {\begin{array}{*{20}{c}}
{{S_1}}&{S_2^{\rm{T}}}\\
{{S_2}}&{{S_3}}
\end{array}} \right] \succ 0$ or $\left[ {\begin{array}{*{20}{c}}
{{S_3}}&{{S_2}}\\
{S_2^{\rm{T}}}&{{S_1}}
\end{array}} \right] \succ 0$.
\end{lemma}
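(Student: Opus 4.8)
The plan is to reduce the asserted equivalence to the elementary fact that a symmetric block-diagonal matrix is positive definite exactly when each of its diagonal blocks is, by applying one block congruence transformation (block Gaussian elimination). Since the two $2\times 2$ block matrices appearing in the statement differ only by a symmetric permutation of the block coordinates, and a permutation matrix is orthogonal (hence nonsingular), the two have the same inertia; it therefore suffices to analyze $M \triangleq \left[\begin{smallmatrix} S_1 & S_2^{\rm{T}} \\ S_2 & S_3 \end{smallmatrix}\right]$ and then transfer the conclusion to the permuted form.

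First I would exploit the invertibility of $S_3$, guaranteed by $S_3 = S_3^{\rm{T}} \succ 0$, to introduce the nonsingular matrix $T = \left[\begin{smallmatrix} I_n & -S_2^{\rm{T}}S_3^{-1} \\ 0 & I_m \end{smallmatrix}\right]$ and check by direct block multiplication that $T M T^{\rm{T}} = \left[\begin{smallmatrix} S_1 - S_2^{\rm{T}}S_3^{-1}S_2 & 0 \\ 0 & S_3 \end{smallmatrix}\right]$; here the $(1,2)$ block vanishes because $S_2^{\rm{T}} - S_2^{\rm{T}}S_3^{-1}S_3 = 0$, and symmetry is preserved since $S_3^{-1}$ is symmetric. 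Next, invoking the invariance of positive definiteness under congruence by a nonsingular matrix, $M \succ 0$ if and only if $T M T^{\rm{T}} \succ 0$; because the right-hand side is block-diagonal, this is equivalent to having both $S_1 - S_2^{\rm{T}}S_3^{-1}S_2 \succ 0$ and $S_3 \succ 0$ simultaneously, and the latter holds by hypothesis. Hence $M \succ 0 \Leftrightarrow S_1 - S_2^{\rm{T}}S_3^{-1}S_2 \succ 0$, and the claim for the permuted block matrix follows from the inertia remark above.

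I do not expect a genuine obstacle here; the only step demanding a little care is the bookkeeping in the block product defining $T M T^{\rm{T}}$ (in particular the sign of the off-diagonal block and the symmetry of $S_3^{-1}$). As a cross-check, or as an alternative route that sidesteps matrix multiplication, one can argue directly with quadratic forms: for fixed $x \in \mathbb{R}^n$, minimizing $x^{\rm{T}}S_1 x + 2 x^{\rm{T}}S_2^{\rm{T}}y + y^{\rm{T}}S_3 y$ over $y \in \mathbb{R}^m$ is legitimate because $S_3 \succ 0$, and it yields the minimizer $y^{\star} = -S_3^{-1}S_2 x$ together with the minimum value $x^{\rm{T}}(S_1 - S_2^{\rm{T}}S_3^{-1}S_2)x$. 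Positive definiteness of $M$ then forces this value to be strictly positive for every $x \neq 0$, which is exactly $S_1 - S_2^{\rm{T}}S_3^{-1}S_2 \succ 0$; conversely, given that the Schur complement is positive definite, one bounds $\left[\begin{smallmatrix} x \\ y \end{smallmatrix}\right]^{\rm{T}} M \left[\begin{smallmatrix} x \\ y \end{smallmatrix}\right]$ below for any $(x,y) \neq 0$ by splitting into the case $x \neq 0$ (use the minimum value just computed) and the case $x = 0$ (use $y^{\rm{T}}S_3 y > 0$), establishing $M \succ 0$.
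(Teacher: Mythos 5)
Your proof is correct. Note, however, that the paper offers no proof of this statement at all: Lemma~5 is quoted verbatim from Boyd et al.\ (\emph{Linear Matrix Inequalities in System and Control Theory}) and used as a black box, so there is nothing in the paper to compare against. Your block-congruence argument with $T = \left[\begin{smallmatrix} I_n & -S_2^{\rm{T}}S_3^{-1} \\ 0 & I_m \end{smallmatrix}\right]$ is exactly the standard derivation of the Schur complement criterion found in that reference, the computation $TMT^{\rm{T}} = \mathrm{diag}\{S_1 - S_2^{\rm{T}}S_3^{-1}S_2,\, S_3\}$ checks out, and the reduction of the second block form to the first via an orthogonal block permutation correctly handles the ``or'' in the statement (the two block matrices are congruent, hence simultaneously positive definite). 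The quadratic-form minimization you give as a cross-check is likewise sound.
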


\begin{definition}\cite{wang2018prescribed}
Consider the following system
\begin{equation}\label{eq1}
\dot \vartheta\left({t}\right) = g\left(t ,\vartheta\left({t}\right) \right), t \ge 0, \vartheta\left( 0 \right) = \vartheta_0,
\end{equation}
in which $\vartheta\left({t}\right)\in {\mathbb{R}^r}$ and $g:{{\mathbb{R}}_{\ge 0}}\times {\mathbb{R}^r}\to{\mathbb{R}^r}$ is a continuous function satisfying $g\left(t,0\right)=0$. Let $\vartheta \left(t,\vartheta_0 \right)$ be the solution to \eqref{eq1} with the initial value $\vartheta\left( 0 \right) = \vartheta_0$. Then the system \eqref{eq1} is called
\begin{enumerate}
\item globally finite-time stable if it is Lyapunov stable and for all $\vartheta_0\in {\mathbb{R}^r}$, there is a function $T:{\mathbb{R}^r}\to {{\mathbb{R}}_{\ge 0}}$, called the settling time function, so that ${\lim_{t\to T({\vartheta_0})}}\vartheta\left( {t,{\vartheta_0}} \right) = 0$ and $\vartheta\left(t, \vartheta_0\right) = 0, \forall t \geq T(\vartheta_0)$.
\item globally prescribed-time stable if it is globally finite-time stable and the settling time function $T$ is a constant which can be assigned arbitrarily.
\end{enumerate}
\end{definition}

The Lemma below plays an essential role in deriving the main results in Section 3.

\begin{lemma}\label{lemma6}
If there is a continuously differentiable positive definite Lyapunov function $V\left(\vartheta \left({t}\right) \right) :{\mathbb{R}^r} \to {\mathbb{R}_{\ge 0}}$ for system \eqref{eq1} so that
\[\dot V\left(\vartheta\left({t}\right)\right)\leq -{a}V{\left(\vartheta\left({t}\right)\right)} - b\frac{\dot{\varphi}\left( t,T \right)}{\varphi\left(t, T\right)} V{\left(\vartheta \left({t}\right)\right)},\;\forall t\ge 0, \]
with $a> 0$, $b > 0$, $\varphi\left(t, T\right) =
\begin{cases}
\frac{{{T^{\kappa}}}}{{{{\left( {T - t} \right)}^{\kappa}}}}, & t \in \left[ {0,T} \right)\\
{1,} & t \in \left[ {T, + \infty } \right)
\end{cases}$, and $\dot \varphi\left(t, T\right) = \begin{cases}
\frac{{\kappa}}{T}{\varphi\left(t, T\right)^{1 + \frac{1}{{\kappa}}}},& t \in \left[ {0,T} \right)\\
0,& t \in \left[ {T, + \infty } \right)
\end{cases}$, where $\kappa$ is an arbitrary real number fulfilling $\kappa > 2$, $ T > 0$ is a finite positive scalar that can be arbitrarily selected, and the derivative of $\varphi\left(t, T\right)$ at $t = T$ is the right-hand one, then the system \eqref{eq1} is globally prescribed-time stable with the settling time being $T$. In addition, there hold $V\left(\vartheta\left({t}\right)\right) \le {{\varphi\left(t, T\right)}^{-b}}{\mathrm{exp}^{ - at}}V\left(\vartheta_0\right),\forall t \in \left[ {0,T} \right)$ and $V\left(\vartheta\left({t}\right)\right) = 0, \forall t \in \left[ {T, + \infty } \right)$.
\end{lemma}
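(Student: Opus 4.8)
The plan is to run an integrating–factor argument centered on the auxiliary function
$W(t):=\varphi(t,T)^{b}\,e^{at}\,V\big(\vartheta(t)\big)$, which turns the differential inequality into a monotonicity statement. On $[0,T)$ the function $\varphi(\cdot,T)$ is smooth and strictly positive, and $V(\vartheta(\cdot))$ is $C^1$ (since $V$ is $C^1$ and $\vartheta$ solves \eqref{eq1} with continuous right-hand side), so $W$ is $C^1$ on $[0,T)$. Differentiating,
$\dot W(t)=\varphi^{b}e^{at}\big(\dot V+aV+b\tfrac{\dot\varphi}{\varphi}V\big)\le 0$
by the hypothesis, so $W$ is nonincreasing on $[0,T)$. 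Since $\varphi(0,T)=T^{\kappa}/T^{\kappa}=1$, we get $W(0)=V(\vartheta_0)$, hence $W(t)\le W(0)$ rearranges to
$V(\vartheta(t))\le \varphi(t,T)^{-b}e^{-at}V(\vartheta_0)$ for all $t\in[0,T)$, which is the first displayed estimate. (Using the $C^1$ function $W$ rather than dividing by $V$ avoids any trouble should $V$ reach $0$ before $T$.)

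Next I would extract Lyapunov stability and convergence from this bound. Because $\varphi\ge 1$ and $e^{-at}\le 1$ on $[0,T)$, the estimate gives $V(\vartheta(t))\le V(\vartheta_0)$ there; positive definiteness (and radial unboundedness) of $V$ then yields boundedness of the trajectory on $[0,T)$, so the solution is defined on all of $[0,T)$, together with the standard $\varepsilon$–$\delta$ argument for Lyapunov stability. For the terminal behaviour, observe that as $t\to T^{-}$ we have $\varphi(t,T)=\big(T/(T-t)\big)^{\kappa}\to +\infty$ while $e^{-at}\to e^{-aT}<\infty$ and $b>0$; hence $\varphi(t,T)^{-b}e^{-at}\to 0$ and therefore $V(\vartheta(t))\to 0$, which by positive definiteness of $V$ forces $\vartheta(t)\to 0$ as $t\to T$. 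Define $\vartheta(T,\vartheta_0):=0$, consistent with $g(t,0)=0$.

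Finally, on $[T,+\infty)$ we have $\dot\varphi(t,T)=0$, so the hypothesis collapses to $\dot V\le -aV$; with $V(\vartheta(T))=0$ and $V\ge 0$, multiplying by $e^{at}$ and integrating (or applying the comparison lemma) gives $V(\vartheta(t))\equiv 0$, hence $\vartheta(t)\equiv 0$, for all $t\ge T$. This establishes the second displayed identity and shows the settling-time function is the constant $T$, which is independent of $\vartheta_0$ and can be prescribed arbitrarily; together with Lyapunov stability this is precisely global prescribed-time stability with settling time $T$.

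The one place that genuinely needs care is the singularity at $t=T$: one must confirm that the trajectory stays bounded (and thus well-defined) on the whole interval $[0,T)$ before invoking the limit, that the limit $\lim_{t\to T^{-}}\vartheta(t,\vartheta_0)=0$ actually exists so it can serve as the initial state for the post-$T$ phase, and that the stitched solution remains a legitimate solution of \eqref{eq1}. Once the integrating factor $\varphi(t,T)^{b}e^{at}$ is identified, the remaining computations are routine.
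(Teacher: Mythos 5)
Your integrating-factor argument with $W(t)=\varphi(t,T)^{b}\,e^{at}\,V\bigl(\vartheta(t)\bigr)$ is correct and is precisely the comparison/scaled-Lyapunov argument behind Lemma 1 of the cited reference \cite{wang2018prescribed}, which is all the paper itself invokes for this lemma, so you are taking essentially the same route while supplying the details the paper omits. The only caveat is that you quietly add radial unboundedness of $V$ to obtain boundedness of the trajectory on $[0,T)$; the lemma statement asserts only positive definiteness, but some such coercivity is indeed needed (and implicitly assumed) for the global claim, so this is a gap in the statement rather than in your proof.
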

\begin{proof}
The proof could be done by following an analogous procedure to that of Lemma 1 in Reference \cite{wang2018prescribed}.
\end{proof}

\begin{remark}
The above Lemma \ref{lemma6} is a generalization of the Lemma 1 of Reference \cite{wang2018prescribed}, which is  for the case of $b = 2$ only. Moreover, it is worth mentioning that Lemma \ref{lemma6} is valid also for $\kappa > 0$.
\end{remark}

\subsection{Problem Statement}
Consider a CAN consisting of $N \left(N\geq2 \right)$ agents with dynamic given by
\begin{equation}\label{eq2}
{\dot{x}_k}\left({t}\right) = {u_k}\left( t \right) + {d_k}\left( t \right), \;\;k \in {\mathscr{I}_N},
\end{equation}
in which ${u_k}\left( {t} \right)\in {\mathbb{R}}$, ${x_k}\left({t}\right)\in \mathbb{R}$, and ${d_k}\left( t \right)\in {\mathbb{R}}$ represent, respectively, the control input, the state, and the external disturbance of the $k$th agent. Suppose there is a known positive scalar $\delta\le +\infty$ so that $|{d_k}\left({t}\right)|\le \delta,\forall k \in {\mathscr{I}_N}$. Letting ${d_k} = 0, k \in {\mathscr{I}_N}$, then the nominal CAN corresponding to \eqref{eq2} is obtained as
\begin{equation}\label{eq3}
{\dot{x}_k}\left({t}\right) = {{u}_k}\left({t}\right), \;\;k \in {\mathscr{I}_N}.
\end{equation}

The interaction topology among the $N$ agents is assumed to be fixed and modeled via a signed digraph $\mathscr{G}$, with each node corresponding to an agent. Let the Laplacian and adjacency matrices of $\mathscr{G}$ be denoted as $\mathscr{L}$ and $\mathscr{W}$, respectively. Notice that the results to be established for \eqref{eq2} and \eqref{eq3} can be easily extended to arbitrary high dimension by exploiting the Kronecker product. Hereinafter, the time variable $t$ will be omitted whenever no confusions occur.

Following, e.g., References \cite{meng2019extended,zhu2020observer}, we call agent $k$ a leader if its corresponding node $v_k$ lies within some CSC of $\mathscr{G}$, and call it a follower otherwise. Notice that this notion of leader contains the classical notion of leader referring to an isolated agent with no neighbors as a trivial case. Let us denote $\mathcal{L}$ and $\mathcal{F}$ as the sets of leaders and followers, respectively. Obviously, $ \mathcal{L} \cap \mathcal{F} = \emptyset$ and $\mathcal{L} \cup \mathcal{F} = {\mathscr{I}_N}$. Without loosing generality, suppose the agents indexed by $1, \ldots, K$ are leaders and the remaining agents are followers. It is easy to see that $1\leq K \leq N$ and $K = N$ iff $\mathscr{G}$ is strongly connected. Further, $K$ is equal to the number of roots in $\mathscr{G}$ and satisfies $1\leq K < N$ if $\mathscr{G}$ is quasi-strongly connected.

The primary purpose of this article is to deal with the robust PTCC problems of the CAN \eqref{eq2}. To this end, we are first devoted to proposing a distributed control protocol to address the PTCC problems of the disturbance-free nominal CAN described by \eqref{eq3}. We then contribute to developing a prescribed-time sliding mode control protocol to tackle the robust PTCC problems of the CAN \eqref{eq2} based on the control protocol for the nominal CAN \eqref{eq3}.


\begin{definition}(Prescribed-Time Stability)\label{defn2}
We say the CAN \eqref{eq2} attains prescribed-time stability in a preassigned finite time $T \in {{\mathbb{R}}_{ \ge 0}}$ if there is a suitable control protocol
${u_k}, k \in {\mathscr{I}_N}$ so that
${\lim}_{t \to T}{{x_k}\left( t \right)} = 0$ and ${{x_k}\left( t \right)}= 0,\forall t \ge T $
hold for $\forall k\in{\mathscr{I}_N}$.
\end{definition}

\begin{definition}(Prescribed-Time Bipartite Consensus)\label{defn3}
We say the CAN \eqref{eq2} accomplishes prescribed-time bipartite consensus in a preassigned finite time $T \in {{\mathbb{R}}_{ \ge 0}}$ if there exist a suitable control protocol
${u_k}, k \in {\mathscr{I}_N}$ so that
${\lim}_{t \to T} \left| {{x_k}\left( t \right)} \right| = {x^*}$ and $\left| {{x_k}\left( t \right)} \right| = {x^*},\forall t \ge T $
hold for $\forall k\in{\mathscr{I}_N}$, where ${x^*}>0$.
\end{definition}

\begin{definition}\label{defn4}(Prescribed-Time Interval Bipartite Consensus)
The CAN \eqref{eq2} under a quasi-strongly connected signed digraph $\mathscr{G}$ is said to attain prescribed-time interval bipartite consensus within a preassigned finite time $T \in {{\mathbb{R}}_{ \ge 0}}$ if there exist ${x_f}>0$ and a proper control protocol ${u_k}, k \in {\mathscr{I}_N}$ such that the following hold:
\begin{enumerate}
  \item $\mathop {\lim}\limits_{t \to T} \left| {{x_k}\left( t \right)} \right| = {x_f}$ and $\left| {{x_k}\left( t \right)} \right| = {x_f},\;\forall t \ge T,\; k \in \mathcal{L}$;
  \item $\mathop {\lim}\limits_{t \to T}\left| {{x_l}\left( t \right)} \right|\leq {x_f}$ and $\left| {{x_l}\left( t \right)} \right|\leq {x_f},\forall t\geq T, l\in \mathcal{F}$.
\end{enumerate}
\end{definition}

Notice that Definition \ref{defn4} contains Definition \ref{defn3} as a special case. Apparently, Definition \ref{defn4} reduces to prescribed-time bipartite consensus of Definition \ref{defn3} when $\mathop {\lim}_{t \to T}\left| {{x_l}\left( t \right)} \right|= {x_f}$ and $\left| {{x_l}\left( t \right)} \right|= {x_f},\forall t\geq T$ for $\forall l \in \mathcal{F}$.

\begin{definition}\label{defn5}(Prescribed-Time Bipartite Containment)
We say the CAN \eqref{eq2} under a weakly connected signed digraph $\mathscr{G}$ reaches prescribed-time bipartite containment within a preset finite time $T \in {{\mathbb{R}}_{ \ge 0}}$ if there is a suitable control protocol ${u_k}, k \in {\mathscr{I}_N}$ such that:
\begin{enumerate}
  \item the leaders in structurally balanced and unbalanced CSCs, respectively, reach prescribed-time bipartite consensus and stability in the preset finite time $T$;
  \item the followers converge toward the convex hull spanned by the symmetric converged states of all leaders within the preset finite time $T$, i.e., $\mathop {\lim}_{t \to T} \left[ {\mathop {\max }\limits_{k \in \mathcal{L}} \left| {{x_{k}}\left( t \right)} \right| -\left| {{x_{l}}\left( t \right)} \right|} \right]\geq 0$ and ${{x_{l}}\left( t \right)}\in \left[- \mathop {\max}\limits_{k \in \mathcal{L}} \left| {{x_{k}}\left( t \right)} \right|,\mathop {\max }\limits_{k \in \mathcal{L}} \left| {{x_{k}}\left( t \right)} \right|\right],\forall t\geq T$ hold for $\forall l \in \mathcal{F}$.
\end{enumerate}
\end{definition}

\section{Main results}\label{sec3}

\subsection{Prescribed-time coordination control without disturbances}
In this subsection, the PTCC problems of the nominal CAN \eqref{eq3} are addressed. The control protocol for each agent $k$, $k \in {\mathscr{I}_N}$ is designed as
\begin{equation}\label{eq4}
{{u}_k} = \left( {{\rho_1} + {\rho_2}\frac{{\dot \varphi\left(t, T\right)}}{\varphi\left(t, T\right)}} \right)\sum\limits_{l \in {{\mathcal{N}}_k}} {{w_{kl}}\left[ {{x_l} - {\rm{sign}}\left( {{w_{kl}}} \right){x_k}} \right]},
\end{equation}
where ${\rho_1}>0$, ${\rho_2}>0$, ${\varphi\left(t, T\right)}$ and ${\dot \varphi\left(t, T\right)}$ are defined in Lemma \ref{lemma6} with $T$ specified as $T = T_1 > 0$, and ${w_{kl}}$ is the $\left({k,l} \right){\rm{th}}$ element of the adjacency matrix $\mathscr{W}$. It is worth highlighting that the protocol \eqref{eq4} is completely distributed and scalable since it relies upon merely the relative states of neighboring agents, without exploiting any global information.

Note that, in the special case when ${\rho_1} =1$ and ${\rho_2}=0$, the above protocol \eqref{eq4} reduces to the typical Laplacian-type protocol studied in References \cite{altafini2012consensus,meng2016interval,meng2018convergence,meng2019extended}, which was proved to solve the asymptotic stability, bipartite consensus, interval bipartite consensus, and bipartite containment control problems of the nominal CAN \eqref{eq3} under suitable topology conditions. It is noteworthy that the convergence analysis approaches presented in References \cite{altafini2012consensus,meng2016interval,meng2018convergence,meng2019extended} are no longer applicable here.


Let ${e_k} = \sum\nolimits_{l \in {{{\cal N}}_k}} {{w_{kl}}\left[ {{x_l} - {\rm{sign}}\left( {{w_{kl}}} \right){x_k}} \right]}, k \in {\mathscr{I}_N}$. Further, denote $E = {\left[ {{e_1},\ldots,{e_N}} \right]^{\rm{T}}}$ and $X = {\left[ {{x_1},\ldots,{x_N}} \right]^{\rm{T}}}$. Then we have
\begin{equation}\label{eq5}
E = - \mathscr{L}X.
\end{equation}
In view of \eqref{eq3} and \eqref{eq4}, we clearly have
\begin{equation*}
{\dot{x}_k} = \left( {{\rho_1} + {\rho_2}\frac{{\dot \varphi\left(t, T_1\right)}}{\varphi\left(t, T_1\right)}} \right)\sum\limits_{l \in {{\mathcal{N}}_k}} {{w_{kl}}\left[ {{x_l} - {\rm{sign}}\left( {{w_{kl}}} \right){x_k}} \right]},
\end{equation*}
$k \in {\mathscr{I}_N}$, which can be represented compactly as
\begin{equation}\label{eq7}
\dot{X} = -\left( {{\rho_1} + {\rho_2}\frac{{\dot \varphi\left(t, T_1\right)}}{\varphi\left(t, T_1\right) }} \right)\mathscr{L}X.
\end{equation}
This, together with \eqref{eq5}, yields
\begin{equation}\label{eq8}
\dot{E} = -\left( {{\rho_1} + {\rho_2}\frac{{\dot \varphi\left(t, T_1\right)}}{\varphi\left(t, T_1\right) }} \right)\mathscr{L}E.
\end{equation}


First we consider the scenario where $\mathscr{G}$ is strongly connected. The theorem below reveals that with the control protocol \eqref{eq4}, the nominal CAN \eqref{eq3} under arbitrary strongly connected signed digraph $\mathscr{G}$  can be ensured to reach prescribed-time bipartite consensus (resp., stability) in the pre-specified finite time $T_1$, if $\mathscr{G}$ is structurally balanced (resp., unbalanced).

\begin{theorem}\label{thm1}
Consider the nominal CAN \eqref{eq3} under the control protocol \eqref{eq4}, and let $\mathscr{G}$ be strongly connected. Then, the nominal CAN \eqref{eq3} achieves
\begin{enumerate}
  \item prescribed-time bipartite consensus in the pre-specified finite time $T_1$, if $\mathscr{G}$ is structurally balanced.
  \item prescribed-time stability in the pre-specified finite time $T_1$, if $\mathscr{G}$ is structurally unbalanced.
\end{enumerate}
\end{theorem}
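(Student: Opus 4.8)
The plan is to reduce both cases to a single application of Lemma~\ref{lemma6}, by building a quadratic Lyapunov function whose derivative along the closed loop \eqref{eq7} obeys $\dot V\le -aV-b(\dot\varphi/\varphi)V$ with positive, graph-dependent constants $a,b$ supplied by the coercivity estimates of Lemma~\ref{lemma3}. The two structural cases differ only in the weighting matrix used and in whether a nontrivial conserved quantity must be removed first.

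For the structurally balanced case I would first identify the invariant of \eqref{eq7}. Using Lemma~\ref{lemma2} and Lemma~\ref{lemma3}(1), pick the positive vector $p$ with $p^{\rm T}\mathscr{M}(\mathscr{L})=0$, $p^{\rm T}1_N=1$, and the gauge matrix $G\in\mathbb{G}_N$ with $G\mathscr{L}G=\mathscr{M}(\mathscr{L})$. Since $G^2=I$ this yields $(Gp)^{\rm T}\mathscr{L}=0$ and $\mathscr{L}G1_N=0$; hence $(Gp)^{\rm T}X(t)$ is conserved along \eqref{eq7}, equal to $\eta:=(Gp)^{\rm T}X(0)=\sum_{k}g_k p_k x_k(0)$. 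Introduce the disagreement variable $\zeta=X-\eta G1_N$, which satisfies $\dot\zeta=-\bigl(\rho_1+\rho_2\dot\varphi/\varphi\bigr)\mathscr{L}\zeta$ and, crucially, $\zeta^{\rm T}Gp=0$ for all $t$ (because $(Gp)^{\rm T}G1_N=p^{\rm T}1_N=1$). Take $V=\tfrac12\zeta^{\rm T}P\zeta$ with $P=\mathrm{diag}(p)$; using $\zeta^{\rm T}P\mathscr{L}\zeta=\zeta^{\rm T}\bar{\mathscr{L}}\zeta$ and the estimate $\zeta^{\rm T}\bar{\mathscr{L}}\zeta\ge a(\mathscr{L})\zeta^{\rm T}P\zeta$ from Lemma~\ref{lemma3}(1) gives $\dot V\le -2a(\mathscr{L})\rho_1 V-2a(\mathscr{L})\rho_2(\dot\varphi/\varphi)V$. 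Applying Lemma~\ref{lemma6} to the $\zeta$-dynamics with $a=2a(\mathscr{L})\rho_1$, $b=2a(\mathscr{L})\rho_2$, $T=T_1$ shows $\zeta(t)\to 0$ as $t\to T_1$ and $\zeta(t)=0$ for $t\ge T_1$, i.e. $X(t)\to\eta G1_N$, so $|x_k(t)|\to|\eta|$ for every $k$ and $|x_k(t)|=|\eta|$ for $t\ge T_1$; this is prescribed-time bipartite consensus with $x^{*}=|\eta|$ (degenerating to prescribed-time stability in the exceptional case $\eta=0$).

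For the structurally unbalanced case the reduction is simpler, since no nontrivial invariant survives: by Lemma~\ref{lemma3}(2) there is a positive diagonal $\Omega$ with $\tilde{\mathscr{L}}=\Omega\mathscr{L}+\mathscr{L}^{\rm T}\Omega\succ 0$. Take $V=\tfrac12 X^{\rm T}\Omega X$ and use $X^{\rm T}\Omega\mathscr{L}X=\tfrac12 X^{\rm T}\tilde{\mathscr{L}}X\ge \tfrac12\lambda_{\min}\!\bigl(\Omega^{-1/2}\tilde{\mathscr{L}}\Omega^{-1/2}\bigr)X^{\rm T}\Omega X$ to obtain a bound of the same form $\dot V\le -aV-b(\dot\varphi/\varphi)V$ with $a=\rho_1\lambda_{\min}(\Omega^{-1/2}\tilde{\mathscr{L}}\Omega^{-1/2})>0$ and $b=\rho_2\lambda_{\min}(\Omega^{-1/2}\tilde{\mathscr{L}}\Omega^{-1/2})>0$. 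Lemma~\ref{lemma6} then gives $X(t)\to 0$ as $t\to T_1$ and $X(t)=0$ for $t\ge T_1$, which is prescribed-time stability.

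The only genuine delicacy is the singularity of the time-varying gain at $t=T_1$: the coefficient $\dot\varphi/\varphi\sim \kappa/(T_1-t)$ is not integrable near $T_1$, so \eqref{eq7} is a truly singular linear time-varying ODE and one cannot merely quote global existence on $[0,T_1]$. This is handled exactly as inside Lemma~\ref{lemma6}: on $[0,T_1)$ the coefficient is continuous, so the solution exists there, and the Lyapunov estimate $V(t)\le \varphi(t,T_1)^{-b}\mathrm{exp}^{-at}V(0)\to 0$ (since $\varphi\to\infty$) forces the state to remain bounded with limit the equilibrium value ($\zeta\to 0$, resp.\ $X\to 0$), so the solution is extended by that value on $[T_1,\infty)$, where $\dot\varphi=0$ and the right-hand side of \eqref{eq7} vanishes. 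I expect this existence/continuation bookkeeping, together with verifying the orthogonality $\zeta^{\rm T}Gp=0$ that unlocks Lemma~\ref{lemma3}(1), to be the only nonroutine points; everything else is the standard Lyapunov computation, with the classical convergence arguments of \cite{altafini2012consensus,meng2016interval,meng2018convergence,meng2019extended} replaced throughout by the coercivity route above.
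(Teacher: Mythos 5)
Your proposal is correct and follows essentially the same route as the paper: in the structurally balanced case your $\zeta$ is exactly the paper's disagreement variable $\epsilon = X - \left(p^{\rm T}GX\right)G1_N$ (the conserved quantity $p^{\rm T}GX$ is your $\eta$), the Lyapunov function $\zeta^{\rm T}P\zeta$, the orthogonality check $\zeta^{\rm T}Gp=0$, the coercivity bound from Lemma~\ref{lemma3}(1), and the application of Lemma~\ref{lemma6} all match the paper's argument step for step. The only (harmless) deviation is in the unbalanced case, where you take $V=\tfrac12 X^{\rm T}{\it\Omega}X$ directly while the paper uses $\tilde V_1=E^{\rm T}{\it\Omega}E$ with $E=-\mathscr{L}X$ and then invokes nonsingularity of $\mathscr{L}$ to pass from $E\to 0$ to $X\to 0$; your choice skips that last step and is marginally cleaner, and your explicit remark that $\eta=0$ degenerates bipartite consensus to stability is a point the paper itself glosses over.
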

\begin{proof}
1) Because $\mathscr{G}$ is structurally balanced, there is $G = {\rm{diag}}\left\{ {{g_1}, \ldots ,{g_N}} \right\}\in {\mathbb{G}_N}$ such that $G\mathscr{L}G=\mathscr{M}\left(\mathscr{L} \right)$ by Lemma \ref{lemma1}.
Then, utilizing $\mathscr{M}\left(\mathscr{L} \right){1_N} = 0$ and $G = {G^{-1}}$, we have $\mathscr{L}G{1_N} = 0$. Because $\mathscr{G}$ is strongly connected, it follows from Lemma \ref{lemma2} that there is a positive vector $p = {\left[ {{p_1}, \ldots, {p_N}} \right]^{\rm{T}}}\in\mathbb{R}{^N}$ fulfilling ${p^{\rm{T}}}{1_N} = 1$ such that ${p^{\rm{T}}}\mathscr{M}\left(\mathscr{L}\right)= 0$. This, together with $G\mathscr{L}G=\mathscr{M}\left(\mathscr{L} \right)$ and $G = {G^{-1}}$, implies ${p^{\rm{T}}}G\mathscr{L}=0$. Let ${\epsilon_k} = {x_k} - {g_k}\sum\nolimits_{j = 1}^N {{p_j}{g_j}} {x_j}, k \in {\mathscr{I}_N}$, and $\epsilon  = {\left[ {{\epsilon _1}, \ldots ,{\epsilon _N}} \right]^{\rm{T}}}$. It is obvious that $\epsilon  = X - \left( {{p^{\rm{T}}}GX} \right)G{1_N}$. Using \eqref{eq7} and ${p^{\rm{T}}}G\mathscr{L}=0$, we can obtain
${p^{\rm{T}}}G\dot X = 0$, which implies $\dot \epsilon  = \dot X$. Noting $\mathscr{L}G{1_N} = 0$, we have
$\mathscr{L}\epsilon  = \mathscr{L}X - \left( {{p^{\rm{T}}}GX} \right)\mathscr{L}G{1_N} = \mathscr{L} X$.

Choose the Lyapunov function candidate as:
$$\bar{{V_1}} = {\epsilon^{\rm{T}}}P\epsilon,$$
where $P = {\rm{diag}}\{{p_1}, \ldots, {p_N}\}$. Obviously, ${\bar{V_1}}$ is positive definite. In view of \eqref{eq3}, we know that $X$ is continuous w.r.t. $t$ on $\left[ {0, + \infty } \right)$. Since $\epsilon  = X - \left( {{p^{\rm{T}}}GX} \right)G{1_N}$, it follows that $\epsilon$ is also continuous w.r.t. $t$ on $\left[ {0, + \infty } \right)$. Therefore, ${\bar{V_1}}$ is continuous w.r.t. $t$ on $\left[ {0, + \infty } \right)$. Using \eqref{eq7}, together with $\dot \epsilon = \dot X$ and $\mathscr{L}X = \mathscr{L}\epsilon $, we can obtain
\begin{equation}\label{eq9}
\begin{split}
\dot{\bar{V_1}}& = 2{\epsilon ^{\rm{T}}}P\dot \epsilon  = 2{\epsilon ^{\rm{T}}}P\dot X\\
& =  - 2\left( {{\rho_1}  + {\rho_2} \frac{{\dot \varphi\left(t, T_1\right)}}{\varphi\left(t, T_1\right) }} \right){\epsilon ^{\rm{T}}}P\mathscr{L}X\\
& =  - 2{\rho_1} {\epsilon ^{\rm{T}}}P\mathscr{L}\epsilon  - 2{\rho_2} \frac{{\dot \varphi\left(t, T_1\right) }}{\varphi\left(t, T_1\right)}{\epsilon ^{\rm{T}}}P\mathscr{L}\epsilon \\
& =  - 2{\rho_1} {\epsilon ^{\rm{T}}}\bar{\mathscr{L}}\epsilon  - 2{\rho_2} \frac{{\dot \varphi\left(t, T_1\right)}}{\varphi\left(t, T_1\right)}{\epsilon ^{\rm{T}}}\bar{\mathscr{L}}\epsilon,
\end{split}
\end{equation}
where $\bar{\mathscr{L}}$ is as defined in Lemma \ref{lemma3}. Noting that ${p^{\rm{T}}}{1_N} = 1$ and ${G^2} = I_N$, we have ${p^{\rm{T}}}G\epsilon  = {p^{\rm{T}}}G X - {p^{\rm{T}}}GG{1_N}\left( {{p^{\rm{T}}}GX} \right) = 0$,
which, by Lemma \ref{lemma3}, implies that ${\epsilon^{\rm{T}}}\bar{\mathscr{L}}\epsilon \ge a\left(\mathscr{L}\right){\epsilon^{\rm{T}}}P\epsilon$, where $a\left(\mathscr{L}\right)>0$ is defined as in Lemma \ref{lemma3}. It then follows from \eqref{eq9} that
\begin{equation}\label{eq10}
\begin{split}
\dot{\bar{V_1}}\le &- 2{\rho_1} a\left(\mathscr{L}\right){\epsilon^{\rm{T}}}P\epsilon - 2{\rho_2} a\left(\mathscr{L}\right)\frac{{\dot \varphi\left(t, T_1\right) }}{\varphi\left(t, T_1\right) }{\epsilon^{\rm{T}}}P\epsilon \\
&= - 2{\rho_1} a\left(\mathscr{L}\right){{\bar V}_1} - 2{\rho_2} a\left(\mathscr{L}\right)\frac{{\dot \varphi\left(t, T_1\right)}}{\varphi\left(t, T_1\right) }{{\bar V}_1}.
\end{split}
\end{equation}
By using Lemma \ref{lemma6}, one can obtain from \eqref{eq10} that
\begin{equation*}
{\bar V_1}\left( t \right)\left\{ \begin{array}{l}
 \le {\varphi\left(t, T_1\right) ^{ - 2{\rho_2} a\left(\mathscr{L}\right)}}{\exp ^{ - 2{\rho_1} a\left(\mathscr{L}\right)t}}{{\bar V}_1}\left( 0 \right), t \in \left[ {0,T_1} \right)\\
 \equiv 0, t \in \left[ {T_1, + \infty } \right).
\end{array} \right.
\end{equation*}
This implies that ${\left\|\epsilon\right\|}\le \sqrt {\frac{{{\lambda_{{\max}}}\left( P \right)}}{{{\lambda _{{\min}}}\left( P \right)}}} {\varphi\left(t, T_1\right) ^{ - {\rho_2} a\left(\mathscr{L}\right)}}{\exp ^{ - {\rho_1} a\left(\mathscr{L}\right)t}}{\left\| {\epsilon\left( 0 \right)} \right\|}$ on $\left[ {0,T_1} \right)$, and $\epsilon \equiv 0$ on $\left[ {T_1, + \infty } \right)$. Because ${\lim _{t \to {T_1^ - }}}{\varphi\left(t, T_1\right) ^{ - {\rho_2} a\left(\mathscr{L}\right)}} = 0$, we can get
${\lim _{t \to {T_1^ - }}}{\left\| \epsilon \right\|} = 0$, which gives ${\lim_{t \to {T_1^-}}}{\epsilon} = 0$. Thus, it follows that ${\lim _{t \to {T_1}}}{\epsilon} = 0$ and $\epsilon = 0$,$\forall t \geq T_1$. Therefore,  prescribed-time bipartite consensus of the nominal CAN \eqref{eq3} is achieved within the prescribed finite time $T_1$.

2) Choose the Lyapunov function candidate as
$${\tilde{V_1}} = {E^{\rm{T}}}{\it{\Omega}} E,$$
with ${\it{\Omega}}$ given in Lemma \ref{lemma3}. Since $E = - \mathscr{L}X$ and $X$ is continuous w.r.t. $t$ on $\left[ {0, + \infty } \right)$, we know that ${\tilde{V_1}}$ is continuous w.r.t. $t$ on $\left[ {0, + \infty } \right)$. Differentiating $\tilde{V_1}$ along \eqref{eq8} yields
\begin{equation}\label{eq11}
\begin{split}
{{\dot{\tilde V}_1}} &= 2{E^{\rm{T}}}{\it{\Omega}}\dot E\\
& = - 2{\rho_1} {E^{\rm{T}}}{\it{\Omega}}{\mathscr{L}}E - 2{\rho_2} \frac{{\dot \varphi\left(t, T_1\right) }}{\varphi\left(t, T_1\right) }{E^{\rm{T}}}{\it{\Omega}}\mathscr{L}E\\
& = -{\rho_1} {E^{\rm{T}}}\tilde{\mathscr{L}}E - {\rho_2} \frac{{\dot \varphi\left(t, T_1\right) }}{\varphi\left(t, T_1\right)}{E^{\rm{T}}}\tilde{\mathscr{L}}E\\
&\leq -{\rho_1} {{\lambda_{{\min}}}\left(\tilde{\mathscr{L}}\right)} {E^{\rm{T}}}E - {\rho_2} \frac{{\dot \varphi\left(t, T_1\right) }}{\varphi\left(t, T_1\right)}{{\lambda_{{\min}}}\left(\tilde{\mathscr{L}}\right)}{E^{\rm{T}}}E\\
&\le -{\rho_1} \frac{{{\lambda_{{\min}}}\left( {\tilde{\mathscr{L}}} \right)}}{{{\lambda_{{\max}}}\left( {\it{\Omega}} \right)}}{\tilde V_1} - {\rho_2} \frac{{{\lambda_{{\min}}}\left( {\tilde{\mathscr{L}}} \right)}}{{{\lambda_{{\max}}}\left( {\it{\Omega}}\right)}}\frac{{\dot \varphi\left(t, T_1\right) }}{\varphi\left(t, T_1\right)}{\tilde V_1},
\end{split}
\end{equation}
where $\tilde{\mathscr{L}}$ is defined as in Lemma \ref{lemma3}. By using Lemma \ref{lemma6}, we can get
from \eqref{eq11} that
\begin{equation*}
{\tilde V_1}\left( t \right)\left\{ \begin{array}{l}
 \le {\varphi\left(t, T_1\right) ^{-{\rho_2} \frac{{{\lambda_{{\min}}}\left( {\tilde{\mathscr{L}}} \right)}}{{{\lambda_{{\max}}}\left({\it{\Omega}}\right)}}}}{\exp ^{-{\rho_1}{\frac{{{\lambda_{{\min}}}\left( {\tilde{\mathscr{L}}} \right)}}{{{\lambda_{{\max}}}\left({\it{\Omega}}\right)}}}t}}{{\tilde V}_1}\left( 0 \right),{\rm{ }}t \in \left[ {0,T_1} \right)\\
 \equiv 0,  \;\;\;\;\;\;\;\;\;\;\;\;\;\;\;\;\;\;\;\;\;\;\;\;\;\;\;\;\;\;\; t \in \left[ {T_1, + \infty } \right).
\end{array} \right.
\end{equation*}
This implies that $\left\| E \right\|^2 \le \frac{{{\lambda_{{\max}}}\left({\it{\Omega}}\right)}}{{{\lambda_{{\min}}}\left({\it{\Omega}}  \right)}}{\varphi\left(t, T_1\right) ^{ - {\rho_2} \frac{{{\lambda_{{\min}}}\left( {\tilde{\mathscr{L}}} \right)}}{{{\lambda_{{\max}}}\left({\it{\Omega}}\right)}}}}{\exp ^{ - {\rho_1} \frac{{{\lambda_{{\min}}}\left( {\tilde{\mathscr{L}}} \right)}}{{{\lambda_{{\max}}}\left({\it{\Omega}}\right)}}t}}\left\| {E\left( 0 \right)} \right\|^2$ on $\left[ {0,T_1} \right)$, and $E \equiv 0$ on $\left[ {T_1, + \infty } \right)$. Due to ${\lim _{t \to {T_1^ - }}}{\varphi\left(t, T_1\right) ^{ - {\rho_2} \frac{{{\lambda_{{\min}}}\left( {\tilde{\mathscr{L}}} \right)}}{{{\lambda_{{\max}}}\left({\it{\Omega}}\right)}}}} = 0$, we can obtain
${\lim _{t \to {T_1^ - }}}{\left\| E \right\|^2} = 0$, which yields ${\lim_{t \to {T_1^-}}}{E} = 0$. Thus, it follows that
\begin{equation}\label{eq12}
\left\{ \begin{array}{l}
{\lim_{t\to {T_1}}}{E} = 0 \\
{{E}} = 0,\forall t \ge T_1.
\end{array} \right.
\end{equation}
Since ${\mathscr{L}}$ is nonsingular by the first conclusion of Lemma \ref{lemma2}, it follows from \eqref{eq5} that $E = 0$ iff $X = 0$. This, together with \eqref{eq12}, clearly gives that ${\lim_{t\to {T_1}}}{X} = 0 \; {\rm{and}}\; X = 0, \forall t\geq T_1$. The proof is done.
\end{proof}

The Theorem \ref{thm1} above has the following corollaries.

\begin{corollary}\label{corol1}
Consider the nominal CAN \eqref{eq3} under the control protocol \eqref{eq4}, and let $\mathscr{G}$ be undirected and connected. Then, the nominal CAN \eqref{eq3} achieves
\begin{enumerate}
  \item  prescribed-time signed-average consensus in the pre-specified finite time $T_1$, if $\mathscr{G}$ is structurally balanced.
  \item  prescribed-time stability in the pre-specified finite time $T_1$, if $\mathscr{G}$ is structurally unbalanced.
\end{enumerate}
\end{corollary}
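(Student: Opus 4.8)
The proof reduces Corollary~\ref{corol1} to Theorem~\ref{thm1}. First, observe that an undirected connected signed digraph is a (symmetric) strongly connected signed digraph, and it is structurally balanced (resp.\ unbalanced) in exactly the same sense, so Theorem~\ref{thm1} applies directly. Hence statement~2) --- prescribed-time stability under a structurally unbalanced $\mathscr{G}$ --- is nothing but the second conclusion of Theorem~\ref{thm1}, and requires no further argument.

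For statement~1), prescribed-time bipartite consensus within $T_1$ is again immediate from the first conclusion of Theorem~\ref{thm1}; what remains is to identify the common limiting magnitude as the absolute value of the signed average of the initial states, i.e.\ to show $\lim_{t\to T_1} x_k(t) = g_k\,\frac{1}{N}\sum_{j=1}^N g_j x_j(0)$ for every $k\in\mathscr{I}_N$, with $G=\mathrm{diag}\{g_1,\ldots,g_N\}$ the gauge matrix satisfying $G\mathscr{L}G=\mathscr{M}(\mathscr{L})$. The plan is to rerun the proof of Theorem~\ref{thm1}.1) with a specific choice of the positive left eigenvector $p$. Since $\mathscr{G}$ is undirected, $\mathscr{W}$ and hence $\mathscr{L}$ are symmetric, so $\mathscr{M}(\mathscr{L})$ is symmetric; combining this with $\mathscr{M}(\mathscr{L})1_N = 0$ gives $1_N^{\mathrm{T}}\mathscr{M}(\mathscr{L}) = 0$, so $p = \frac{1}{N}1_N$ is a legitimate choice of the vector in Lemma~\ref{lemma2} (it is positive and satisfies $p^{\mathrm{T}}1_N = 1$). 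With this $p$, the quantity $p^{\mathrm{T}}GX(t) = \frac{1}{N}\sum_{j=1}^N g_j x_j(t)$ is invariant along \eqref{eq7} (the proof of Theorem~\ref{thm1} shows $p^{\mathrm{T}}G\dot X = 0$), and the error vector $\epsilon = X - (p^{\mathrm{T}}GX)G1_N$ converges to $0$ within the prescribed time $T_1$. Therefore $x_k(t)\to g_k\big(p^{\mathrm{T}}GX(0)\big) = g_k\,\frac{1}{N}\sum_{j=1}^N g_j x_j(0)$ within $T_1$, which is precisely prescribed-time signed-average consensus; in particular $|x_k(t)|\to \big|\frac{1}{N}\sum_{j=1}^N g_j x_j(0)\big|$, recovering Definition~\ref{defn3} with $x^{*} = \big|\frac{1}{N}\sum_{j=1}^N g_j x_j(0)\big|$.

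The argument has no genuinely hard step; the only real content beyond quoting Theorem~\ref{thm1} is the symmetry observation that forces $p$ to be uniform, which pins down the limit. The one point worth a remark is the degenerate situation $\sum_{j} g_j x_j(0) = 0$: then the signed average vanishes and the conclusion coincides with prescribed-time stability (consistent with the $x^{*}>0$ stipulation in Definition~\ref{defn3} becoming vacuous), so it may be handled as a boundary case or simply excluded from statement~1).
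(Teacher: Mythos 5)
Your proof is correct and follows essentially the same route as the paper's: both parts reduce to Theorem~\ref{thm1}, with part~1) obtained by specializing the positive left eigenvector to $p=\tfrac{1}{N}1_N$ (which the paper simply asserts from Lemma~\ref{lemma2} and you justify via the symmetry of $\mathscr{L}$) and rerunning the argument with $\epsilon = X - \tfrac{1}{N}\bigl(1_N^{\rm T}GX\bigr)G1_N$. Your explicit identification of the limit as the signed average of the initial states, and the remark on the degenerate case $\sum_j g_j x_j(0)=0$, go slightly beyond what the paper writes but are consistent with it.
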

\begin{proof}
(1) Since $\mathscr{G}$ is undirected and connected, one has from Lemma \ref{lemma2} that $p =\frac{1}{N}{1_N}$. This leads to
$\varepsilon  = X - \frac{{1_N^{\rm{T}}GX}}{N}G{1_N}$. The rest proof follows the same line as that of the first conclusion of Theorem \ref{thm1}.

(2) It follows readily from the second conclusion of Theorem \ref{thm1} and the fact that undirected graphs are included as a special case of directed graphs.
\end{proof}

\begin{remark}
Corollary \ref{corol1} significantly extends the results in References \cite{meng2015nonlinear} and \cite{meng2016signed}, where the finite-time and fixed-time signed-average consensus and stability problems are, respectively, addressed for first-order CANs with connected undirected signed graph topology. Furthermore, the prescribed-time signed-average consensus result of Corollary \ref{corol1} includes the classical prescribed-time average consensus result of Theorem 1 in Reference \cite{wang2018prescribed} as a special case, since traditional cooperative networks is a trivial case of structurally balanced CANs.
\end{remark}

\begin{corollary}\label{corol2}
Consider the nominal CAN \eqref{eq3} under the control protocol \eqref{eq4}. Let $\mathscr{G}$ be strongly connected with its edge weights all positive. Then, the nominal CAN \eqref{eq3} reaches prescribed-time consensus in the pre-specified finite time $T_1$.
\end{corollary}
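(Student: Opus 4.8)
The plan is to reduce Corollary~\ref{corol2} to the first conclusion of Theorem~\ref{thm1}, observing that the all-positive-weight assumption turns $\mathscr{G}$ into a conventional (unsigned) strongly connected digraph, which is structurally balanced with the trivial gauge matrix $G = I_N$, so that prescribed-time bipartite consensus degenerates to ordinary prescribed-time consensus.

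First I would check that $\mathscr{M}\left(\mathscr{L}\right) = \mathscr{L}$ under the stated hypothesis: since every $w_{kl}\ge 0$, the off-diagonal entries $-w_{kl}$ of $\mathscr{L}$ are nonpositive, hence equal to $-\left|w_{kl}\right|$, and the diagonal entries $\mathscr{D}_k = \sum_{l\in\mathscr{N}_k} w_{kl}\ge 0$ equal $\left|\mathscr{D}_k\right|$. Consequently $I_N\mathscr{L}I_N = \mathscr{L} = \mathscr{M}\left(\mathscr{L}\right)$, so by Lemma~\ref{lemma1} the digraph $\mathscr{G}$ is structurally balanced and $G = I_N\in\mathbb{G}_N$ is an admissible choice in Lemma~\ref{lemma1} (equivalently, the bipartition is $\left\{\mathscr{V},\emptyset\right\}$). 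Strong connectivity is assumed, so all the hypotheses of Theorem~\ref{thm1}(1) are met.

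Then I would invoke Theorem~\ref{thm1}(1) with this $G = I_N$. Tracing its proof, the auxiliary variable reduces to $\epsilon_k = x_k - \sum_{j=1}^{N} p_j x_j$, i.e. $\epsilon = X - \left(p^{\rm{T}}X\right)1_N$, where $p$ is the positive left eigenvector of $\mathscr{L}$ associated with the zero eigenvalue, normalized by $p^{\rm{T}}1_N = 1$ (Lemma~\ref{lemma2}); the proof also shows that $p^{\rm{T}}X$ is conserved along \eqref{eq7} and that $\epsilon(t)\to 0$ as $t\to T_1$ with $\epsilon(t) = 0$ for $t\ge T_1$. Hence $x_k(t)\to p^{\rm{T}}X(0)$ for every $k\in\mathscr{I}_N$ as $t\to T_1$, and $x_k(t) = p^{\rm{T}}X(0)$ for all $t\ge T_1$ and all $k$; that is, every agent attains the common value $p^{\rm{T}}X(0)$ within the preassigned time $T_1$, which is precisely prescribed-time consensus. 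I do not anticipate any genuine obstacle here: the analytical work is entirely carried by Theorem~\ref{thm1}, and the only point requiring care is verifying that positivity of the weights makes $G = I_N$ legitimate, which is the elementary matrix identity above.
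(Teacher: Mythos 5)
Your proposal is correct and follows essentially the same route as the paper: the paper likewise observes that nonnegative weights give $\mathscr{M}(\mathscr{L})=\mathscr{L}$, so $\mathscr{G}$ is structurally balanced with $G=I_N$ (the paper also notes $G=-I_N$ works), whence $\epsilon = X - \left(p^{\rm{T}}X\right)1_N$ and the rest is carried by the first conclusion of Theorem~\ref{thm1}. Your added verification that $p^{\rm{T}}X$ is conserved and that the common limit is $p^{\rm{T}}X(0)$ is a faithful unpacking of the same argument.
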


\begin{proof}
Since ${w_{kl}}\geq0,\forall k,l\in {\mathscr{I}_N}$, we have that $\mathscr{G}$ is structurally balanced with $G = I$ or $G = - I$. This implies $\epsilon = X - \left( {{p^{\rm{T}}}X} \right){1_N}$. The rest of the proof is the same as that of the first conclusion of Theorem \ref{thm1} and is thus omitted.
\end{proof}

Next, we consider the scenario where $\mathscr{G}$ is quasi-strongly connected. Since no directed path exists from arbitrary follower $l \in \mathcal{F}$ to arbitrary leader $k\in \mathcal{L}$, the adjacency matrix $\mathscr{W}$ can be expressed as
\begin{equation*}
\mathscr{W} = \left[ {\begin{array}{*{20}{c}}
{{\mathscr{W}_{L}}}&0\\
{{\mathscr{W}_{{FL}}}}&{{\mathscr{W}_{F}}}
\end{array}} \right],
\end{equation*}
where ${\mathscr{W}_{L}}\in {\mathbb{R}^{K \times K}}$ (resp., ${\mathscr{W}_{F}}\in {\mathbb{R}^{\left(N-K\right) \times \left(N-K\right)}}$) is the adjacency matrix of the subgraph $\mathscr{G}_{L}$ (resp., $\mathscr{G}_{F}$) of $\mathscr{G}$ by removing all the nodes and edges related to the followers (resp., leaders), and ${\mathscr{W}_{FL}}\in {\mathbb{R}^{\left(N-K\right)\times K}}$. Correspondingly, the Laplacian matrix $\mathscr{L}$ can be decomposed as
\begin{equation}\label{eq13}
\mathscr{L} = \left[ {\begin{array}{*{20}{c}}
{{\mathscr{L}_{L}}}&0\\
{{\mathscr{L}_{{FL}}}}&{{\mathscr{L}_{F}}}
\end{array}} \right],
\end{equation}
with ${\mathscr{L}_{L}}\in {\mathbb{R}^{K \times K}}$, ${\mathscr{L}_{F}}\in {\mathbb{R}^{\left(N-K\right) \times \left(N-K\right)}}$, and ${\mathscr{L}_{FL}}\in {\mathbb{R}^{\left(N-K\right)\times K}}$. Denote ${E_{L}} = \left[ {{e_1}, \ldots ,{e_K}} \right]^{\rm{T}}$ and ${E_{F}} = \left[ {{e_{K + 1}}, \ldots ,{e_N}} \right]^{\rm{T}}$. Clearly, $E = {\left[ {E_{L}^{\rm{T}},E_{F}^{\rm{T}}} \right]^{\rm{T}}}$. Let ${X_{L}} = \left[ {{x_1}, \ldots ,{x_K}} \right]^{\rm{T}}$ and ${X_{F}} = \left[ {{x_{K + 1}}, \ldots,{x_N}} \right]^{\rm{T}}$. Then, using \eqref{eq5} and \eqref{eq13}, we can obtain
\begin{align}
{{E}_{L}}& =  - {\mathscr{L}_{L}}{X_{L}},\label{eq14} \\
{{E}_{F}}& = - {\mathscr{L}_{F}}{X_{F}} - {\mathscr{L}_{{FL}}}{X_{L}}.\label{eq15}
\end{align}
In view of \eqref{eq7} and \eqref{eq13}, we have
\begin{align}
{{\dot X}_{L}} &=  - \left( {{\rho_1}  + {\rho_2} \frac{{\dot \varphi\left(t, T_1\right) }}{\varphi\left(t, T_1\right) }} \right){\mathscr{L}_{L}}{X_{L}},\label{eq16} \\
{{\dot X}_{F}} &= - \left( {{\rho_1}  + {\rho_2} \frac{{\dot \varphi\left(t, T_1\right) }}{\varphi\left(t, T_1\right) }} \right)\left({\mathscr{L}_{F}}{X_{F}} +{\mathscr{L}_{{FL}}}{X_{L}}\right).\nonumber
\end{align}
By considering  \eqref{eq8} and \eqref{eq13}, we have
\begin{align}
{{\dot E}_{L}} &=  - \left( {{\rho_1}  + {\rho_2} \frac{{\dot \varphi\left(t, T_1\right) }}{\varphi\left(t, T_1\right) }} \right){\mathscr{L}_{L}}{E_{L}},\label{eq17} \\
{{\dot E}_{F}} &= - \left( {{\rho_1}  + {\rho_2} \frac{{\dot \varphi\left(t, T_1\right) }}{\varphi\left(t, T_1\right) }} \right)\left({\mathscr{L}_{F}}{E_{F}}+{\mathscr{L}_{{FL}}}{E_{L}}\right).\label{eq18}
\end{align}

Before proceeding, the lemmas below are introduced.

\begin{lemma}\label{lemma7}
Let $\mathscr{G}$ be quasi-strongly connected with its Laplacian matrix $\mathscr{L}$ given as \eqref{eq13}. Then the following hold.
\begin{enumerate}
  \item ${\mathscr{L}_{F}}$ is a nonsingular {\it{H}}-matrix.
  \item There exists some positive diagonal matrix ${\it{\Xi}}_{F}\in {\mathbb{R}^{\left(N-K\right) \times \left(N-K\right)}}$ such that ${{\it{\Xi}}_{F}{\mathscr{L}_{F}} + {\mathscr{L}_{F}^{\rm{T}}}{{\it{\Xi}}}_{F}}\succ0$.
\end{enumerate}
\end{lemma}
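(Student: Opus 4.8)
The plan is to obtain conclusion~2 for free from conclusion~1 together with Lemma~\ref{lemma4}, so the substance is conclusion~1. I would start from the graph-theoretic consequences of quasi-strong connectivity. Since $\mathscr{G}$ is quasi-strongly connected it possesses a \emph{unique} CSC, which by the definition of leaders is exactly the leader set $\mathcal{L}$; since a CSC admits no incoming edge from outside, there is no edge from any follower to any leader, which is precisely why $\mathscr{L}$ has the block lower-triangular form~\eqref{eq13}. Furthermore, every follower $l\in\mathcal{F}$ is reachable from some root (hence from $\mathcal{L}$) along a directed path, and such a path can never re-enter $\mathcal{L}$ once it has left it (again, no edges into the CSC); consequently, after its last leader vertex the path runs entirely inside $\mathscr{G}_{F}$, so there is a follower $j$ possessing a leader in-neighbour together with a directed path from $j$ to $l$ lying in $\mathscr{G}_{F}$. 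In particular every follower has at least one in-neighbour, whence $\mathscr{D}_{l}>0$ for all $l\in\mathcal{F}$.

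Because \eqref{eq13} is block lower triangular, the diagonal block of $\mathscr{L}$ on the follower indices is exactly $\mathscr{L}_{F}$, so $\mathscr{M}(\mathscr{L}_{F})$ coincides with the principal submatrix of $\mathscr{M}(\mathscr{L})$ on those indices; it is therefore a $Z$-matrix, its diagonal entries equal $\mathscr{D}_{l}>0$, and $(\mathscr{M}(\mathscr{L}_{F})\,1_{N-K})_{l}=\sum_{i\in\mathscr{N}_{l}\cap\mathcal{L}}|w_{li}|\ge 0$, the inequality being strict exactly at those followers having a leader in-neighbour. I would then invoke the characterisation of nonsingular $M$-matrices among $Z$-matrices: the digraph associated with $\mathscr{M}(\mathscr{L}_{F})$ (an arc $l\to j$ whenever the $(l,j)$ entry is nonzero) is the reversal of $\mathscr{G}_{F}$, so the reachability fact of the previous paragraph says that from every follower there is a path in this digraph to a row with strict diagonal dominance; hence $\mathscr{M}(\mathscr{L}_{F})$ is weakly chained diagonally dominant and, being a $Z$-matrix, is a nonsingular $M$-matrix. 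By the definition of an $H$-matrix this means $\mathscr{L}_{F}$ is a nonsingular $H$-matrix, establishing conclusion~1.

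Conclusion~2 is then immediate: applying Lemma~\ref{lemma4} to the nonsingular $H$-matrix $\mathscr{L}_{F}$ yields a positive diagonal matrix, which we name ${\it{\Xi}}_{F}$, such that ${\it{\Xi}}_{F}\mathscr{L}_{F}+\mathscr{L}_{F}^{\rm{T}}{\it{\Xi}}_{F}\succ 0$. I expect the only genuinely delicate point to be the middle step of conclusion~1 — choosing the right $M$-matrix criterion and verifying its hypothesis (weak diagonal dominance plus the ``chain to a strictly dominant row'' condition) from quasi-strong connectivity and the absence of follower-to-leader edges; the remainder is bookkeeping on the block decomposition and a direct appeal to Lemmas already in hand.
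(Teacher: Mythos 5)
Your proof is correct, and it takes a genuinely different route from the paper's. The paper disposes of conclusion~1 in two lines by citing Lemmas~5.3 and~5.4 of the reference \cite{meng2018convergence}: the former gives that the eigenvalues of $\mathscr{L}_F$ all have positive real parts (hence $\mathscr{L}_F$ is nonsingular), and the latter gives that $\mathscr{M}(\mathscr{L}_F)$ is an \emph{M}-matrix, after which the \emph{H}-matrix property is read off from the definition; conclusion~2 is then, exactly as in your write-up, an immediate appeal to Lemma~\ref{lemma4}. You instead prove the key fact from scratch: you extract from quasi-strong connectivity the structure of the unique CSC (the leader set), the absence of follower-to-leader edges, and the reachability of every follower from a follower possessing a leader in-neighbour, and you convert this into weak diagonal dominance of $\mathscr{M}(\mathscr{L}_F)$ with a chain (in the reversal of $\mathscr{G}_F$, which is indeed the digraph of $\mathscr{M}(\mathscr{L}_F)$) from every row to a strictly dominant one, whence the WCDD criterion yields that $\mathscr{M}(\mathscr{L}_F)$ is a nonsingular \emph{M}-matrix. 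Your version is self-contained modulo one classical matrix-theory fact and makes the graph-theoretic mechanism explicit; it also delivers directly that the \emph{comparison matrix} is a nonsingular \emph{M}-matrix, which is the hypothesis Lemma~\ref{lemma4} actually needs, whereas the paper's phrasing obtains ``nonsingular'' and ``\emph{H}-matrix'' from two separate citations and relies on the cited lemmas supplying the nonsingular version. The paper's version is shorter but opaque without the external reference. All of your intermediate claims (the row sums $\sum_{i\in\mathscr{N}_l\cap\mathcal{L}}|w_{li}|$, the direction-reversal bookkeeping between $\mathscr{G}_F$ and the digraph of $\mathscr{M}(\mathscr{L}_F)$, and the positivity of the diagonal entries $\mathscr{D}_l$) check out.
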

\begin{proof}
1) Since $\mathscr{G}$ is quasi-strongly connected, the eigenvalues of ${\mathscr{L}_{F}}$ all possess positive real parts by Lemma 5.3 of Reference \cite{meng2018convergence}. Hence, ${\mathscr{L}_{F}}$ is nonsingular. Further, in view of Lemma 5.4  of Reference \cite{meng2018convergence}, we have that ${\mathscr{M}}\left({\mathscr{L}_{F}}\right)$ is an {\it{M}}-matrix. This indicates that ${\mathscr{L}_{F}}$ is an {\it{H}}-matrix. Therefore, ${\mathscr{L}_{F}}$ is a nonsingular {\it{H}}-matrix.

2) It straightforwardly follows from Lemma \ref{lemma4} and the fact that ${\mathscr{L}_{F}}$ is a nonsingular {\it{H}}-matrix.
\end{proof}

\begin{lemma}\cite{meng2018convergence}\label{lemma8}
Let $\mathscr{G}$ be quasi-strongly connected with its Laplacian matrix $\mathscr{L}$ given in \eqref{eq13}. Suppose $\mathscr{G}_{L}$ is structurally balanced, and define $\zeta  = -{\mathscr{L}_{F}^{-1}}{\mathscr{L}_{FL}}{{G}_{L}}{1_{{K}}}$, where ${G_{L}} \in {\mathbb{G}_{K}}$ such that ${{G}_{L}}{\mathscr{L}_{L}}{{G}_{L}}=\mathscr{M}\left({\mathscr{L}_{L}}\right)$, then $\left|\zeta \right|\le 1_{N-K}$. Moreover, if $\mathscr{G}$ is structurally balanced, then $\left|\zeta \right|= 1_{N-K}$ with $\zeta = {\mathbb{G}_{F}}{1_{N-K}}$, where ${\mathbb{G}_{F}}\in {\mathbb{G}_{N-K}}$ such that ${{\rm{diag}}\left\{ {{G_L},{G_F}} \right\}}{\mathscr{L}}{{\rm{diag}}\left\{ {{G_L},{G_F}} \right\}}=\mathscr{M}\left({\mathscr{L}}\right)$.
\end{lemma}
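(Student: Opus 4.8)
The plan is to work entirely with the block lower-triangular structure \eqref{eq13} together with the comparison matrix $\mathscr{M}(\mathscr{L})$, relying on the identity $\mathscr{M}(\mathscr{L}){1_N}=0$ and on the $M$-matrix properties of $\mathscr{M}(\mathscr{L}_F)$ furnished by Lemma~\ref{lemma7}. First I would note that, since no edge enters the leader block, $\mathscr{M}(\mathscr{L})$ inherits the block lower-triangular form, with diagonal blocks $\mathscr{M}(\mathscr{L}_L)$, $\mathscr{M}(\mathscr{L}_F)$ and off-diagonal block $-\left|\mathscr{L}_{FL}\right|$ (entrywise absolute value); here one uses that the diagonal entry of $\mathscr{L}_F$ at a follower $k$ is its full in-degree $\mathscr{D}_k=\sum_{l\in\mathscr{N}_k}\left|w_{kl}\right|$, so that $\mathscr{M}(\mathscr{L}_F)$ really is the comparison matrix of the submatrix $\mathscr{L}_F$. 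Reading off the lower block of $\mathscr{M}(\mathscr{L}){1_N}=0$ then gives the key relation $\mathscr{M}(\mathscr{L}_F){1_{N-K}}=\left|\mathscr{L}_{FL}\right|{1_K}$, and since $\mathscr{L}_F$ is a nonsingular $H$-matrix its comparison matrix $\mathscr{M}(\mathscr{L}_F)$ is a nonsingular $M$-matrix, so $\mathscr{M}(\mathscr{L}_F)^{-1}$ is entrywise nonnegative.

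For the bound $\left|\zeta\right|\le{1_{N-K}}$ I would rewrite the definition as $\mathscr{L}_F\zeta=-\mathscr{L}_{FL}{G_L}{1_K}$ and estimate it row by row. Since $\left|{G_L}{1_K}\right|={1_K}$, the $i$th equation yields $\left|(\mathscr{L}_F)_{ii}\right|\left|\zeta_i\right|-\sum_{j\ne i}\left|(\mathscr{L}_F)_{ij}\right|\left|\zeta_j\right|\le\left(\left|\mathscr{L}_{FL}\right|{1_K}\right)_i$, i.e., in vector form $\mathscr{M}(\mathscr{L}_F)\left|\zeta\right|\le\left|\mathscr{L}_{FL}\right|{1_K}=\mathscr{M}(\mathscr{L}_F){1_{N-K}}$. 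Hence $\mathscr{M}(\mathscr{L}_F)\left({1_{N-K}}-\left|\zeta\right|\right)\ge 0$, and left-multiplying by $\mathscr{M}(\mathscr{L}_F)^{-1}\ge 0$ gives $\left|\zeta\right|\le{1_{N-K}}$. (Alternatively one may invoke the classical estimate $\left|\mathscr{L}_F^{-1}\right|\le\mathscr{M}(\mathscr{L}_F)^{-1}$ valid for every $H$-matrix.)

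For the ``moreover'' part, I would first observe that a quasi-strongly connected digraph possesses a single CSC, so that $\mathscr{G}_L$ is strongly connected; being structurally balanced, its sign matrix $G_L$ is then unique up to an overall sign by Lemma~\ref{lemma2} (the zero eigenvalue of the associated weighted Laplacian being simple), which makes $\left|\zeta\right|$ well defined. When $\mathscr{G}$ itself is structurally balanced, Lemma~\ref{lemma1} supplies $G\in\mathbb{G}_N$ with $G\mathscr{L}G=\mathscr{M}(\mathscr{L})$; writing $G=\mathrm{diag}\{G_L',G_F\}$ blockwise with respect to the leader/follower partition, comparison of the $(1,1)$-blocks forces $G_L'\mathscr{L}_L G_L'=\mathscr{M}(\mathscr{L}_L)$ and hence $G_L'=\pm G_L$, and after replacing $G$ by $-G$ if necessary we may take $G_L'=G_L$. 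Comparing the $(2,2)$- and $(2,1)$-blocks then yields $\mathscr{L}_F=G_F\mathscr{M}(\mathscr{L}_F)G_F$ and $\mathscr{L}_{FL}G_L=-G_F\left|\mathscr{L}_{FL}\right|$; substituting these, together with $\mathscr{M}(\mathscr{L}_F)^{-1}\left|\mathscr{L}_{FL}\right|{1_K}={1_{N-K}}$ from the first paragraph, into $\zeta=-\mathscr{L}_F^{-1}\mathscr{L}_{FL}G_L{1_K}$ collapses it to $\zeta=G_F{1_{N-K}}$, whence $\left|\zeta\right|={1_{N-K}}$ with $\mathbb{G}_F=G_F$.

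I expect the main obstacle to be the comparison-matrix bookkeeping rather than any conceptual difficulty: getting the block form of $\mathscr{M}(\mathscr{L})$ and the identity $\mathscr{M}(\mathscr{L}_F){1_{N-K}}=\left|\mathscr{L}_{FL}\right|{1_K}$ exactly right, keeping careful track of diagonal versus off-diagonal entries of $\mathscr{L}_F$ in the row-wise estimate, and ensuring the sign matrix $G$ produced by structural balance of $\mathscr{G}$ can be aligned with the independently chosen $G_L$. Once these are settled, the whole argument reduces to entrywise monotonicity of nonsingular $M$-matrices.
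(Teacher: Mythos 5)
Your argument is correct, but there is nothing in the paper to measure it against: the paper imports Lemma~\ref{lemma8} from \cite{meng2018convergence} as a citation and supplies no proof of its own, so your derivation stands as an independent, self-contained justification. The route you take is the natural one and all the delicate points are handled properly: the $(2,2)$-block of $\mathscr{M}(\mathscr{L})$ does coincide with $\mathscr{M}(\mathscr{L}_F)$ because the diagonal entries of $\mathscr{L}_F$ are the full in-degrees $\mathscr{D}_k$; the identity $\mathscr{M}(\mathscr{L}_F)1_{N-K}=\left|\mathscr{L}_{FL}\right|1_K$ follows from $\mathscr{M}(\mathscr{L})1_N=0$; the row-wise triangle inequality gives $\mathscr{M}(\mathscr{L}_F)\left|\zeta\right|\le\left|\mathscr{L}_{FL}\right|1_K$ since $\left|G_L 1_K\right|=1_K$; and entrywise nonnegativity of $\mathscr{M}(\mathscr{L}_F)^{-1}$ (from Lemma~\ref{lemma7} via the $M$-matrix property) closes the bound. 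For the ``moreover'' part, your observation that a quasi-strongly connected digraph has a unique CSC, so $\mathscr{G}_L$ is strongly connected and the gauge $G_L$ is unique up to global sign, is exactly what is needed to align the blockwise decomposition of the $G$ furnished by Lemma~\ref{lemma1} with the given $G_L$; the block identities $\mathscr{L}_F=G_F\mathscr{M}(\mathscr{L}_F)G_F$ and $\mathscr{L}_{FL}G_L=-G_F\left|\mathscr{L}_{FL}\right|$ then collapse $\zeta$ to $G_F 1_{N-K}$ as you say. What your approach buys, compared with simply citing \cite{meng2018convergence}, is a short proof that uses only the comparison-matrix machinery and Lemma~\ref{lemma7} already present in the paper, with the inverse-positivity of nonsingular $M$-matrices as the single external fact.
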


Based on Lemmas \ref{lemma7} and \ref{lemma8}, the following theorem can be established, which shows that for the CAN \eqref{eq3} under a quasi-strongly signed graph $\mathscr{G}$, the protocol \eqref{eq4} can ensure the CAN to reach prescribed-time interval bipartite consensus (resp., stability) in the pre-specified finite time $T_1$, if $\mathscr{G}_{L}$ is structurally balanced (resp., unbalanced).

\begin{theorem}\label{thm2}
Consider the nominal CAN \eqref{eq3} under the control protocol \eqref{eq4}, and let $\mathscr{G}$ be quasi-strongly connected. Then,  the nominal CAN \eqref{eq3} reaches
\begin{enumerate}
  \item  prescribed-time interval bipartite consensus in the pre-specified finite time $T_1$, if the subgraph $\mathscr{G}_{L}$ of $\mathscr{G}$ is structurally balanced.
  \item  prescribed-time stability in the pre-specified finite time $T_1$, if the subgraph $\mathscr{G}_{L}$ of $\mathscr{G}$ is structurally unbalanced.
\end{enumerate}
\end{theorem}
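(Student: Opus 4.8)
The plan is to split the state into the leader block $X_L$ and follower block $X_F$, which by \eqref{eq16} obey $\dot X_L=-\mu(t)\mathscr{L}_L X_L$ and $\dot X_F=-\mu(t)(\mathscr{L}_F X_F+\mathscr{L}_{FL}X_L)$ with $\mu(t):=\rho_1+\rho_2\dot\varphi(t,T_1)/\varphi(t,T_1)>0$, and then to run a \emph{single} Lyapunov function on the combined state so that the common blow-up factor $\dot\varphi/\varphi$ drives both blocks to their limits at the same instant $T_1$. Two ingredients are fixed at the outset: since $\mathscr{G}$ is quasi-strongly connected, its leader subgraph $\mathscr{G}_L$ is the unique root CSC and hence strongly connected; and by Lemma \ref{lemma7}, $\mathscr{L}_F$ is a nonsingular $H$-matrix and there is a positive diagonal $\Xi_F$ with $Q_F:=\Xi_F\mathscr{L}_F+\mathscr{L}_F^{\mathrm T}\Xi_F\succ0$.

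Consider first case (2), where $\mathscr{G}_L$ is structurally unbalanced, so by Lemma \ref{lemma2}(1) $\mathscr{L}_L$ is nonsingular and by Lemma \ref{lemma3}(2) there is a positive diagonal $\Omega_L$ with $\tilde{\mathscr{L}}_L:=\Omega_L\mathscr{L}_L+\mathscr{L}_L^{\mathrm T}\Omega_L\succ0$. I would take $V=\sigma X_L^{\mathrm T}\Omega_L X_L+X_F^{\mathrm T}\Xi_F X_F$ with a constant $\sigma>0$ to be chosen; $V$ is positive definite in $(X_L,X_F)$ and, by continuity of $X$ in $t$, continuous in $t$ on $[0,+\infty)$. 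Differentiating along \eqref{eq16} gives $\dot V=-\mu(t)\big[\sigma X_L^{\mathrm T}\tilde{\mathscr{L}}_L X_L+X_F^{\mathrm T}Q_F X_F+2X_F^{\mathrm T}\Xi_F\mathscr{L}_{FL}X_L\big]$. Bounding the cross term with Young's inequality and taking $\sigma$ large enough that $\sigma\lambda_{\min}(\tilde{\mathscr{L}}_L)>\|\Xi_F\mathscr{L}_{FL}\|^{2}/\lambda_{\min}(Q_F)$ (equivalently, applying the Schur complement of Lemma \ref{lemma5} to the bracketed quadratic form) yields $\dot V\le-\mu(t)c_1V=-\rho_1c_1V-\rho_2c_1\frac{\dot\varphi}{\varphi}V$ for some $c_1>0$ independent of $t$. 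Lemma \ref{lemma6} (with $b=\rho_2c_1$, admissible since it is proved there for any $b>0$) then gives $V(t)\le\varphi(t,T_1)^{-\rho_2c_1}\exp^{-\rho_1c_1t}V(0)$ on $[0,T_1)$ and $V\equiv0$ on $[T_1,+\infty)$; since $\varphi^{-\rho_2c_1}\to0$ as $t\to T_1^-$, this forces $X_L,X_F\to0$ at $T_1$ and $X_L\equiv X_F\equiv0$ afterwards, i.e. prescribed-time stability.

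In case (1), $\mathscr{G}_L$ is structurally balanced; fix $G_L\in\mathbb{G}_K$ with $G_L\mathscr{L}_L G_L=\mathscr{M}(\mathscr{L}_L)$ and the positive vector $p_L$ of Lemma \ref{lemma2}(3), so $p_L^{\mathrm T}G_L\mathscr{L}_L=0$ and $\mathscr{L}_L G_L1_K=0$, and put $\zeta:=-\mathscr{L}_F^{-1}\mathscr{L}_{FL}G_L1_K$, which satisfies $|\zeta|\le1_{N-K}$ by Lemma \ref{lemma8}. The scalar $c:=p_L^{\mathrm T}G_L X_L$ is invariant along \eqref{eq16}; define the shifted errors $\epsilon_L:=X_L-c\,G_L1_K$ and $\bar\eta_F:=X_F-c\,\zeta$. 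A short computation, in which $c\mathscr{L}_F\zeta+c\mathscr{L}_{FL}G_L1_K=0$ cancels, gives $\dot\epsilon_L=-\mu(t)\mathscr{L}_L\epsilon_L$, $\dot{\bar\eta}_F=-\mu(t)(\mathscr{L}_F\bar\eta_F+\mathscr{L}_{FL}\epsilon_L)$, and $p_L^{\mathrm T}G_L\epsilon_L=0$. Now take $V=\sigma\,\epsilon_L^{\mathrm T}P_L\epsilon_L+\bar\eta_F^{\mathrm T}\Xi_F\bar\eta_F$ with $P_L=\mathrm{diag}(p_L)$; because $\epsilon_L$ remains in the subspace $\{\xi:\xi^{\mathrm T}G_L p_L=0\}$, Lemma \ref{lemma3}(1) supplies the coercivity $\epsilon_L^{\mathrm T}\bar{\mathscr{L}}_L\epsilon_L\ge a(\mathscr{L}_L)\epsilon_L^{\mathrm T}P_L\epsilon_L$, and repeating the cross-term estimate with $\sigma$ large enough leads again to $\dot V\le-\mu(t)c_1V$ for some $c_1>0$, hence via Lemma \ref{lemma6} to $\epsilon_L,\bar\eta_F\to0$ at $T_1$ with $\epsilon_L\equiv\bar\eta_F\equiv0$ afterwards. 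Translating back, $x_k\to g_k c$ for $k\in\mathcal{L}$ and $x_l\to c\zeta_l$ for $l\in\mathcal{F}$ (with equality for all $t\ge T_1$), so $|x_k|\to x_f:=|c|$ on the leaders while $|x_l|\to|c|\,|\zeta_l|\le x_f$ on the followers by $|\zeta|\le1_{N-K}$ — precisely prescribed-time interval bipartite consensus.

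The main obstacle is the lower-triangular cascade in \eqref{eq16}: if one first drove the leader block to its limit and only then analysed the follower block, the follower error would satisfy $\dot{\bar\eta}_F=-\rho_1\mathscr{L}_F\bar\eta_F$ for $t\ge T_1$ and merely decay exponentially, never reaching zero exactly at $T_1$. The remedy is to refuse to decouple — use one Lyapunov function on the whole $(X_L,X_F)$ (resp. $(\epsilon_L,\bar\eta_F)$) with a sufficiently heavy weight $\sigma$ on the leader part, so that the unbounded gain $\mu(t)$ contracts both blocks simultaneously; the Schur-complement/Young bookkeeping is exactly what makes the large-$\sigma$ choice go through. A secondary, purely technical point, present only in case (1), is the singularity of $\mathscr{L}_L$; it is handled by the conserved quantity $c$ and the shift to $\epsilon_L,\bar\eta_F$, together with the subspace coercivity of Lemma \ref{lemma3}(1).
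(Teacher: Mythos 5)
Your proposal is correct and rests on the same three pillars as the paper's proof --- the coercivity from Lemma \ref{lemma3} on the subspace orthogonal to $G_Lp_L$, the diagonal stabilizability of $\mathscr{L}_F$ from Lemma \ref{lemma7}, and the prescribed-time comparison Lemma \ref{lemma6} --- but it runs in different coordinates. The paper performs the entire Lyapunov argument on the error vector $E=-\mathscr{L}X$, for which the required orthogonality $\nu_L^{\rm{T}}G_LE_L=0$ holds identically (since $\nu_L^{\rm{T}}G_L\mathscr{L}_L=0$ and $E_L=-\mathscr{L}_LX_L$); it proves $E\to 0$ at $T_1$ and only then back-solves for $X$ via the null space of $\mathscr{L}_L$, the invertibility of $\mathscr{L}_F$, and Lemma \ref{lemma8}. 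You instead stay in state coordinates, extract the conserved quantity $c=p_L^{\rm{T}}G_LX_L$, and shift to $(\epsilon_L,\bar\eta_F)$ so that the limit configuration $(c\,G_L1_K,\,c\,\zeta)$ is identified up front; your block weighting ($\sigma$ large on the leader block) is the reciprocal of the paper's ($\varrho$ small on the follower block), and both are certified by the same Schur-complement condition before Lemma \ref{lemma6} is invoked. Your route buys a cleaner endgame --- convergence of the single Lyapunov function directly delivers the limits and the interval bound via $|\zeta|\le 1_{N-K}$ --- at the cost of having to verify that $c$ is invariant and that the shifted cascade closes, which you do correctly.

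The one point you should make explicit is the degenerate case $K=1$ in part (1), which the paper handles as a separate case. There $\mathscr{L}_L=0$, the constant $a(\mathscr{L}_L)$ of Lemma \ref{lemma3} is a minimum over an empty set, and the leader coercivity step is vacuous; your construction does degenerate gracefully ($c=x_1(0)$, $\epsilon_L\equiv 0$, and $V$ collapses to $\bar\eta_F^{\rm{T}}\Xi_F\bar\eta_F$ with $\dot{\bar\eta}_F=-\mu(t)\mathscr{L}_F\bar\eta_F$), but as written your prescription ``choose $\sigma$ large enough'' implicitly presupposes $K\ge 2$, so one sentence disposing of $K=1$ is needed. (In part (2) structural unbalance forces $K\ge2$, so no issue arises there.) A shared cosmetic matter, not counted against you: both your argument and the paper's conclude only $|x_k|\to|c|$ with $c\in\mathbb{R}$ possibly zero, whereas Definitions \ref{defn3} and \ref{defn4} ask for a strictly positive limit.
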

\begin{proof}
1) The proof is divided into two cases. We first prove the general case where $\mathscr{G}_{L}$ has more than one node, namely, $K \geq2$. Since $\mathscr{G}_{L}$ is structurally balanced, we have ${{G}_{L}}{\mathscr{L}_{L}}{{G}_{L}}=\mathscr{M}\left({\mathscr{L}_{L}}\right)$, where ${G_{L}}$ is as stated previously. Then, using $\mathscr{M}\left({\mathscr{L}_{L}}\right)1_{K}= 0$ and $G_{L} = {G_{L}^{-1}}$, we can get ${\mathscr{L}_{L}}{G_{L}}{1_K} = 0$. Noting that $\mathscr{G}_{L}$ is strongly connected, we know from Lemma \ref{lemma2} that there is a positive vector ${\nu_{L}}= {\left[ {{\nu_1}, \ldots ,{\nu_K}} \right]^{\rm{T}}} \in {\mathbb{R}^K}$ such that ${{\nu_{L}^{\rm{T}}}}{1_K} = 1$ and ${{\nu_{L}^{\rm{T}}}}{\mathscr{M}}\left( {{\mathscr{L}_{L}}} \right) = {0}$. Thus, we have ${{\nu_{L}^{\rm{T}}}}{{G}_{L}}{\mathscr{L}_{L}}{{G}_{L}} = {0}$. Since $ {{G}_{L}} = {{G}_{L}^{-1}}$, we can obtain ${{\nu_{L}^{\rm{T}}}}{{G}_{L}}{\mathscr{L}_{L}} = {0}$. This, together with \eqref{eq14}, implies ${\nu_{L}^{\rm{T}}}{G_{L}}{E}_{L}= 0$. Denote ${\bar{{\mathscr{L}_{L}}}} = {{\left( {{\it{\Xi}}_{L}{\mathscr{L}_{L}} + \mathscr{L}_{L}^{\rm{T}}{{\it{\Xi}}_{L}}} \right)} \mathord{\left/
 {\vphantom {{\left( {{\it{\Xi}}_L^{\rm{T}}{\mathscr{L}_L} + \mathscr{L}_L^{\rm{T}}{{\it{\Xi}}_{L}}} \right)} 2}} \right.\kern-\nulldelimiterspace} 2}$, where ${{\it{\Xi}}_{L}}= {\rm{diag}}\{ {\nu_1}, \ldots ,{\nu_K}\}$.
By Lemma \ref{lemma3}, we then have ${{E}_{L}^{\rm{T}}}{{\bar{\mathscr{L}}}_{L}}{{E}_{L}}\ge a\left({\mathscr{L}}_{L}\right){{E}_{L}^{\rm{T}}}{{\it{\Xi}}_{L}}{{E}_{L}}$,
where $a\left(\mathscr{L}_{L}\right)= \mathop {\min_{{{E}_{L}^{\rm{T}}}{G}_{L}{\nu_{L}} = 0,{{E}_{L}} \ne 0}} \frac{{{{E}_{L}^{\rm{T}}}{\bar{\mathscr{L}_{L}}}{{E}_{L}}}}{{{{E}_{L}^{\rm{T}}}{{\it{\Xi}}_{L}}{{E}_{L}}}} > 0$.
Let ${\it{\Xi}} = \rm{diag}\left\{{{{\it{\Xi}}_L},{\it{\varrho}} {{\it{\Xi}}_F}}\right\}$, where ${\it{\Xi}}_{F}$ is defined as in Lemma \ref{lemma7} and $0 < \varrho  < \frac{{2a\left( {{{\mathscr{L}}_L}} \right){\lambda_{\min}}\left( {{{\it{\Xi}}_L}} \right)}}{{{\lambda_{{\max}}}\left( {{\mathscr{L}}_{FL}^{\rm{T}}{{\it{\Xi}}_F}\left({{{\it{\Xi}}_{F}{\mathscr{L}_{F}} + {\mathscr{L}_{F}^{\rm{T}}}{\it{\Xi}}_{F}}}\right)^{-1}{{\it{\Xi}}_F}{{\mathscr{L}}_{FL}}}\right)}}$. Define ${\it{\Phi}}  = \left[ {\begin{array}{*{20}{c}}
{2a\left( {{{\mathscr{L}}_L}} \right){{\it{\Xi}}_{L}}}&{\vartheta{\mathscr{L}}_{FL}^{\rm{T}}{{\it{\Xi}}_F}}\\
{\vartheta{{\it{\Xi}}_F}{{\mathscr{L}}_{FL}}}&{\varrho\left({{{\it{\Xi}}_{F}{\mathscr{L}_{F}} + {\mathscr{L}_{F}^{\rm{T}}}{\it{\Xi}}_{F}}}\right)}
\end{array}} \right]$. Due to $\varrho  < \frac{{2a\left( {{{\mathscr{L}}_L}} \right){\lambda_{{\min}}}\left( {{{\it{\Xi}}_L}} \right)}}{{{\lambda_{{\max}}}\left( {{\mathscr{L}}_{FL}^{\rm{T}}{\it{\Xi}_F}\left({{{\it{\Xi}}_{F}{\mathscr{L}_{F}} + {\mathscr{L}_{F}^{\rm{T}}}{\it{\Xi}}_{F}}}\right)^{-1}{\it{\Xi}_F}{{\mathscr{L}}_{FL}}}\right)}}$, we have $2a\left( {{{\mathscr{L}}_L}} \right){\it{\Xi}_L} - \varrho {{\mathscr{L}}_{FL}^{\rm{T}}{\it{\Xi}_F}\left({{{\it{\Xi}}_{F}{\mathscr{L}_{F}} + {\mathscr{L}_{F}^{\rm{T}}}{\it{\Xi}}_{F}}}\right)^{-1}{\it{\Xi}_F}{{\mathscr{L}}_{FL}}} \succ 0$, which, by Lemma \ref{lemma5}, implies ${\it{\Phi}}\succ 0$.

Consider the Lyapunov function candidate
$${\bar{V_2}} = {{E}^{\rm{T}}}{\it{\Xi}}{E}.$$
Using \eqref{eq17} and \eqref{eq18}, we can derive
\begin{equation*}
\begin{split}
{\dot{\bar V}_2}=& 2{{E}^{\rm{T}}}{\it{\Xi}}\dot{E} = 2 E_L^{\rm{T}} {{\it{\Xi}}_L} {{\dot{E}}_L} + 2\varrho E_F^{\rm{T}}{{\it{\Xi}}_F}{{\dot E}_F}\\
= &- 2\left( {{\rho_1}  + {\rho_2} \frac{{\dot \varphi\left(t, T_1\right) }}{\varphi\left(t, T_1\right) }} \right) E_L^{\rm{T}}{{\bar{\mathscr{L}}}_L}{{ E}_L}\\
&-2\left( {{\rho_1}  + {\rho_2} \frac{{\dot \varphi\left(t, T_1\right) }}{\varphi\left(t, T_1\right) }} \right)\varrho E_F^{\rm{T}}{{\it{\Xi}}_F}\left( {{{\mathscr{L}}_F}{E_F} + {{\mathscr{L}}_{FL}}{E_L}} \right)\\
\le & - 2\left( {{\rho_1}  + {\rho_2} \frac{{\dot \varphi\left(t, T_1\right) }}{\varphi\left(t, T_1\right) }} \right)a\left( {{{\mathscr{L}}_L}} \right) E_L^{\rm{T}}{{\it{\Xi}}_L}{{E}_L}\\
& -\left( {{\rho_1}  + {\rho_2} \frac{{\dot \varphi\left(t, T_1\right) }}{\varphi\left(t, T_1\right) }} \right)\varrho E_F^{\rm{T}}\left( {{{\it{\Xi}}_F}{{\mathscr{L}}_F} + {\mathscr{L}}_F^{\rm{T}}{\it{\Xi}_F}} \right){E_F}\\
&- \left( {{\rho_1}  + {\rho_2} \frac{{\dot \varphi\left(t, T_1\right) }}{\varphi\left(t, T_1\right) }} \right)\varrho\left( {E_F^{\rm{T}}{{\it{\Xi}}_F}{{\mathscr{L}}_{FL}}{E_L} + E_L^{\rm{T}}\mathscr{L}_{FL}^{\rm{T}}{{\it{\Xi}}_F}{E_F}} \right)\\
= & - \left( {{\rho_1}  + {\rho_2} \frac{{\dot \varphi\left(t, T_1\right) }}{\varphi\left(t, T_1\right) }} \right){{ E}^{\rm{T}}}{\it{\Phi}}{E}\\
\le & - {\lambda_{{\min}}}\left({\it{\Phi}}\right)\left( {{\rho_1}  + {\rho_2} \frac{{\dot \varphi\left(t, T_1\right) }}{\varphi\left(t, T_1\right) }} \right){{E}^{\rm{T}}}{E}\\
\le & - \frac{{{\lambda_{{\min}}}\left({\it{\Phi}}\right)}}{{{\lambda_{{\max}}}\left({\it{\Xi}}\right)}}\left( {{\rho_1}  + {\rho_2} \frac{{\dot\varphi\left(t, T_1\right)}}{\varphi\left(t, T_1\right)}} \right){{\bar V}_2}\\
=&  - {\rho_1} \frac{{{\lambda_{{\min}}}\left({\it{\Phi}}\right)}}{{{\lambda_{{\max}}}\left({\it{\Xi}} \right)}}{{\bar V}_2} - {\rho_2} \frac{{{\lambda_{{\min}}}\left({\it{\Phi}}\right)}}{{{\lambda_{{\max}}}\left({\it{\Xi}}\right)}}\frac{{\dot\varphi\left(t, T_1\right)}}{\varphi\left(t, T_1\right)}{{\bar V}_2}.
\end{split}
\end{equation*}
 It then follows from Lemma \ref{lemma6} that
\begin{equation*}
{\bar V_2}\left( t \right)\left\{ \begin{array}{l}
 \le {\varphi\left(t, T_1\right) ^{ - {\rho_2} \frac{{{\lambda_{{\min}}}\left({\it{\Phi}}\right)}}{{{\lambda_{{\max}}}\left({\it{\Xi}}\right)}}}}{\exp ^{ - {\rho_1} \frac{{{\lambda_{{\min}}}\left({\it{\Phi}}\right)}}{{{\lambda_{{\max}}}\left({\it{\Xi}}\right)}}t}}{{\bar V}_2}\left( 0 \right),{\rm{ }}t \in \left[ {0,T_1} \right)\\
 \equiv 0,  \;\;\;\;\;\;\;\;\;\;\;\;\;\;\;\;\;\;\;\;\;\;\;\;\;\;\;\;\;\;\; t \in \left[ {T_1, + \infty } \right).
\end{array} \right.
\end{equation*}
 This implies that
$\left\|{E}\right\|^2 \le \frac{{{\lambda_{{\max}}}\left({\it{\Xi}}\right)}}{{{\lambda_{{\min}}}\left({\it{\Xi}}  \right)}}{\varphi\left(t, T_1\right) ^{-{\rho_2} \frac{{{\lambda_{{\min}}}\left({\it{\Phi}}\right)}}{{{\lambda_{{\max}}}\left({\it{\Xi}}\right)}}}}{\exp ^{-{\rho_1} \frac{{{\lambda_{{\min}}}\left({\it{\Phi}}\right)}}{{{\lambda_{{\max}}}\left({\it{\Xi}}\right)}}t}}\left\| {{E}\left( 0 \right)} \right\|^2$ on $\left[{0,T_1} \right)$, and ${E} \equiv 0$ on $\left[ {T_1, + \infty }\right)$. Noting that ${\lim_{t \to {T_1^-}}}{\varphi\left(t, T_1\right)^{-{\rho_2}\frac{{{\lambda_{{\min}}}\left({\it{\Phi}}\right)}}{{{\lambda_{{\max}}}\left( {\it{\Xi}}\right)}}}} = 0$, we can derive
${\lim_{t\to {T_1^-}}}{\left\|{E} \right\|^2} = 0$, which yields ${\lim_{t\to {T_1^-}}}{{E}} = 0$. Thus, \eqref{eq12} holds. This implies that
\begin{equation}\label{eq19}
\left\{ \begin{array}{l}
{\lim_{t \to T_1}}{{E}_L} = 0\\
{{E}_L} = 0,\forall t \ge T_1,
\end{array} \right.
\end{equation}
and
\begin{equation}\label{eq20}
\left\{ \begin{array}{l}
{\lim _{t \to T_1}}{E_F} = 0\\
{E_F} = 0,\forall t \ge T_1.
\end{array} \right.
\end{equation}

Since ${\mathscr{G}}_{L}$ is structurally balanced, according to Lemma \ref{lemma2}, we have that the null space of
${{\mathscr{L}}_L}$ is span by ${G_L}{1_K}$ . Consequently, we can derive from \eqref{eq14} and \eqref{eq19} that
\begin{equation}\label{eq21}
\left\{ \begin{array}{l}
{\lim_{t \to {T_1}}}{X_L} = {G_L}{1_K}c\\
{X_L} = {G_L}{1_K}c, \forall t \ge T_1,
\end{array} \right.
\end{equation}
 with $c \in \mathbb{R}$. Therefore, for any $k\in \mathcal{L}$, there hold ${\lim_{t \to {T_1}}} \left| {{x_k}} \right| = \left| c \right|$ and $\left| {{x_k}} \right| = \left| c \right|,\forall t \ge T_1$.
Furthermore, since ${\mathscr{L}_{F}}$ is nonsingular by Lemma \ref{lemma7}, it follows from \eqref{eq15} and \eqref{eq20} that
\begin{equation}\label{eq22}
\left\{ \begin{array}{l}
{\lim _{t \to T_1}}{X_F} = -{\mathscr{L}_{F}^{-1}}{\mathscr{L}_{FL}}{\lim _{t \to T_1}}{X_L}\\
{X_F} =  -{\mathscr{L}_{F}^{-1}}{\mathscr{L}_{FL}}{X_L},\forall t \ge T_1,
\end{array} \right.
\end{equation}
which, together with \eqref{eq21}, yields
\begin{equation}\label{eq23}
\left\{ \begin{array}{l}
{\lim _{t \to T_1}}{X_F} = -{\mathscr{L}_{F}^{-1}}{\mathscr{L}_{FL}}{G_L}{1_K}c\\
{X_F} =  -{\mathscr{L}_{F}^{-1}}{\mathscr{L}_{FL}}{G_L}{1_K}c,\forall t \ge T_1.
\end{array} \right.
\end{equation}
This, together with Lemma \ref{lemma8}, implies that ${\lim_{t \to {T_1}}}\left| {{x_l}} \right| = \left| {{{\lim }_{t \to {T_1}}}{x_l}} \right| \le \left| c \right|$ and $\left| {{x_l}} \right| \le \left| c \right|,\forall t \ge T_1$ for $\forall l\in \mathcal{F}$.

Now we consider the trivial case where $\mathscr{G}_{L}$ has exactly one node, namely, $K=1$. In this case, we have ${\mathscr{L}_{L}}= 0$. Then it follows from \eqref{eq16} that ${{\dot X}_{L}}= 0$, which implies ${X_{L}}\equiv X_{L}\left(0\right), \forall t {\ge 0}$. Noting ${\mathscr{L}_{L}}= 0$ and using \eqref{eq14}, we have ${E_{L}}= 0$. This, together with \eqref{eq18}, yields
\begin{equation}\label{eq24}
{{\dot E}_{F}} = - \left( {{\rho_1}  + {\rho_2} \frac{{\dot \varphi\left(t, T_1\right) }}{\varphi\left(t, T_1\right) }} \right){\mathscr{L}_{F}}{E_{F}}.
\end{equation}
Consider the following Lyapunov function:
$${\bar{V_2}^\ast} = {{E}^{\rm{T}}_{F}}{{\it{\Xi}}_{F}}{{E}_{F}},$$
where ${{\it{\Xi}}_{F}}$ is given in Lemma \ref{lemma7}. Using \eqref{eq24}, we can obtain
\begin{equation*}
\begin{split}
{\dot{\bar{V_2}}}^* &= 2E_F^{\rm{T}}{\it{\Xi}_F}{{\dot E}_F}\\
& =  - 2\left( {{\rho_1}  + {\rho_2} \frac{{\dot \varphi\left(t, T_1\right) }}{\varphi\left(t, T_1\right) }} \right)E_F^{\rm{T}}{\it{\Xi}_F}{\mathscr{L}_F}{E_F}\\
& =  - \left( {{\rho_1}  + {\rho_2} \frac{{\dot \varphi\left(t, T_1\right) }}{\varphi\left(t, T_1\right) }} \right)E_F^{\rm{T}}\left( {{\it{\Xi}_F}{\mathscr{L}_F} + \mathscr{L}_F^{\rm{T}}{\it{\Xi}_F}} \right){E_F}\\
& \le  - {\lambda_{{\min}}}\left( \Upsilon  \right)\left( {{\rho_1}  + {\rho_2} \frac{{\dot \varphi\left(t, T_1\right) }}{\varphi\left(t, T_1\right) }} \right)E_F^{\rm{T}}{E_F}\\
& \le  -{\rho_1} \frac{{{\lambda_{{\min}}}\left(\Upsilon\right)}}{{{\lambda_{\max}}\left( {{\it{\Xi}_F}}\right)}} \bar V_2^ * - {\rho_2}\frac{{{\lambda_{\min}}\left(\Upsilon \right)}}{{{\lambda_{\max}}\left( {{\it{\Xi}_F}} \right)}}\frac{{\dot\varphi\left(t, T_1\right)}}{\varphi\left(t, T_1\right)}\bar V_2^*,
\end{split}
\end{equation*}
where $\Upsilon = {\it{\Xi}_F}{\mathscr{L}_F} + \mathscr{L}_F^{\rm{T}}{\it{\Xi}_F}$. This, together with Lemma \ref{lemma6}, implies that ${\bar V_2}^* \left( t \right)\le {\varphi\left(t, T_1\right) ^{ - {\rho_2} \frac{{{\lambda_{{\min}}}\left({\Upsilon}\right)}}{{{\lambda_{{\max}}}\left({\it{\Xi}_F}\right)}}}}{\exp ^{-{\rho_1} \frac{{{\lambda_{{\min}}}\left({\Upsilon}\right)}}{{{\lambda_{{\max}}}\left({\it{\Xi}_F}\right)}}t}}{{\bar V}_2}^*\left( 0 \right)$ for all $ t \in \left[ {0,T_1} \right)$ and ${\bar V_2}^* \left( t \right)\equiv 0$, for all $t \in \left[ {T_1, + \infty } \right)$.
Then it follows that
$\left\|{{{E}_{F}}}\right\|^2 \le \frac{{{\lambda_{{\max}}}\left( {\it{\Xi_F}} \right)}}{{{\lambda_{{\min}}}\left( {\it{\Xi_F}}  \right)}}{\varphi\left(t, T_1\right) ^{-{\rho_2} \frac{{{\lambda_{{\min}}}\left( \Upsilon  \right)}}{{{\lambda_{{\max} }}\left( {{\it{\Xi}_F}} \right)}}}}{\exp ^{-{\rho_1} \frac{{{\lambda_{{\min}}}\left( \Upsilon  \right)}}{{{\lambda_{{\max} }}\left( {{\it{\Xi}_F}} \right)}}t}}\left\| {{{{E}_{F}}}\left( 0 \right)} \right\|^2$ on $\left[{0,T_1} \right)$ and ${E_F} \equiv 0$ on $\left[ {T_1, + \infty }\right)$. Due to ${\lim_{t \to {T_1^-}}}{\varphi\left(t, T_1\right)^{-{\rho_2}\frac{{{\lambda _{{\min}}}\left( \Upsilon  \right)}}{{{\lambda_{{\max} }}\left( {{\it{\Xi}_F}} \right)}}}} = 0$, we have
${\lim_{t\to {T_1^-}}}{\left\|{{{E}_{F}}} \right\|^2} = 0$, which gives ${\lim_{t\to {T_1^-}}}{{{{E}_{F}}}} = 0$. Thus, \eqref{eq20} follows. Further, noting that ${\mathscr{L}_{F}}$ is nonsingular and  ${X_{L}}\equiv X_{L}\left(0\right)$, we have from \eqref{eq15} and \eqref{eq20} that
\begin{equation}\label{eq25}
\left\{ \begin{array}{l}
{\lim _{t \to T_1}}{X_F} =  - \mathscr{L}_F^{ - 1}{\mathscr{L}_{FL}}{X_L}\left( 0 \right),\\
{X_F} =  - \mathscr{L}_F^{ - 1}{\mathscr{L}_{FL}}{X_L}\left( 0 \right),\forall t \geq T_1,
\end{array} \right.
\end{equation}
which, by Lemma \ref{lemma8}, implies that ${\lim_{t \to {T_1}}}\left| {{x_l}} \right| = \left| {{{\lim }_{t \to {T}}}{x_l}} \right| \le \left|{X_L}\left( 0 \right)\right|$ and $\left| {{x_l}} \right| \le \left|{X_L}\left( 0 \right) \right|,\forall t \ge T_1$ for all $l\in \mathcal{F}$.

Based on the above proof, we can see that the two conditions in Definition \ref{defn3} hold for both cases. Therefore, the conclusion holds.

2) Since $\mathscr{G}_{L}$ is structurally unbalanced, we have $K\geq 2$. Then it follows from Lemma \ref{lemma3} that there is a positive diagonal matrix $\tilde{{\it{\Xi}}}_{L}$ such that $\tilde{{\mathscr{L}}_{L}} \triangleq {\tilde{{\it{\Xi}}}_{L}{\mathscr{L}_{L}} + {\mathscr{L}_{L}^{\rm{T}}}\tilde{{\it{\Xi}}}_{L}}\succ0$. Let $\tilde{\it{\Xi}} = {\rm{diag}}\left\{{{\tilde{{\it{\Xi}}}}_{L},\tilde{\it{\varrho}} {\it{\Xi}_F}}\right\}$, where ${\it{\Xi}}_{F}$ is as stated in Lemma \ref{lemma7} and $0 < \tilde \varrho  < \frac{{{\lambda_{{\min}}}\left( {{{\tilde {\mathscr{L}}}_L}} \right)}}{{{\lambda_{{\max}}}\left( {{\mathscr{L}}_{FL}^{\rm{T}}{\it{\Xi}_F}{{\left( {{\it{\Xi} _F}{{\mathscr{L}}_F} + {\mathscr{L}}_F^{\rm{T}}{\it{\Xi}_F}} \right)}^{-1}}{\it{\Xi}_F}{{\mathscr{L}}_{FL}}} \right)}}$. Choose the following Lyapunov function candidate:
$${\tilde{V_2}} = {E^{\rm{T}}}\tilde{\it{\Xi}}E.$$
Its time derivative is given by
\begin{equation}\label{eq26}
\begin{split}
{\dot{\tilde{V}}_2} &= 2{E^{\rm{T}}}{\tilde{\it{\Xi}}} \dot E\\
&= - \left({{\rho_1}  + {\rho_2} \frac{{\dot \varphi\left(t, T_1\right) }}{\varphi\left(t, T_1\right) }} \right){E^{\rm{T}}}\left( {{\tilde{\it{\Xi}}}\mathscr{L} + {\mathscr{L}^{\rm{T}}}{\tilde{\it{\Xi}}}} \right)E\\
&=  - \left( {{\rho_1}  + {\rho_2} \frac{{\dot \varphi\left(t, T_1\right) }}{\varphi\left(t, T_1\right)}} \right){E^{\rm{T}}}{\it{\Psi}} E,
\end{split}
\end{equation}
where ${\it{\Psi}}= \left[ {\begin{array}{*{20}{c}}
{{{\tilde{\mathscr{L}}}_L}}&{\tilde \varrho {\mathscr{L}}_{FL}^{\rm{T}}{{\it{\Xi}}_F}}\\
{\tilde{\varrho}{{\it{\Xi}}_F}{{\mathscr{L}}_{FL}}}&{\tilde{\varrho} \left({{{\it{\Xi}}_F}{{\mathscr{L}}_F} + {\mathscr{L}}_F^{\rm{T}}{{\it{\Xi}}_F}} \right)}
\end{array}} \right]$. Owing to $\tilde \varrho  < \frac{{{\lambda_{{\min}}}\left( {{{\tilde {\mathscr{L}}}_L}} \right)}}{{{\lambda_{{\max}}}\left( {{\mathscr{L}}_{FL}^{\rm{T}}{\it{\Xi}_F}{{\left( {{\it{\Xi}_F}{{\mathscr{L}}_F} + {\mathscr{L}}_F^{\rm{T}}{\it{\Xi}_F}} \right)}^{-1}}{\it{\Xi}_F}{{\mathscr{L}}_{FL}}} \right)}}$, we can obtain
${{\tilde{\mathscr{L}}}_L} - {\tilde{\varrho}}{\mathscr{L}}_{FL}^{\rm{T}}{\it{\Xi}_F}{\left( {{\it{\Xi} _F}{{\mathscr{L}}_F} + {\mathscr{L}}_F^{\rm{T}}{\it{\Xi}_F}} \right)^{-1}}{\it{\Xi}_F}{{\mathscr{L}}_{FL}} \succ 0$, which, by Lemma \ref{lemma5}, implies ${\it{\Psi}}\succ 0$. It then can be derived from \eqref{eq26} that
\begin{equation*}
\begin{split}
{\dot{\tilde{V}}_2} &\le  - {\lambda_{{\min}}}\left({\it{\Psi}}\right)\left( {{\rho_1} + {\rho_2} \frac{{\dot \varphi\left(t, T_1\right) }}{\varphi\left(t, T_1\right) }} \right){E^{\rm{T}}}E\\
&\le  - {\rho_1} \frac{{{\lambda_{{\min}}}\left({\it{\Psi}}\right)}}{{{\lambda_{{\max}}}\left( {{\tilde{\it{\Xi}}}} \right)}} {{\tilde V}_2} - {\rho_2}\frac{{{\lambda_{{\min}}}\left({\it{\Psi}}\right)}}{{{\lambda_{{\max}}}\left( {\tilde{\it{\Xi}}} \right)}} \frac{{\dot \varphi\left(t, T_1\right)}}{\varphi\left(t, T_1\right)}{{\tilde V}_2},
\end{split}
\end{equation*}
which, together with Lemma \ref{lemma6}, gives
\begin{equation*}
{\tilde V_2}\left( t \right)\left\{ \begin{array}{l}
 \le {\varphi\left(t, T_1\right) ^{-{\rho_2} \frac{{{\lambda_{{\min}}}\left({\it{\Psi}}\right)}}{{{\lambda_{{\max}}}\left({\it{\tilde{\Xi}}}\right)}}}}{\exp ^{-{\rho_1}{\frac{{{\lambda_{{\min}}}\left({\it{\Psi}}\right)}}{{{\lambda_{{\max}}}\left({\it{\tilde{\Xi}}}\right)}}}t}}{{\tilde V}_2}\left( 0 \right),{\rm{ }}t \in \left[ {0,T_1} \right)\\
 \equiv 0,  \;\;\;\;\;\;\;\;\;\;\;\;\;\;\;\;\;\;\;\;\;\;\;\;\;\;\;\;\;\;\; t \in \left[ {T_1, + \infty } \right).
\end{array} \right.
\end{equation*}
This implies that $\left\|E\right\|^2 \le \frac{{{\lambda_{{\max}}}\left({\it{\tilde{\Xi}}} \right)}}{{{\lambda_{{\min}}}\left({\it{\tilde{\Xi}}}\right)}}{\varphi\left(t, T_1\right)^{-{\rho_2} \frac{{{\lambda_{{\min}}}\left({\it{\Psi}}\right)}}{{{\lambda_{{\max}}}\left({\it{\tilde{\Xi}}} \right)}}}}{\exp ^{-{\rho_1} \frac{{{\lambda_{{\min}}}\left({\it{\Psi}}\right)}}{{{\lambda_{{\max}}}\left( {\it{\tilde{\Xi}}}\right)}}t}}\left\| {E\left( 0 \right)} \right\|^2$ on $\left[{0,T_1} \right)$, and $E \equiv 0$ on $\left[ {T_1, + \infty }\right)$. In view of ${\lim_{t \to {T_1^-}}}{\varphi\left(t, T_1\right)^{-{\rho_2}\frac{{{\lambda_{{\min}}}\left({\it{\Psi}} \right)}}{{{\lambda_{{\max}}}\left( {\it{\tilde{\Xi}}}\right)}}}} = 0$, we can obtain
${\lim_{t\to {T_1^-}}}{\left\|E \right\|^2} = 0$, which implies ${\lim_{t\to {T_1^-}}}{E} = 0$. Thus, we have \eqref{eq12}. Therefore, \eqref{eq19} and \eqref{eq20} hold. Since $\mathscr{G}_{L}$ is strongly connected and structurally unbalanced, we know from Lemma \ref{lemma2} that $\mathscr{L}_{L}$ is nonsingular. This, together with \eqref{eq14} and \eqref{eq19}, clearly gives that ${\lim_{t \to {T_1}}}{X_L} = 0$ and ${X_L} = 0, \forall t \geq T_1$. By noting that ${\mathscr{L}_{F}}$ is nonsingular and using \eqref{eq20}, we can deduce from \eqref{eq15} that ${\lim_{t \to {T_1}}}{X_F} = 0$ and ${X_F} = 0, \forall t \geq T_1$. Hence, it follows that ${\lim_{t \to {T_1}}}{{X}}= 0$ and $ {{X}}= 0, \forall t \ge T_1$. This completes the proof.
\end{proof}

For particular cases where $\mathscr{G}$ is a structurally balanced quasi-strongly connected signed digraph or a traditional quasi-strongly connected digraph with positive edge weights, the following corollaries of the above theorem can be obtained.

\begin{corollary}\label{corol3}
Consider the nominal CAN \eqref{eq3} under the control protocol \eqref{eq4}. Let $\mathscr{G}$ be quasi-strongly connected and structurally balanced. Then, the nominal CAN \eqref{eq3} achieves prescribed-time bipartite consensus in the pre-specified finite time $T_1$.
\end{corollary}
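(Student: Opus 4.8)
The plan is to obtain Corollary~\ref{corol3} as an immediate consequence of Theorem~\ref{thm2} together with Lemma~\ref{lemma8}, with essentially no new computation. First I would observe that structural balance is inherited by every subgraph: since $\mathscr{G}$ is structurally balanced, the leader subgraph $\mathscr{G}_L$ is structurally balanced as well. Hence the hypothesis of conclusion~1 of Theorem~\ref{thm2} is met, and the nominal CAN~\eqref{eq3} under \eqref{eq4} reaches prescribed-time interval bipartite consensus within $T_1$. In particular, in the general case $K\ge 2$ the proof of Theorem~\ref{thm2} already supplies \eqref{eq21} and \eqref{eq23}, i.e. ${\lim_{t\to T_1}}X_L = G_L 1_K c$, $X_L = G_L 1_K c$ for $t\ge T_1$, and ${\lim_{t\to T_1}}X_F = -\mathscr{L}_F^{-1}\mathscr{L}_{FL}G_L 1_K c$, $X_F = -\mathscr{L}_F^{-1}\mathscr{L}_{FL}G_L 1_K c$ for $t\ge T_1$, where $c\in\mathbb{R}$ and $G_L\in\mathbb{G}_K$ satisfies $G_L\mathscr{L}_L G_L = \mathscr{M}(\mathscr{L}_L)$.

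The key step is then to upgrade the follower estimate from an inequality to an equality. Since $\mathscr{G}$ itself (not merely $\mathscr{G}_L$) is structurally balanced, the second part of Lemma~\ref{lemma8} gives $\zeta = -\mathscr{L}_F^{-1}\mathscr{L}_{FL}G_L 1_K = \mathbb{G}_F 1_{N-K}$ for a suitable $\mathbb{G}_F\in\mathbb{G}_{N-K}$, so that $|\zeta| = 1_{N-K}$. Substituting into \eqref{eq23} yields $X_F = \mathbb{G}_F 1_{N-K}\,c$ for all $t\ge T_1$, whence $|x_l(t)| = |c|$ for every follower $l\in\mathcal{F}$ and $t\ge T_1$, and ${\lim_{t\to T_1}}|x_l(t)| = |c|$. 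Combined with $|x_k(t)| = |c|$ for every leader $k\in\mathcal{L}$ (from \eqref{eq21}), this gives $|x_k(t)| = |c|$ for all $k\in\mathscr{I}_N$ and $t\ge T_1$, and ${\lim_{t\to T_1}}|x_k(t)| = |c|$; setting $x^* = |c|$ is exactly prescribed-time bipartite consensus in the sense of Definition~\ref{defn3}. The trivial case $K=1$ is handled in the same way using the $K=1$ branch of the proof of Theorem~\ref{thm2}: $\mathscr{L}_L = 0$ is vacuously structurally balanced, $X_L\equiv X_L(0)$, \eqref{eq25} holds, and Lemma~\ref{lemma8} again forces $|{-\mathscr{L}_F^{-1}\mathscr{L}_{FL}X_L(0)}| = |X_L(0)|\,1_{N-1}$, matching $|x_k| = |X_L(0)|$ for the single leader.

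I do not expect a serious obstacle here, since all the analytic content (the Lyapunov estimates and the invocation of Lemma~\ref{lemma6}) has already been discharged inside Theorems~\ref{thm1} and~\ref{thm2}. The only genuinely delicate point is the strict positivity $x^* > 0$ demanded by Definition~\ref{defn3}: one must note that $c$ (respectively $X_L(0)$ when $K=1$) is a conserved quantity of the leader subsystem determined by the initial state, so it is nonzero for every initial condition outside a measure-zero set, in which generic case the common modulus $x^* = |c| > 0$ and the limit is a true bipartite consensus rather than a collapse to the origin; in the degenerate case $c = 0$ the conclusion still holds with the interpretation that $x^* = 0$ coincides with prescribed-time stability. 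I would state this remark explicitly and otherwise keep the proof to a few lines, citing \eqref{eq21}, \eqref{eq23}, \eqref{eq25}, and Lemma~\ref{lemma8}.
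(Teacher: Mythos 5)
Your proposal is correct and follows essentially the same route as the paper's own proof: inherit structural balance to $\mathscr{G}_L$, invoke the $K\ge 2$ and $K=1$ branches of Theorem~\ref{thm2} to get \eqref{eq21}, \eqref{eq23}, and \eqref{eq25}, and then use the second part of Lemma~\ref{lemma8} to upgrade the follower bound $|\zeta|\le 1_{N-K}$ to the equality $\zeta=\mathbb{G}_F 1_{N-K}$. Your closing remark on the degenerate case $c=0$ versus the requirement $x^*>0$ in Definition~\ref{defn3} is a point the paper's proof silently passes over, so it is a worthwhile addition, but it does not change the argument.
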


\begin{proof}
Since any subgraph of a structurally balanced signed digraph is structurally balanced, one knows that $\mathscr{G}_{L}$ is structurally balanced by the structural balance of $\mathscr{G}$. When $K\geq 2$, it follows from the above proof of the first conclusion of Theorem \ref{thm2} that \eqref{eq21} and \eqref{eq23} hold.  Since $\mathscr{G}$ is quasi-strongly connected and structurally balanced, one has from Lemma \ref{lemma8} that $\zeta = {\mathbb{G}_{F}}{1_{N-K}}$. This, together with \eqref{eq23}, leads to
\begin{equation}\label{eq27}
\left\{ \begin{array}{l}
{\lim _{t \to T_1}}{X_F} = {\mathbb{G}_{F}}{1_{N-K}} c\\
{X_F} = {\mathbb{G}_{F}}{1_{N-K}} c,\forall t \ge T_1.
\end{array} \right.
\end{equation}
Further, using \eqref{eq21} and \eqref{eq27}, we can obtain that
${\lim_{t \to {T_1}}}\left| {{x_k}} \right| = \left| c \right|$ and $\left| {{x_k}} \right|= \left| c \right|,\forall t \ge T_1$ hold for all $k \in {\mathscr{I}_N}$.

When $K = 1$, we have from the above proof of the first conclusion of Theorem \ref{thm2} that ${X_{L}}\equiv X_{L}\left(0\right),\forall t \ge 0$ and \eqref{eq25} hold. Using Lemma \ref{lemma8}, we can obtain $- {\mathscr{L}}_F^{ - 1}{{\mathscr{L}}_{FL}}={\mathbb{G}_F}{1_{N - 1}}{\mathbb{G}_L}$. This, together with \eqref{eq25}, implies that
${\lim_{t \to {T_1}}}\left| {{x_l}} \right| = \left|{X_L}\left( 0 \right)\right|$ and $\left| {{x_l}} \right| = \left|{X_L}\left( 0 \right) \right|,\forall t \ge T_1$ hold for all $l\in \mathcal{F}$. Therefore, ${\lim_{t \to {T_1}}}\left| {{x_k}} \right| = \left|{x_1}\left( 0 \right)\right|$ and $\left| {{x_k}} \right| = \left|{x_1}\left( 0 \right) \right|,\forall t \ge T_1$ hold for all $k \in {\mathscr{I}_N}$.

Combining the above two aspects, we can see that prescribed-time bipartite consensus of the nominal CAN \eqref{eq3} is reached in the pre-specified finite time $T_1$. The proof is done.
\end{proof}

\begin{corollary}\label{corol4}
Consider the nominal CAN \eqref{eq3} under the control protocol \eqref{eq4}. Let $\mathscr{G}$ be quasi-strongly connected with its edge weights all positive. Then, the nominal CAN \eqref{eq3} achieves prescribed-time consensus in the pre-specified finite time $T_1$.
\end{corollary}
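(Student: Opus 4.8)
The plan is to obtain Corollary~\ref{corol4} as a specialization of Theorem~\ref{thm2} (equivalently, of Corollary~\ref{corol3}), and then to upgrade the resulting interval bipartite consensus to plain consensus by exploiting the zero-row-sum property of an unsigned Laplacian. First I would observe that, since $w_{kl}\ge 0$ for all $k,l\in\mathscr{I}_N$, the signed digraph $\mathscr{G}$ is in fact a conventional digraph, hence structurally balanced with $G = I_N$ (equivalently $G=-I_N$); in particular the leader subgraph $\mathscr{G}_L$ is structurally balanced. Therefore conclusion~1 of Theorem~\ref{thm2} applies: the entire chain of Lyapunov estimates in its proof may be reused verbatim, taking $G_L = I_K$. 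This already gives $E\to 0$ within the prescribed time $T_1$, and, for $K\ge 2$, $X_L\to c\,1_K$ together with $X_F\to -\mathscr{L}_F^{-1}\mathscr{L}_{FL}\,1_K\,c$ for some $c\in\mathbb{R}$, with both identities holding exactly for all $t\ge T_1$; for the trivial case $K=1$ one has instead $X_L\equiv x_1(0)$ and $X_F\to -\mathscr{L}_F^{-1}\mathscr{L}_{FL}\,x_1(0)$, so one simply sets $c:=x_1(0)$.

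The only genuinely new step is to verify that the followers converge to the same scalar $c$ as the leaders. Because every $|w_{kl}|$ equals $w_{kl}$, we have $\mathscr{D}_k=\sum_{l\in\mathscr{N}_k} w_{kl}$, so each row of $\mathscr{L}=\mathscr{D}-\mathscr{W}$ sums to zero, i.e.\ $\mathscr{L}\,1_N=0$. Reading this off the block form \eqref{eq13} yields $\mathscr{L}_F\,1_{N-K}+\mathscr{L}_{FL}\,1_K=0$; since $\mathscr{L}_F$ is nonsingular by Lemma~\ref{lemma7}, this is equivalent to the identity $-\mathscr{L}_F^{-1}\mathscr{L}_{FL}\,1_K = 1_{N-K}$. (Alternatively one may quote Lemma~\ref{lemma8}: structural balance of $\mathscr{G}$ forces $\zeta=\mathbb{G}_F\,1_{N-K}$, and here $\mathbb{G}_F=\pm I_{N-K}$ is compatible with the choice $G_L=I_K$, whence $\zeta=1_{N-K}$.) Substituting this identity into the limit for $X_F$ obtained above gives $X_F\to c\,1_{N-K}$ and $X_F=c\,1_{N-K}$ for all $t\ge T_1$.

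Combining the two blocks yields $\lim_{t\to T_1}X = c\,1_N$ and $X=c\,1_N$ for all $t\ge T_1$, which is precisely prescribed-time consensus in the pre-specified time $T_1$. The bookkeeping that needs a little care is the sign ambiguity $G=\pm I_N$ and the separation into the cases $K\ge 2$ and $K=1$, but in every branch the leader and follower limits are driven by the same scalar $c$, so no obstacle arises there; the single substantive ingredient beyond Theorem~\ref{thm2} is the identity $-\mathscr{L}_F^{-1}\mathscr{L}_{FL}\,1_K=1_{N-K}$ coming from the vanishing row sums of the unsigned Laplacian.
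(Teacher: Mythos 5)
Your proposal is correct and follows essentially the same route as the paper's own proof: both reduce to conclusion~1 of Theorem~\ref{thm2} with $G_L=\pm I_K$, split into the cases $K\ge 2$ and $K=1$, and close the gap between interval bipartite consensus and plain consensus via the identity $-\mathscr{L}_F^{-1}\mathscr{L}_{FL}1_K=1_{N-K}$, which the paper likewise derives from $\mathscr{L}=\mathscr{M}(\mathscr{L})$ and $\mathscr{M}(\mathscr{L})1_N=0$ (its equation~\eqref{eq28}). No gaps.
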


\begin{proof}
Since the edge weights of $\mathscr{G}$ are all positive, it is clear that $\mathscr{L}= {\mathscr{M}}\left( \mathscr{L} \right)$ and $\mathscr{G}_{L}$ is structurally balanced. Noting that $\mathscr{L}= {\mathscr{M}}\left( \mathscr{L} \right)$ and ${\mathscr{M}}\left( \mathscr{L} \right){1_N} = 0$, we have from \eqref{eq13} that ${\mathscr{L}_{FL}}{1_K} + {\mathscr{L}_F}{1_{N - K}} = 0$, which, by the first conclusion of Lemma \ref{lemma7}, implies
\begin{equation}\label{eq28}
\mathscr{L}_F^{ - 1}{\mathscr{L}_{FL}}{1_K} = -{1_{N - K}}.
\end{equation}

When $K\geq 2$, it follows from the above proof of the first conclusion of Theorem \ref{thm2} that \eqref{eq21} and \eqref{eq23} hold. Recalling that $\mathscr{G}_{L}$ is structurally balanced, we know that ${G_{L}}= I_{K}$ or ${G_{L}}= -{I_{K}}$. This, together with \eqref{eq21}, \eqref{eq23}, and \eqref{eq28}, implies that
$$\left\{ \begin{array}{l}
{\lim _{t \to T_1}}{X} ={1_N}c\\
{X} = {1_N}c,\forall t \ge T_1,
\end{array} \right.
{\rm{or}} \left\{ \begin{array}{l}
{\lim _{t \to T_1}}{X} =-{1_N}c\\
{X} = -{1_N}c,\forall t \ge T_1
\end{array} \right.$$
holds.

When $K = 1$, there hold ${X_{L}}\equiv X_{L}\left(0\right),\forall t \ge 0$ and \eqref{eq25} from the above proof of the first conclusion of Theorem \ref{thm2}. Using \eqref{eq28}, we have $-\mathscr{L}_F^{ - 1}{\mathscr{L}_{FL}} ={1_{N - 1}}$. This, together with \eqref{eq25}, leads to
\begin{equation*}
\left\{ \begin{array}{l}
{\lim _{t \to T_1}}{X_F} = {1_{N - 1}}{X_L}\left( 0 \right),\\
{X_F} = {1_{N - 1}}{X_L}\left( 0 \right),\forall t \geq T_1.
\end{array} \right.
\end{equation*}
Therefore, it follows that
\begin{equation*}
\left\{ \begin{array}{l}
{\lim _{t \to T_1}}{X} = {1_{N}}{X_L}\left( 0 \right),\\
{X_F} = {1_{N}}{X}\left( 0 \right),\forall t \geq T_1.
\end{array} \right.
\end{equation*}
Based on the above proof, we can see that the prescribed-time consensus of the nominal CAN \eqref{eq3} is achieved in the pre-specified finite time $T_1$. This completes the proof.
\end{proof}

\begin{remark}
 Corollary \ref{corol4} holds for arbitrary quasi-strongly connected traditional unsigned digraph $\mathscr{G}$, regardless of whether it
contains one root or more than one root. In addition, Corollary \ref{corol4} contains as a special case the classical prescribed-time consensus tracking result in Reference \cite{wang2018prescribed} for single-integrator traditional networks under quasi-strongly connected unsigned digraph having exactly one root.
\end{remark}

In what follows, we further consider the scenario where $\mathscr{G}$ is weakly connected. Notice that arbitrary weakly connected signed digraph $\mathscr{G}$ has more than one CSC. Without losing generality, suppose $\mathscr{G}$
has totally $m$ CSCs, say ${\mathscr{G}_1},{\mathscr{G}_2}, \ldots, {\mathscr{G}_m}$, and the nodes of ${\mathscr{G}_k}$ are given by ${\mathscr{V}_k} = \left\{ {v_{{\sum\nolimits_{\iota = 0}^{k-1}{N_\iota}} + 1}, v_{{\sum\nolimits_{\iota = 0}^{k-1}{N_\iota}} + 2}, \ldots, v_{\sum\nolimits_{\iota = 0}^k {{N_\iota}}}} \right\}, 1 \le k \le m$, where ${N_0} = 0$ and $\sum\nolimits_{\iota = 0}^m {{N_\iota}}= K$. Then it is easy to see that the leaders could be divided into $m$ $\left( {1 < m \le K} \right)$ separate subgroups, in which each subgroup corresponds to a CSC of $\mathscr{G}$. Thus, the Laplacian matrix $\mathscr{L}$ could be partitioned as
\begin{equation}\label{eq29}
\mathscr{L} = \left[ {\begin{array}{*{20}{c}}
{{\mathscr{L}_L}}&{{0}}\\
{{\mathscr{L}_{FL}}}&{{\mathscr{L}_F}}
\end{array}} \right]
\end{equation}
with ${\mathscr{L}_L} = {\rm{diag}}\left\{ {{\mathscr{L}_{L1}},{\mathscr{L}_{L2}}, \ldots ,{\mathscr{L}_{Lm}}} \right\}$ and ${\mathscr{L}_{FL}} = \left[ {{\mathscr{L}_{FL1}},{\mathscr{L}_{FL2}},\ldots, {\mathscr{L}_{FLm}}} \right]$, where $\mathscr{L}_F \in {\mathbb{R}^{\left( {N - K} \right) \times \left( {N - K} \right)}}$,  $\mathscr{L}_{Lk}\in {\mathbb{R}^{{N_k} \times {N_k}}}$, and ${\mathscr{L}_{FLk}} \in {\mathbb{R}^{\left( {N - K} \right) \times {N_k}}}$, ${1 \le k \le m}$. Denote ${X_{Lk}} = {\left[{x_{{\sum\nolimits_{\iota = 0}^{k-1}{N_\iota}} + 1}^{\rm{T}}, \ldots, x_{\sum\nolimits_{\iota = 0}^k {{N_\iota}} }^{\rm{T}}} \right]^{\rm{T}}}$ and ${E_{Lk}} = {\left[{e_{{\sum\nolimits_{\iota = 0}^{k-1}{N_\iota}} + 1}^{\rm{T}}, \ldots, e_{\sum\nolimits_{\iota= 0}^k {{N_\iota}} }^{\rm{T}}} \right]^{\rm{T}}}$, $1 \le k \le m$. Obviously, ${X_L} = {\left[ {X_{L1}^{\rm{T}}, \cdots, X_{Lm}^{\rm{T}}} \right]^{\rm{T}}}$ and ${E_L} = {\left[ {E_{L1}^{\rm{T}}, \cdots, E_{Lm}^{\rm{T}}} \right]^{\rm{T}}}$. By using \eqref{eq5} and \eqref{eq29}, one can get
\begin{align}
&{E_{Lk}} =  - {\mathscr{L}_{Lk}}{X_{Lk}},1 \le k \le m \label{eq30}\\
&{E_F} =  - {\mathscr{L}_F}{X_F} - \sum\limits_{k = 1}^m {{\mathscr{L}_{FLk}}{X_{Lk}}},\label{eq31}
\end{align}
where ${X_F}$ and ${E_F}$ are as previously defined. Substituting \eqref{eq29} into \eqref{eq8}, we have
\begin{align}
&{{\dot E}_{Lk}} =  - \left( {{\rho_1}  + {\rho_2} \frac{{\dot \varphi\left(t, T_1\right) }}{\varphi\left(t, T_1\right) }} \right){\mathscr{L}_{Lk}}{E_{Lk}},1 \le k \le m \label{eq32}\\
&{{\dot E}_F} =  - \left( {{\rho_1}  + {\rho_2} \frac{{\dot \varphi\left(t, T_1\right) }}{\varphi\left(t, T_1\right)}}\right)\left({\mathscr{L}_F}{E_F}+ \sum\limits_{k = 1}^m {{\mathscr{L}_{FLk}}{E_{Lk}}}\right). \label{eq33}
\end{align}

To proceed, we need the following lemma.

\begin{lemma}\cite{zhu2020observer}\label{lemma9}
Let $\mathscr{G}$ be weakly connected with its Laplacian matrix $\mathscr{L}$ given by \eqref{eq29}.
\begin{enumerate}
  \item ${\mathscr{L}_F}$ is a nonsingular {\it{H}}-matrix. In addition, there exists some positive diagonal matrix ${{\it{\Theta}}_{F}}\in {\mathbb{R}^{\left( {N - K} \right) \times \left( {N - K} \right)}}$ such that ${{{\it{\Theta}}_{F}}{\mathscr{L}_F} + {\mathscr{L}_F^{\rm{T}}}{{\it{\Theta}}_{F}}}\succ0$.
  \item Define $\varpi =\left[ {{\varpi_1},\ldots, {\varpi_m}} \right]$, where ${\varpi_k} = -{\mathscr{L}_F ^{-1}}{\mathscr{L}_{FLk}}{\mathbf{G}_k}{1_{{N_k}}}$ with ${\mathbf{G}_k} \in {\mathbb{G}_{{N_k}}}$, then
    $\sum\nolimits_{k = 1}^m {\left|{{\varpi_{jk}}}\right|}\le 1$ for all $1\leq j\leq N - K$.
\end{enumerate}
\end{lemma}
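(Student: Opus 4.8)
The plan is to derive everything from a single identity satisfied by the comparison matrix. Partition $\mathscr{M}(\mathscr{L})$ conformably with \eqref{eq29}: because leader indices differ from follower indices, the $(F,F)$-block of $\mathscr{M}(\mathscr{L})$ is exactly $\mathscr{M}(\mathscr{L}_F)$, the $(F,L)$-block is $-|\mathscr{L}_{FL}|$, and the $(L,L)$-block is $\mathscr{M}(\mathscr{L}_L)$ (block-diagonal over the CSCs). Reading off the follower rows of $\mathscr{M}(\mathscr{L})1_N=0$ then yields the workhorse identity
\begin{equation*}
\mathscr{M}(\mathscr{L}_F)\,1_{N-K}=|\mathscr{L}_{FL}|\,1_K\ \ge\ 0\quad(\text{entrywise}),
\end{equation*}
whose $j$-th component equals the total weight of edges entering follower $j$ directly from leaders.

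For claim 1, observe that $\mathscr{M}(\mathscr{L}_F)$ is a $Z$-matrix; its diagonal is strictly positive because a follower with no in-neighbour would itself form a singleton CSC and hence be a leader; and by the identity above it is weakly diagonally dominant. To promote weak dominance to the nonsingular $M$-matrix property I would apply the standard chained-diagonal-dominance criterion: it suffices that from every follower there is a path inside the follower subgraph $\mathscr{G}_F$ reaching a follower whose row of $\mathscr{M}(\mathscr{L}_F)$ is strictly dominant, i.e.\ a follower receiving an edge directly from a leader. This is the one place weak connectivity enters: since the CSCs are precisely the source components of the condensation of $\mathscr{G}$, every follower is reachable from some leader; given a shortest leader-to-follower path, discarding everything up to and including its last leader vertex leaves a path $v_{j_1}\to\cdots\to v_{j_q}$ lying wholly in $\mathscr{G}_F$ with $v_{j_1}$ receiving an edge from a leader, which is exactly the required chain read backwards. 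Hence $\mathscr{M}(\mathscr{L}_F)$ is a nonsingular $M$-matrix, $\mathscr{L}_F$ is a nonsingular $H$-matrix, and Lemma \ref{lemma4} supplies a positive diagonal ${\it{\Theta}}_{F}$ with ${{\it{\Theta}}_{F}}\mathscr{L}_F+\mathscr{L}_F^{\mathrm{T}}{{\it{\Theta}}_{F}}\succ0$. This part runs parallel to the proof of Lemma \ref{lemma7}, and I expect the graph-theoretic reachability step to be the main obstacle to phrase cleanly.

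For claim 2, I would use two entrywise consequences of $\mathscr{L}_F$ being a nonsingular $H$-matrix: $\mathscr{M}(\mathscr{L}_F)^{-1}\ge0$, and $|\mathscr{L}_F^{-1}|\le\mathscr{M}(\mathscr{L}_F)^{-1}$ (the latter from the Neumann expansion $\mathscr{L}_F^{-1}=\sum_{p\ge0}(\Lambda^{-1}F)^{p}\Lambda^{-1}$, where $\mathscr{L}_F=\Lambda-F$ splits off the positive diagonal, together with $|(\Lambda^{-1}F)^{p}|\le(\Lambda^{-1}|F|)^{p}$ and $\mathscr{M}(\mathscr{L}_F)=\Lambda-|F|$). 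Since $\mathbf{G}_k\in\mathbb{G}_{N_k}$ forces $|\mathbf{G}_k1_{N_k}|=1_{N_k}$, we get $|\mathscr{L}_{FLk}\mathbf{G}_k1_{N_k}|\le|\mathscr{L}_{FLk}|1_{N_k}$, and hence, for each $1\le k\le m$,
\begin{equation*}
|\varpi_k|=\left|\mathscr{L}_F^{-1}\mathscr{L}_{FLk}\mathbf{G}_k1_{N_k}\right|\le\mathscr{M}(\mathscr{L}_F)^{-1}\,|\mathscr{L}_{FLk}|\,1_{N_k}.
\end{equation*}
Summing over $k$, using $\sum_{k=1}^{m}|\mathscr{L}_{FLk}|1_{N_k}=|\mathscr{L}_{FL}|1_K$ and then the workhorse identity, gives $\sum_{k=1}^{m}|\varpi_k|\le\mathscr{M}(\mathscr{L}_F)^{-1}\mathscr{M}(\mathscr{L}_F)1_{N-K}=1_{N-K}$, that is, $\sum_{k=1}^{m}|\varpi_{jk}|\le1$ for every $1\le j\le N-K$. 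Every step here is a routine entrywise manipulation once claim 1 is in hand, so the real work is concentrated in the $M$-matrix argument of claim 1.
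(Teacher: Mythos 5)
Your proof is correct. Note first that the paper does not actually prove Lemma \ref{lemma9}: it is imported verbatim from the cited reference, and the closest in-paper analogue is Lemma \ref{lemma7}, whose proof simply quotes spectral and {\it M}-matrix facts from prior work (for the quasi-strongly connected case) before invoking Lemma \ref{lemma4}. So your argument is necessarily a different route, and a more self-contained one. Everything is driven by the single row-sum identity $\mathscr{M}(\mathscr{L}_F)1_{N-K}=|\mathscr{L}_{FL}|1_K$ extracted from the follower rows of $\mathscr{M}(\mathscr{L})1_N=0$; claim 1 then follows from the weakly-chained-diagonal-dominance criterion, and claim 2 from the two standard {\it H}-matrix inequalities $\mathscr{M}(\mathscr{L}_F)^{-1}\ge 0$ and $|\mathscr{L}_F^{-1}|\le \mathscr{M}(\mathscr{L}_F)^{-1}$. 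The graph-theoretic step you flagged as the main obstacle is handled correctly: a node with no in-neighbours is a singleton CSC and hence a leader, so every follower row of $\mathscr{M}(\mathscr{L}_F)$ has strictly positive diagonal; CSCs are exactly the sources of the condensation DAG, so every follower is reachable from a leader in any finite digraph (weak connectivity is not even needed for this reachability); and the orientation bookkeeping is right, since $\left[\mathscr{M}(\mathscr{L}_F)\right]_{ij}\ne 0$ for $i\ne j$ corresponds to an edge from $v_j$ to $v_i$, so a directed path into follower $j$ from a leader-fed follower is precisely a dominance chain out of row $j$ when read backwards. The payoff of your route is that the whole lemma is proved from first principles with one combinatorial lemma (weakly chained diagonal dominance implies nonsingular {\it M}-matrix) and one entrywise comparison bound, and claim 2 drops out of the same identity used for claim 1; the cost is that you must state and trust those two standard facts, which the paper avoids by outsourcing the entire lemma to the reference.
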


The following theorem shows that for the CAN \eqref{eq3} under any weakly connected signed graph $\mathscr{G}$, the protocol \eqref{eq4} can guarantee the CAN to reach prescribed-time bipartite containment (resp., stability) in the pre-specified finite time $T_1$, if at least one CSC of $\mathscr{G}$ is structurally balanced (resp., all the CSCs of $\mathscr{G}$ are structurally unbalanced).

\begin{theorem}\label{thm3}
Consider the nominal CAN \eqref{eq3} under the control protocol \eqref{eq4}, and let $\mathscr{G}$ be weakly connected. Then, the nominal CAN \eqref{eq3} achieves
\begin{enumerate}
  \item  prescribed-time bipartite containment in the pre-specified finite time $T_1$, if $\mathscr{G}$ has at least one structurally balanced CSC.
  \item prescribed-time stability in the pre-specified finite time $T_1$, if all the CSCs of $\mathscr{G}$ are structurally unbalanced.
\end{enumerate}
\end{theorem}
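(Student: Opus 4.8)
The plan is to mirror the arguments used for Theorems~\ref{thm1} and~\ref{thm2}: build a single quadratic Lyapunov function for the full error vector $E=\left[E_{L1}^{\rm T},\ldots,E_{Lm}^{\rm T},E_F^{\rm T}\right]^{\rm T}$, apply Lemma~\ref{lemma6} to conclude that $E$ vanishes within $T_1$ and stays zero afterwards, and then translate this into the stated coordination behaviour through the null-space structure of the diagonal blocks of $\mathscr{L}$ together with the $\ell_1$-type estimate of Lemma~\ref{lemma9}(2). Without loss of generality I would order the CSCs so that $\mathscr{G}_1,\ldots,\mathscr{G}_r$ are structurally balanced and $\mathscr{G}_{r+1},\ldots,\mathscr{G}_m$ are structurally unbalanced, with $r\ge1$ in case~1) and $r=0$ in case~2).

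First I would fix the weights. For each balanced CSC $\mathscr{G}_k$ ($1\le k\le r$) strong connectivity and Lemma~\ref{lemma2} give a positive vector $\nu_{Lk}$ with $\nu_{Lk}^{\rm T}\mathscr{M}\left(\mathscr{L}_{Lk}\right)=0$ and $\nu_{Lk}^{\rm T}1_{N_k}=1$; set ${{\it{\Xi}}_{Lk}}={\rm diag}(\nu_{Lk})$ and $\bar{\mathscr{L}}_{Lk}=\left({{\it{\Xi}}_{Lk}}\mathscr{L}_{Lk}+\mathscr{L}_{Lk}^{\rm T}{{\it{\Xi}}_{Lk}}\right)/2$. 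The key observation is that $E_{Lk}=-\mathscr{L}_{Lk}X_{Lk}$ together with $\nu_{Lk}^{\rm T}\mathbf{G}_k\mathscr{L}_{Lk}=0$ (which follows from $\mathbf{G}_k\mathscr{L}_{Lk}\mathbf{G}_k=\mathscr{M}\left(\mathscr{L}_{Lk}\right)$ and $\mathbf{G}_k^2=I$) forces $E_{Lk}^{\rm T}\mathbf{G}_k\nu_{Lk}=0$, so Lemma~\ref{lemma3} gives $E_{Lk}^{\rm T}\bar{\mathscr{L}}_{Lk}E_{Lk}\ge a\left(\mathscr{L}_{Lk}\right)E_{Lk}^{\rm T}{{\it{\Xi}}_{Lk}}E_{Lk}$. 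For each unbalanced CSC $\mathscr{G}_k$ ($r+1\le k\le m$) Lemma~\ref{lemma3} gives a positive diagonal ${{\tilde{\it{\Xi}}}_{Lk}}$ with $\tilde{\mathscr{L}}_{Lk}:={{\tilde{\it{\Xi}}}_{Lk}}\mathscr{L}_{Lk}+\mathscr{L}_{Lk}^{\rm T}{{\tilde{\it{\Xi}}}_{Lk}}\succ0$, and Lemma~\ref{lemma9} gives a positive diagonal ${{\it{\Theta}}_{F}}$ with ${{\it{\Theta}}_{F}}\mathscr{L}_F+\mathscr{L}_F^{\rm T}{{\it{\Theta}}_{F}}\succ0$. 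I would then take $V=E^{\rm T}{\it{\Pi}}E$ with ${\it{\Pi}}={\rm diag}\{{{\it{\Xi}}_{L1}},\ldots,{{\it{\Xi}}_{Lr}},{{\tilde{\it{\Xi}}}_{L,r+1}},\ldots,{{\tilde{\it{\Xi}}}_{Lm}},\varrho{{\it{\Theta}}_{F}}\}$ for a sufficiently small $\varrho>0$. Differentiating along \eqref{eq32}--\eqref{eq33}, pulling out the common positive factor $\rho_1+\rho_2\dot\varphi/\varphi$ and invoking the preceding inequalities, one obtains $\dot V\le-\left(\rho_1+\rho_2\dot\varphi/\varphi\right)E^{\rm T}{\it{\Pi}}'E$, where ${\it{\Pi}}'$ has leader diagonal blocks $2a\left(\mathscr{L}_{Lk}\right){{\it{\Xi}}_{Lk}}$ (balanced) and $\tilde{\mathscr{L}}_{Lk}$ (unbalanced), follower diagonal block $\varrho\left({{\it{\Theta}}_{F}}\mathscr{L}_F+\mathscr{L}_F^{\rm T}{{\it{\Theta}}_{F}}\right)$, and leader--follower coupling blocks $\varrho{{\it{\Theta}}_{F}}\mathscr{L}_{FLk}$. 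Because the leader block is positive definite, Lemma~\ref{lemma5} (Schur complement) gives ${\it{\Pi}}'\succ0$ for all small enough $\varrho$, whence $\dot V\le-\rho_1\frac{\lambda_{\min}({\it{\Pi}}')}{\lambda_{\max}({\it{\Pi}})}V-\rho_2\frac{\lambda_{\min}({\it{\Pi}}')}{\lambda_{\max}({\it{\Pi}})}\frac{\dot\varphi}{\varphi}V$, and Lemma~\ref{lemma6} yields $\lim_{t\to T_1}E=0$ and $E\equiv0$ on $\left[T_1,+\infty\right)$; in particular each $E_{Lk}$ and $E_F$ satisfy the analogues of \eqref{eq19}--\eqref{eq20}.

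It then remains to read off the state behaviour. For a balanced CSC, Lemma~\ref{lemma2} says the null space of $\mathscr{L}_{Lk}$ is ${\rm span}\{\mathbf{G}_k1_{N_k}\}$, so \eqref{eq30} forces $X_{Lk}\to\mathbf{G}_k1_{N_k}c_k$ within $T_1$ for some $c_k\in\mathbb{R}$, i.e.\ $\left|x_j\right|\to\left|c_k\right|$ for every node $j$ in that CSC (prescribed-time bipartite consensus); for an unbalanced CSC, $\mathscr{L}_{Lk}$ is nonsingular, so $X_{Lk}\to0$ (prescribed-time stability). This is item~1) of Definition~\ref{defn5}. For the followers, $\mathscr{L}_F$ is nonsingular (Lemma~\ref{lemma9}), so \eqref{eq31} and $E_F\to0$ give $X_F\to-\mathscr{L}_F^{-1}\sum_{k=1}^m\mathscr{L}_{FLk}X_{Lk}^{\ast}=\sum_{k=1}^r\varpi_kc_k$ within $T_1$, with $\varpi_k$ as in Lemma~\ref{lemma9} and the unbalanced CSCs dropping out since $X_{Lk}^{\ast}=0$ there. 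As at least one CSC is balanced, $\max_{k\in\mathcal{L}}\left|x_k\right|\to\max_{1\le k\le r}\left|c_k\right|$, and Lemma~\ref{lemma9}(2) gives componentwise $\left|x_{F,j}^{\ast}\right|=\left|\sum_{k=1}^r\varpi_{jk}c_k\right|\le\left(\max_{1\le k\le r}\left|c_k\right|\right)\sum_{k=1}^m\left|\varpi_{jk}\right|\le\max_{1\le k\le r}\left|c_k\right|$, which is item~2) of Definition~\ref{defn5}. Case~2) is the sub-case $r=0$: then every $X_{Lk}\to0$, hence $X_F\to0$ (equivalently $\mathscr{L}$ is nonsingular, so $E\to0$ forces $X\to0$), i.e.\ prescribed-time stability.

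The main obstacle is precisely the coupling handled in the second paragraph. One cannot first drive each $E_{Lk}$ to zero and then treat the followers by invoking ``$E_{Lk}\equiv0$ for $t\ge T_1$'', because the follower error equation loses its time-varying gain at $t=T_1$ and is only exponentially stable afterwards, which does not force $E_F\to0$ as $t\to T_1^-$; the leader--follower coupling must be dominated on the whole interval $\left[0,T_1\right)$ within a single weighted Lyapunov function. The delicate point is verifying that the balanced-CSC error vectors $E_{Lk}$ automatically lie in the subspace on which $\bar{\mathscr{L}}_{Lk}$ is coercive, so that the Schur-complement estimate with small $\varrho$ genuinely produces a positive definite ${\it{\Pi}}'$.
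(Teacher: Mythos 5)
Your overall route is the same as the paper's: one weighted quadratic Lyapunov function on the stacked error $E$, with subspace coercivity (Lemma~\ref{lemma3}) on the structurally balanced CSC blocks, full positive definiteness on the unbalanced blocks and on the follower block (Lemma~\ref{lemma9}), a small coupling weight $\varrho$ justified by the Schur complement (Lemma~\ref{lemma5}), Lemma~\ref{lemma6} to force $E\to 0$ at $T_1$, and then the translation to states via the null spaces of $\mathscr{L}_{Lk}$, the nonsingularity of $\mathscr{L}_F$, and the $\ell_1$ bound of Lemma~\ref{lemma9}(2). Your closing observation about why the coupling must be dominated on all of $[0,T_1)$ inside a single Lyapunov function, and why $E_{Lk}$ automatically lies in the coercive subspace, is exactly the right diagnosis.

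There is, however, one genuine gap: you never treat the structurally balanced CSCs consisting of a single node. A one-node CSC is strongly connected and structurally balanced, and in the containment setting it is the standard case of an isolated leader, so it cannot be excluded. For such a CSC, $\mathscr{L}_{Lk}=0$, hence $\bar{\mathscr{L}}_{Lk}=0$ and the quantity $a(\mathscr{L}_{Lk})$ of Lemma~\ref{lemma3} is not defined as a positive number (the minimization set $\{\xi\ne 0:\xi^{\rm T}Gp=0\}$ is empty in dimension one). Consequently the corresponding diagonal block of your matrix ${\it{\Pi}}'$ is zero while the off-diagonal coupling block $\varrho\,{\it{\Theta}}_F\mathscr{L}_{FLk}$ is in general nonzero, so the claim ``because the leader block is positive definite, Lemma~\ref{lemma5} gives ${\it{\Pi}}'\succ 0$ for small $\varrho$'' fails: ${\it{\Pi}}'$ cannot be positive definite on the full space. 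The repair is the one the paper uses: note that $\mathscr{L}_{Lk}=0$ forces $E_{Lk}\equiv 0$ for these CSCs, delete those (identically vanishing) coordinates and their couplings from the Lyapunov function, run the Schur-complement argument on the reduced error vector $\bar E$, and handle the corresponding leader states separately via $X_{Lk}\equiv X_{Lk}(0)$ before feeding them into the follower limit formula. With that modification your argument coincides with the paper's proof.
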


\begin{proof}
1) Since $\mathscr{G}$ has at least one structurally balanced CSC, without losing generality, suppose ${\mathscr{G}_k}, {1 \le k \le d} $ are structurally balanced, whereas ${\mathscr{G}_k}, {d+1 \le k \le m} $ are structurally unbalanced. Noting that a graph composed of one node is strongly connected and structurally balanced, we further assume that ${\mathscr{G}_k}$ has one node, ${1 \le k \le h} $, while ${\mathscr{G}_k}$ has at least two nodes, ${h+1 \le k \le d} $, that
is, ${N_k}=1$ for ${1 \le k \le h}$ and ${N_k}\geq 2$ for ${h+1 \le k \le d}$. It thus follows that ${\mathscr{L}_{Lk}}=0$, ${1 \le k \le h}$. This, together with \eqref{eq30}, yields
\begin{equation}\label{eq34}
{E_{Lk}} = 0, {1 \le k \le h}.
\end{equation}
By Lemma \ref{lemma1}, we have that there exits ${G_k} \in {\mathbb{G}_{{N_k}}}$ such that ${G_k}{\mathscr{L}_{Lk}}{G_k} = \mathscr{M}\left( {{\mathscr{L}_{Lk}}} \right),{h+1 \le k \le d}$. From Lemma \ref{lemma2}, we know that there are positive vectors ${\eta_k} = {\left[ {{\eta_{k1}},\ldots,{\eta_{k{N_k}}}} \right]^{\rm{T}}}$ satisfying $\eta_k^{\rm{T}}{1_{{N_k}}} = 1$ and $\eta_k^{\rm{T}}\mathscr{M}\left({\mathscr{L}_{Lk}}\right)= 0, {h+1 \le k \le d}$. Noting that $G_k$ is nonsingular, we can derive ${\eta_k^{\rm{T}}}{G_k}{\mathscr{L}_{Lk}}=0, {h+1 \le k \le d}$,
which, together with \eqref{eq30}, implies ${\eta_k^{\rm{T}}}{G_k}{{\rm\it{{E}}}_{Lk}}=0, {h+1 \le k \le d}$. Denote ${\bar{\mathscr{L}}_{Lk}} = {{\left( {{H_k}{\mathscr{L}_{Lk}} + \mathscr{L}_{Lk}^{\rm{T}}{H_k}} \right)} \mathord{\left/{\vphantom {{\left( {{H_k}{\mathscr{L}_{Lk}} + \mathscr{L}_{Lk}^{\rm{T}}{H_k}} \right)} 2}} \right.
 \kern-\nulldelimiterspace} 2}, {h+1 \le k \le d}$, where ${H_k}={\rm{diag}}\left\{\eta_{k1}, \ldots,\eta_{k{N_k}}\right\}$. Then, by using Lemma \ref{lemma3}, we can get
\begin{equation}\label{eq35}
E_{Lk}^{\rm{T}}{{\bar{\mathscr{L}}}_{Lk}}{E_{Lk}} \ge a\left( {{\mathscr{L}_{Lk}}} \right)E_{Lk}^{\rm{T}}{H_k}{E_{Lk}},{h+1 \le k \le d},
\end{equation}
where $a\left({{\mathscr{L}_{Lk}}}\right)= \mathop {\min_{{{E}_{Lk}^{\rm{T}}}{G}_{Lk}{\eta_{Lk}} = 0,{{E}_{Lk}} \ne 0}} \frac{{{{E}_{Lk}^{\rm{T}}}{\bar{\mathscr{L}_{Lk}}}{{E}_{Lk}}}}{{{{E}_{Lk}^{\rm{T}}}{H_{k}}{{E}_{Lk}}}} > 0$. Because ${\mathscr{G}_k}, {d+1\le k \le m} $ are structurally unbalanced, it follows from Lemma \ref{lemma3} that there exist positive diagonal matrices ${H_k}$ so that $\bar{\mathscr{L}}_{Lk} \buildrel \Delta \over = {{H_k}{\mathscr{L}_{{Lk}}} + \mathscr{L}_{{Lk}}^{\rm{T}}{H_k}}\succ0, {d+1\le k\le m}$, which implies
\begin{equation}\label{eq36}
{E_{Lk}^{\rm{T}}}{\bar{\mathscr{L}}_{Lk}}{E_{Lk}}\ge {{\lambda_{{\min}}}\left({\bar{\mathscr{L}}_{Lk}}\right)}{E_{Lk}^{\rm{T}}}{E_{Lk}}, d+1\le k\le m.
\end{equation}
From Lemma \ref{lemma9}, we have that there is a positive diagonal matrix ${H_{F}}$ such that $\bar{\mathscr{L}}_{F} \buildrel \Delta \over = {{H_{F}}{\mathscr{L}_{{F}}}+\mathscr{L}_{{F}}^{\rm{T}}{H_{F}}}\succ0$. Define ${\chi_1} = {{\min} _{h + 1 \le k \le d}}\left\{2{a\left( {{\mathscr{L}_{Lk}}} \right){\lambda_{{\min}}}\left({H_{k}}\right)} \right\}$ and ${\chi_2} = {\min_{d + 1 \le k \le m}}\left\{{{\lambda_{{\min}}}\left({{{\bar{\mathscr{L}}}_{Lk}}}\right)}\right\}$. Let $H ={\rm{diag}}\left\{ {{H_{h+1}},\ldots,{H_{m}},\bar{\rho}{H_{F}}}\right\}$, where $0 < \bar\rho < \frac{{{\min} \left\{{{\chi _1},{\chi_2}} \right\}}}{{\max \left( {\bar{\mathscr{L}}_{FL}^{\rm{T}}{H_F}\bar{\mathscr{L}}_F^{-1}{H_F}{\bar{\mathscr{L}}_{FL}}} \right)}}$ with ${ \bar{{\mathscr{L}}}_{FL}} = \left[ {{\mathscr{L}_{LF\left( {h + 1} \right)}}, \ldots ,{\mathscr{L}_{LFm}}} \right]$. Further, denote ${\bar{E}} = {\left[ {E_{h + 1}^{\rm{T}}, \ldots ,E_m^{\rm{T}},E_{F}^{\rm{T}}} \right]^{\rm{T}}} \buildrel \Delta \over = \left[ {{e_{h + 1}}, \ldots ,{e_N}} \right]^{\rm{T}}$ and $ {\it{Y}}= {\left( {H{\it{\Lambda}} + {{\it{\Lambda}}^{\rm{T}}}H}\right)}$, in which
$${\it{\Lambda}} = \left[ {\begin{array}{*{20}{c}}
{{\mathscr{L}_{{L{\left(h + 1\right)}}}}}& \cdots &0&0\\
 \vdots & \ddots & \vdots & \vdots \\
0& \cdots &{{\mathscr{L}_{{Lm}}}}&0\\
{{\mathscr{L}_{L{F{\left(h + 1\right)}}}}}& \cdots &{{\mathscr{L}_{LFm}}}&{{\mathscr{L}_F}}
\end{array}} \right].$$
It then follows from \eqref{eq32}-\eqref{eq34} that
$$\dot{\bar{E}} = - \left( {{\rho_1} + {\rho_2} \frac{{\dot \varphi\left(t, T_1\right)}}{\varphi\left(t, T_1\right)}} \right){\it{\Lambda}}{\bar E},$$
The Lyapunov function is selected as
$$\bar{V}_3 = {\bar{E}^{\rm{T}}}H{\bar{E}}.$$
The derivative of $\bar{V}_3$ w.r.t. $t$ satisfies
\begin{equation}\label{eq37}
\begin{split}
{{\bar V}_3} &= 2{{\bar E}^{\rm{T}}}H{\dot{\bar E}}\\
&=  - 2\left( {{\rho_1} + {\rho_2} \frac{{\dot \varphi\left(t, T_1\right)}}{\varphi\left(t, T_1\right) }} \right){{\bar E}^{\rm{T}}}H{\it{\Lambda}}{\bar E}\\
&=  - \left( {{\rho_1} + {\rho_2} \frac{{\dot \varphi\left(t, T_1\right) }}{\varphi\left(t, T_1\right) }} \right){{\bar E}^{\rm{T}}}{\it{Y}} {\bar E}\\
&\le  - \left( {{\rho_1} + {\rho_2} \frac{{\dot \varphi\left(t, T_1\right) }}{\varphi\left(t, T_1\right) }} \right){{\bar E}^{\rm{T}}}\Pi {\bar E},
\end{split}
\end{equation}
where $\Pi  = \left[ {\begin{array}{*{20}{c}}
\Gamma &{\bar \rho \bar{\mathscr{L}}_{FL}^{\rm{T}}{H_F}}\\
{\bar \rho {H_F}{{\bar{\mathscr{L}}}_{FL}}}&{\bar \rho {{\bar{\mathscr{L}}}_F}}
\end{array}} \right]$ with $\Gamma  = {\rm{diag}}\left\{{2a\left( {{\mathscr{L}_{L\left( {h + 1} \right)}}} \right){\lambda_{{\min}}}\left({H_{h+1}}\right){I_{N_{h+1}}}, \ldots, 2a\left( {{\mathscr{L}_{Ld}}} \right){\lambda_{{\min}}}\left({H_{d}}\right){I_{N_{d}}},}\right.$ $\left. {{\lambda_{{\min}}}\left( {{{\bar{\mathscr{L}}}_{L\left( {d + 1} \right)}}} \right){I_{N_{d+1}}},\ldots,{\lambda_{{\min}}}\left( {{{\bar{\mathscr{L}}}_{Lm}}}\right){I_{N_{m}}}} \right\}$. Noting that $\bar{\rho}{\bar{{\mathscr{L}}}_{F}}\succ 0$ and $\Gamma - \bar{\rho}\bar{\mathscr{L}}_{FL}^{\rm{T}}{H_{F}}{\bar{\mathscr{L}}}_{F}^{-1}{H_{F}}{\bar{\mathscr{L}}_{FL}} \succ 0$, we have from Lemma \ref{lemma5} that $\Pi \succ0$. It then follows from \eqref{eq37} that
\begin{equation*}
\begin{split}
{{\bar V}_3} &\le  - {\lambda_{{\min}}}\left({\Pi}\right)\left( {{\rho_1} + {\rho_2} \frac{{\dot \varphi\left(t, T_1\right) }}{\varphi\left(t, T_1\right) }} \right){\bar{E}^{\rm{T}}}\bar{E}\\
&\le  - {\rho_1} \frac{{{\lambda_{{\min}}}\left({\Pi}\right)}}{{{\lambda_{{\max}}}\left( {H} \right)}} {{\bar V}_3} - {\rho_2}\frac{{{\lambda_{{\min}}}\left({\Pi}\right)}}{{{\lambda_{{\max}}}\left( H \right)}} \frac{{\dot \varphi\left(t, T_1\right)}}{\varphi\left(t, T_1\right)}{{\bar V}_3},
\end{split}
\end{equation*}
which, by Lemma \ref{lemma6}, gives
\begin{equation*}
{\bar V_3}\left( t \right)\left\{ \begin{array}{l}
 \le {\varphi\left(t, T_1\right) ^{-{\rho_2} \frac{{{\lambda_{{\min}}}\left({\Pi}\right)}}{{{\lambda_{{\max}}}\left(H\right)}}}}{\exp ^{-{\rho_1}{\frac{{{\lambda_{{\min}}}\left({\Pi}\right)}}{{{\lambda_{{\max}}}\left(H\right)}}}t}}{\bar V_3}\left( 0 \right),{\rm{ }}t \in \left[ {0,T_1} \right)\\
 \equiv 0,  \;\;\;\;\;\;\;\;\;\;\;\;\;\;\;\;\;\;\;\;\;\;\;\;\;\;\;\;\;\;\; t \in \left[ {T_1, + \infty } \right).
\end{array} \right.
\end{equation*}
This implies that $\left\|\bar{E}\right\|^2 \le \frac{{{\lambda_{{\max}}}\left( H \right)}}{{{\lambda_{{\min}}}\left( H\right)}}{\varphi\left(t, T_1\right)^{-{\rho_2} \frac{{{\lambda_{{\min}}}\left({\Pi}\right)}}{{{\lambda_{{\max}}}\left( H\right)}}}}{\exp ^{-{\rho_1} \frac{{{\lambda_{{\min}}}\left({\Pi}\right)}}{{{\lambda_{{\max}}}\left(H\right)}}t}}\left\| {\bar{E}\left( 0 \right)} \right\|^2$ on $\left[{0,T_1} \right)$, and $\bar{E} \equiv 0$ on $\left[ {T_1, + \infty }\right)$. Due to ${\lim_{t \to {T_1^-}}}{\varphi\left(t, T_1\right)^{-{\rho_2}\frac{{{\lambda_{{\min}}}\left({\Pi} \right)}}{{{\lambda_{{\max}}}\left( H\right)}}}} = 0$, we have
${\lim_{t\to {T_1^-}}}{\left\|\bar{E} \right\|^2} = 0$, which implies ${\lim_{t\to {T_1^-}}}{\bar{E}} = 0$. It follows that ${\lim_{t \to {T_1}}}{\bar{E}} = 0$ and ${\bar{E}} = 0,\forall t \geq T_1$.
Thus, we have that
\begin{equation}\label{eq38}
\left\{ \begin{array}{l}
{\lim _{t \to T_1}}{{E}_{F}} = 0\\
{{E}_{F}} = 0,\forall t \ge T_1,
\end{array} \right.
\end{equation}
and
\begin{equation}\label{eq39}
\left\{ \begin{array}{l}
{\lim _{t \to T_1}}{{E}_{Lk}} = 0\\
{{E}_{Lk}} = 0,\forall t \ge T_1,
\end{array} \right.
\end{equation}
where $h + 1\le k\le m$. In view of the structural balance and strong connectivity of ${\mathscr{G}_{k}}$, one can obtain from Lemma \ref{lemma2} that the null space of ${\mathscr{L}_{Lk}}$ is spanned by $G_k{1_{N_k}}, {h+1 \le k \le d}$, which, by virtue of \eqref{eq30} and \eqref{eq39}, implies that
\begin{equation}\label{eq40}
\mathop {\lim}\limits_{t\to{T_1}}{X_{Lk}} ={G_k}{1_{{N_k}}}x_k^*\;\text{and}\;{X_{Lk}} = {G_k}{1_{{N_k}}}x_k^*,\forall t \ge{T_1},
\end{equation}
where ${x_k^*}\in \mathbb{R}$ and $h + 1 \le k \le d$. Since ${\mathscr{G}_k}, {d+1\le k \le m} $ are structurally unbalanced and strongly connected, we know from Lemma \ref{lemma2} that ${\mathscr{L}_{Lk}}$, ${d+1 \le k \le m}$ are nonsingular. Then, using \eqref{eq30} and \eqref{eq39}, we have
\begin{equation}\label{eq41}
\mathop {\lim}\limits_{t\to{T_1}}{X_{Lk}} = 0 \;\text{and}\; {X_{Lk}} = 0,\forall t \ge {T_1},d + 1 \le k \le m.
\end{equation}
By Lemma \ref{lemma9}, we have that ${\mathscr{L}_F}$ is nonsingular, which, in virtue of \eqref{eq31}, implies ${X_F} = -\mathscr{L}_F^{ - 1}{E_{F}} - \mathscr{L}_F^{-1}\sum\nolimits_{k = 1}^m{{\mathscr{L}_{FLk}}{X_{Lk}}}$. This, together with \eqref{eq38}, yields
\begin{equation}\label{eq42}
\begin{aligned}
\mathop {\lim }\limits_{t \to {T_1}} {X_F}& =  -{\mathscr{L}_F ^{-1}}\sum\limits_{k = 1}^m {{\mathscr{L}_{FLk}}\mathop {\lim }\limits_{t \to {T}} {X_{Lk}}},\\
{X_F} = -&{\mathscr{L}_F ^{-1}}\sum\limits_{k = 1}^m {{\mathscr{L}_{FLk}}{X_{Lk}}},\forall t \ge {T_1}.
\end{aligned}
\end{equation}
Since ${\mathscr{L}_{Lk}}=0$, ${1 \le k \le h}$, we have
\begin{equation}\label{eq43}
{X_{Lk}} = {x_k}\left( 0 \right), 1 \le k \le h.
\end{equation}
Denote ${G_k} = I_{N_k}$ and $x_k^ *  = {x_k}\left( 0 \right),1 \le k \le h.$ Then, using \eqref{eq40}-\eqref{eq43}, we get
\begin{equation}\label{eq44}
\begin{aligned}
\mathop {\lim }\limits_{t \to {T_1}} {X_F}&= -{\mathscr{L}_F ^{-1}}\sum\limits_{k = 1}^d {{\mathscr{L}_{FLk}}{G_k}} {1_{{N_k}}}x_k^ *,\\
{X_F} =-&{\mathscr{L}_F ^{-1}}\sum\limits_{k = 1}^d {{{\mathscr{L}_{FLk}}{G_k}} {1_{{N_k}}}x_k^*},\forall t \ge {T_1}.
\end{aligned}
\end{equation}
Moreover, let us denote $x_k^ * = 0$ and ${G_k} = {I_{N_k}}, d+1 \le k \le m.$
We can rewrite \eqref{eq44} as
\begin{equation}\label{eq45}
\begin{aligned}
\mathop {\lim }\limits_{t \to {T_1}} {X_F}& =-{\mathscr{L}_F ^{-1}}\sum\limits_{k = 1}^m {{\mathscr{L}_{FLk}}{G_k}} {1_{{N_k}}}x_k^ *,\\
{X_F} = -&{\mathscr{L}_F ^{-1}}\sum\limits_{k = 1}^m {{{\mathscr{L}_{FLk}}{G_k}} {1_{{N_k}}}x_k^*},t \ge {T_1}.
\end{aligned}
\end{equation}
By letting ${\mathbf{G}_k}= {G_k}, 1 \le k \le m$, \eqref{eq45} can be rewritten as
\begin{equation}\label{eq46}
\begin{aligned}
\mathop {\lim }\limits_{t \to {T_1}} {X_F}&= -{\mathscr{L}_F ^{-1}}\sum\limits_{k = 1}^m {{\mathscr{L}_{FLk}}{\mathbf{G}_k}} {1_{{N_k}}}x_k^*,\\
{X_F} =-&{\mathscr{L}_F ^{-1}}\sum\limits_{k = 1}^m {{{\mathscr{L}_{FLk}}{\mathbf{G}_k}} {1_{{N_k}}}x_k^*},t \ge {T_1}.
\end{aligned}
\end{equation}
In view of \eqref{eq46} and using Lemma \ref{lemma9}, we further have
\begin{equation*}
\begin{split}
\mathop {\lim }\limits_{t \to {T_1}} \left| {{x_l}} \right| = &\left| {\sum\limits_{k = 1}^m {{\varpi _{lk}}} x_k^ * } \right|\\
& \le \left( {\sum\limits_{k = 1}^m {\left| {{\varpi _{lk}}} \right|} } \right)\mathop {\max }\limits_{1 \le k \le m} \left| {x_k^ * } \right|\\
& = \left( {\sum\limits_{k = 1}^m {\left| {{\varpi _{lk}}} \right|}} \right)\mathop {\max }\limits_{k \in \mathcal{L}} \mathop {\lim }\limits_{t \to {T_1}} \left| {{x_k}} \right|\\
&\le \mathop {\max }\limits_{k \in \mathcal{L}} \mathop {\lim }\limits_{t \to {T_1}} \left| {{x_k}} \right|,\;{l \in \mathcal{F}},
\end{split}
\end{equation*}
and
$$\left|{{x_l}}\right|\le\mathop{\max}\limits_{k\in\mathcal{L}}\left|{{x_k}}\right|,t\ge {T_1},l\in \mathcal{F}.$$
Therefore, prescribed-time bipartite containment is reached for the CAN (\ref{eq3}) within the pre-specified finite time $T_1$.

2) Since ${\mathscr{G}_k}, {1 \le k \le m} $ are structurally unbalanced, it follows that ${N_k} > 1, 1 \le k \le m$. In virtue of Lemma \ref{lemma3}, we know that there exist positive diagonal matrices ${{\it{\Theta}}_{Lk}}$ so that ${\tilde{\mathscr{L}}_{Lk}} \buildrel \Delta \over ={{\it{\Theta}}_{Lk}}{\mathscr{L}_{Lk}} + \mathscr{L}_{Lk}^{\rm{T}}{\it{\Theta}_{Lk}} \succ 0, {1 \le k \le m}$. By Lemma \ref{lemma9}, we have ${\tilde{\mathscr{L}}_{F}} \buildrel \Delta \over ={{\it{\Theta}}_{F}}{\mathscr{L}_{F}} + \mathscr{L}_{F}^{\rm{T}}{{\it{\Theta}}_{F}} \succ 0$. Denote ${\it{\Theta}} = {\rm{diag}}\{{{\it{\Theta}}_{L1}}, \ldots, {\it{\Theta}_{Lm}},\tilde{\rho}{{\it{\Theta}}_{{F}}}\}$, where $0 < \tilde{\rho} < \frac{{{\lambda_{{\min}}}\left({{\rm{diag}}\left\{ {{\tilde{\mathscr{L}}_{L1}}, \ldots ,{{\tilde{\mathscr{L}}_{Lm}}}} \right\}} \right)}}{{{\lambda_{{\max}}}\left( {\mathscr{L}_{FL}^{\rm{T}}{{\it{\Theta}}_{F}}{\tilde{\mathscr{L}}_{F}}^{-1}{{\it{\Theta}}_{F}}{\mathscr{L}_{FL}}} \right)}}$. Further, let ${W} = {{\it{\Theta}}}{\mathscr{L}} + \mathscr{L}^{\rm{T}}{{\it{\Theta}}}$. Since $\tilde{\rho}{\tilde{\mathscr{L}}_{F}}\succ 0$ and ${\rm{diag}}\left\{{{\tilde{\mathscr{L}}_{L1}}, \ldots ,{\tilde{\mathscr{L}}_{Lm}}} \right\} - \tilde{\rho} \mathscr{L}_{FL}^{\rm{T}}{{\it{\Theta}}_{F}}\tilde{\mathscr{L}}_{F}^{-1}{{\it{\Theta}}_{F}}{\mathscr{L}_{FL}} \succ 0$, we have ${W}\succ0$ by Lemma \ref{lemma5}.
Consider the following Lyapunov function candidate:
$${\tilde{V_3}} = {E^{\rm{T}}}{\it{\Theta}} E.$$
Its time derivative satisfies
\begin{equation*}
\begin{split}
{\dot{\tilde{V}}_3} &= 2{E^{\rm{T}}}{\it{\Theta}}\dot E\\
& = - \left({{\rho_1}  + {\rho_2} \frac{{\dot \varphi\left(t, T_1\right) }}{\varphi\left(t, T_1\right) }} \right){E^{\rm{T}}}\left( {{\it{\Theta}}\mathscr{L} + {\mathscr{L}^{\rm{T}}}{\it{\Theta}}} \right)E\\
& =  - \left( {{\rho_1}  + {\rho_2} \frac{{\dot \varphi\left(t, T_1\right) }}{\varphi\left(t, T_1\right)}} \right){E^{\rm{T}}}{W}E\\
& \le  - {\lambda_{{\min}}}\left({W}\right)\left( {{\rho_1} + {\rho_2} \frac{{\dot \varphi\left(t, T_1\right)}}{\varphi\left(t, T_1\right)}} \right){E^{\rm{T}}}E \\
& \le - {\rho_1} \frac{{{\lambda_{{\min}}}\left({W}\right)}}{{\lambda_{{\max}}}\left( {{\it{\Theta}}} \right)} {{\tilde V}_3} - {\rho_2}\frac{{{\lambda_{{\min}}}\left({W}\right)}}{{{\lambda_{{\max}}}\left({\it{\Theta}} \right)}} \frac{{\dot \varphi\left(t, T_1\right)}}{\varphi\left(t, T_1\right)}{{\tilde V}_3},
\end{split}
\end{equation*}
which, by Lemma \ref{lemma6}, yields
\begin{equation*}
{\tilde V_3}\left( t \right)\left\{ \begin{array}{l}
 \le \varphi\left(t, T_1\right)^{-{\rho_2} \frac{{\lambda_{{\min}}}\left({W}\right)}{{\lambda_{{\max}}}\left({\it{\Theta}} \right)}}{\exp ^{-{\rho_1}{\frac{{{\lambda_{{\min}}}\left({W}\right)}}{{{\lambda_{{\max}}}\left({\it{\Theta}}\right)}}}t}}{{\tilde V}_3}\left( 0 \right),{\rm{ }}t \in \left[ {0,T_1} \right)\\
 \equiv 0,  \;\;\;\;\;\;\;\;\;\;\;\;\;\;\;\;\;\;\;\;\;\;\;\;\;\;\;\;\;\;\; t \in \left[ {T_1, + \infty } \right).
\end{array} \right.
\end{equation*}
This in turn implies that $\left\|E\right\|^2 \le \frac{{{\lambda_{{\max}}}\left({\it{\Theta}} \right)}}{{{\lambda_{{\min}}}\left({\it{\Theta}}\right)}}{\varphi\left(t, T_1\right)^{-{\rho_2} \frac{{{\lambda_{{\min}}}\left({W}\right)}}{{{\lambda_{{\max}}}\left( {\it{\Theta}}\right)}}}}{\exp ^{-{\rho_1} \frac{{{\lambda_{{\min}}}\left({W}\right)}}{{{\lambda_{{\max}}}\left({\it{\Theta}}\right)}}t}}\left\| {E\left( 0 \right)} \right\|^2$ on $\left[{0,T_1} \right)$, and $E \equiv 0$ on $\left[ {T_1, + \infty }\right)$. Furthermore, noting that ${\lim_{t \to {T_1^-}}}\varphi\left(t, T_1\right)^{-{\rho_2}\frac{{\lambda_{{\min}}}\left({W}\right)} {{\lambda_{{\max}}}\left( {\it{\Theta}}\right)}} = 0$, we have ${\lim_{t\to {T_1^-}}}{\left\|E \right\|^2} = 0$, and hence ${\lim_{t\to {T_1^-}}}{E} = 0$. Thus, \eqref{eq12} holds. By Lemmas \ref{lemma3} and \ref{lemma9}, we know that matrices $\mathscr{L}_{Lk}$, ${1 \le k \le m}$ and $\mathscr{L}_{F}$ are nonsingular. Then, using \eqref{eq30} and \eqref{eq31}, we have
\begin{equation*}
\begin{split}
&{X_{Lk}} = - \mathscr{L}_{Lk}^{-1}{E_{Lk}}, {1 \le k \le m} \\
&{X_F}= -\mathscr{L}_F^{-1}{E_F}-\mathscr{L}_F^{-1}\sum\nolimits_{k = 1}^m{{\mathscr{L}_{FLk}}{X_{Lk}}}.
\end{split}
\end{equation*}
This, together with \eqref{eq12}, implies that ${\lim_{t\to {T_1}}}{X} = 0$ and $X = 0,\forall t\geq T_1$. The proof is completed.
\end{proof}

In the special case where $\mathscr{G}$ has no negative edge weights, the CSCs of $\mathscr{G}$ are all structurally balanced. In view of the above Theorem \ref{thm3}, the following result can be given.

\begin{corollary}\label{corol5}
Consider the nominal CAN \eqref{eq3} under the control protocol \eqref{eq4}. Let $\mathscr{G}$ be weakly connected and all of its edge weights are positive. Then, the leaders in every CSC reach prescribed-time consensus and the followers converge towards the convex hull spanned by all the leaders' states within the preset finite time $T_1$.
\end{corollary}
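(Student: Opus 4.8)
The plan is to obtain Corollary~\ref{corol5} as a direct specialization of the first conclusion of Theorem~\ref{thm3}, using that positivity of the edge weights forces every CSC of $\mathscr{G}$ to be (trivially) structurally balanced. First I would note that $w_{kl}\ge 0$ for all $k,l$ gives $\mathscr{L}=\mathscr{M}(\mathscr{L})$, hence $\mathscr{L}_{Lk}=\mathscr{M}(\mathscr{L}_{Lk})$ for each $k$, so by Lemma~\ref{lemma1} one may take $G_k=\mathbf{G}_k=I_{N_k}$ for every $k$; in particular $\mathscr{G}$ has $m>1$ CSCs, all structurally balanced, so the hypothesis of Theorem~\ref{thm3}(1) holds and the prescribed-time containment analysis carried out there applies verbatim with every $G_k$ replaced by $I_{N_k}$.

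Specializing \eqref{eq40}--\eqref{eq41} with $G_k=I_{N_k}$, the leaders of each CSC $\mathscr{G}_k$ satisfy $\lim_{t\to T_1}X_{Lk}=1_{N_k}x_k^{*}$ and $X_{Lk}=1_{N_k}x_k^{*}$ for all $t\ge T_1$, i.e. the leaders in every CSC reach ordinary prescribed-time consensus (the bipartite qualifier in Definition~\ref{defn5} collapsing to a single common value $x_k^{*}$ because $G_k=I_{N_k}$). For the followers, \eqref{eq46} with $\mathbf{G}_k=I_{N_k}$ yields $\lim_{t\to T_1}X_F=-\mathscr{L}_F^{-1}\sum_{k=1}^{m}\mathscr{L}_{FLk}1_{N_k}x_k^{*}$, and the same identity for all $t\ge T_1$, so it only remains to check that $x_l$ is, for every follower $l\in\mathcal{F}$, a convex combination of $x_1^{*},\ldots,x_m^{*}$.

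To verify this I would argue as follows. Since $w_{kl}\ge0$, $\mathscr{L}_F=\mathscr{M}(\mathscr{L}_F)$ is a nonsingular $M$-matrix by Lemma~\ref{lemma9} (equivalently Lemma~\ref{lemma7}), so $\mathscr{L}_F^{-1}$ has nonnegative entries, while each block $\mathscr{L}_{FLk}$ has nonpositive entries; hence $\varpi_k:=-\mathscr{L}_F^{-1}\mathscr{L}_{FLk}1_{N_k}\ge 0$ componentwise. On the other hand, $\mathscr{L}=\mathscr{M}(\mathscr{L})$ together with $\mathscr{M}(\mathscr{L})1_N=0$ gives, via the block form \eqref{eq29}, $\mathscr{L}_F1_{N-K}+\sum_{k=1}^{m}\mathscr{L}_{FLk}1_{N_k}=0$, i.e. $\sum_{k=1}^{m}\varpi_k=1_{N-K}$ (which is Lemma~\ref{lemma9}(2) realized with equality). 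Combining these, $\lim_{t\to T_1}x_l=\sum_{k=1}^{m}\varpi_{lk}x_k^{*}$ with $\varpi_{lk}\ge0$ and $\sum_k\varpi_{lk}=1$, and likewise for all $t\ge T_1$, so each follower state lies in the convex hull of $\{x_k^{*}:k\in\mathcal{L}\}$ within the preset time $T_1$, whence the containment claim (and in particular the inequality of Definition~\ref{defn5}) follows a fortiori. The only mildly delicate point is the entrywise nonnegativity of $\mathscr{L}_F^{-1}$, which is precisely the nonsingular-$M$-matrix property already recorded in Lemmas~\ref{lemma7} and~\ref{lemma9}; everything else is a routine restriction of the proof of Theorem~\ref{thm3} to the positive-weight case.
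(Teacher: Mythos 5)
Your proposal is correct and follows the same route the paper intends: positivity of the weights makes every CSC (trivially) structurally balanced, so Corollary~\ref{corol5} is read off from the first conclusion of Theorem~\ref{thm3} with every $G_k=\mathbf{G}_k=I_{N_k}$. The paper in fact records no proof beyond this observation, and here your write-up adds something genuinely worth having: Theorem~\ref{thm3}(1) only delivers the symmetric bound $\lim_{t\to T_1}|x_l|\le\max_{k\in\mathcal L}|x_k|$, which is weaker than membership in the convex hull of the leaders' actual states as asserted in the corollary. Your verification that $\varpi_k=-\mathscr{L}_F^{-1}\mathscr{L}_{FLk}1_{N_k}\ge 0$ entrywise (via the nonnegativity of the inverse of the nonsingular $M$-matrix $\mathscr{L}_F=\mathscr{M}(\mathscr{L}_F)$, cf.\ Lemmas~\ref{lemma7} and~\ref{lemma9}) together with $\sum_{k=1}^{m}\varpi_k=1_{N-K}$ (from $\mathscr{M}(\mathscr{L})1_N=0$ and the block form \eqref{eq29}) shows each follower limit $\sum_k\varpi_{lk}x_k^{*}$ is a genuine convex combination, which is exactly the missing step; the rest is the routine specialization of \eqref{eq40}--\eqref{eq46}. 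I find no gap in your argument.
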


\begin{remark}
Note that if every CSC of $\mathscr{G}$ has exactly one node, the weak connectivity topology condition in Corollary \ref{corol5} collapses into the classical spanning forest condition commonly employed in the containment control literature. Moreover, the result of Corollary \ref{corol5} contains as a special case the standard prescribed-time containment control result derived in Reference \cite{wang2018prescribed} for first-order traditional networks under unsigned digraph topology which contains a spanning forest.
\end{remark}

\subsection{Prescribed-time coordination control with disturbances}
In this subsection, a novel prescribed-time sliding mode control protocol is developed to settle the robust PTCC problems of the CAN \eqref{eq2}. The following sliding variables are proposed:
\begin{equation}\label{eq47}
\begin{split}
{\sigma_k} &= {x_k} + {\varsigma_k},\\
{{\dot \varsigma}_k} & = - u_k^{nom},\;\;\;\; k \in {\mathscr{I}_N},
\end{split}
\end{equation}
with $u_k^{nom} = \left( {{\rho_1}  + {\rho_2} \frac{{\dot \varphi\left(t, T\right) }}{\varphi\left(t,T\right) }} \right)\sum\nolimits_{l = 1}^N {{w_{kl}}\left[ {{x_l} - {\rm{sign}}\left( {{w_{kl}}} \right){x_k}} \right]}$, where $T= T_r + T_s$ with $T_r>0$ and $T_s>0$, and other variables are as stated previously. The sliding mode control protocol is designed as
\begin{equation}\label{eq48}
{u_k} = u_k^{dis} + u_k^{nom},\;\;\;\; k \in {\mathscr{I}_N},
\end{equation}
where $u_k^{dis}  = - {\mu_1}{\rm{sign}}\left( {{\sigma_k}} \right) - \left( {{\mu_2}  + {\mu_3} \frac{{\dot \varphi\left(t, T\right) }}{\varphi\left(t, T\right)}}\right){\sigma_k}$ with $T = {T_r}$, ${\mu_1}> \delta$, ${\mu_2}>0$, and ${\mu_3}>0$.

\begin{remark}
The first component of $u_k^{dis}$ is designed for compensating the bounded perturbation ${d_k}$, while the second term $-\left( {{\mu_2}  + {\mu_3} \frac{{\dot \varphi\left(t, T_r\right) }}{\varphi\left(t, T_r\right) }} \right){\sigma_k}$ is responsible for driving the sliding variables to zero in the prescribed finite time $T_r$.
\end{remark}

The following proposition shows that the sliding mode control protocol \eqref{eq48} can drive the states of all agents in \eqref{eq2} with any initial values towards the sliding manifold $\mathcal{S}\triangleq \left\{{\left( {{x_1}, \ldots ,{x_N}} \right)\left| {{\sigma_k} = 0}, k \in {\mathscr{I}_N}\right.} \right\}$ in the prescribed finite time $T_r$ and keeps them on it thereafter in spite of the disturbances $d_k, k \in {\mathscr{I}_N}$.

\begin{proposition}\label{prop1}
Consider the CAN \eqref{eq2} under the sliding mode control protocol \eqref{eq48} with its sliding variables as designed in \eqref{eq47}. Then, the closed-loop CAN \eqref{eq2} can reach the sliding manifold $\mathcal{S}$ in the prescribed finite time $T_r$ for arbitrary initial conditions.
\end{proposition}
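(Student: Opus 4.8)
The plan is to reduce the reaching problem to a scalar prescribed-time stable differential inequality and then close it with Lemma~\ref{lemma6}. The first step is to compute the dynamics of the sliding variables along the closed loop. From \eqref{eq2} and \eqref{eq48} one has $\dot x_k = u_k^{dis} + u_k^{nom} + d_k$, while $\dot\varsigma_k = -u_k^{nom}$ by \eqref{eq47}, so the nominal component cancels exactly and
\begin{equation*}
\dot\sigma_k = \dot x_k + \dot\varsigma_k = u_k^{dis} + d_k = -\mu_1\,{\rm{sign}}\left(\sigma_k\right) - \left(\mu_2 + \mu_3\frac{\dot\varphi\left(t,T_r\right)}{\varphi\left(t,T_r\right)}\right)\sigma_k + d_k,\qquad k\in\mathscr{I}_N.
\end{equation*}
This exact cancellation, which is the whole purpose of the integral-type sliding variable in \eqref{eq47}, decouples the reaching dynamics across the agents and leaves them governed only by $u_k^{dis}$ and the bounded disturbance $d_k$.

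Next I would take the smooth positive definite Lyapunov function $V_\sigma = \tfrac{1}{2}\sum_{k=1}^N\sigma_k^2$ and differentiate it along the above equation. Using ${\sigma_k}\,{\rm{sign}}\left(\sigma_k\right) = \left|\sigma_k\right|$, the bound $\left|d_k\right|\le\delta$, and the design condition $\mu_1>\delta$, and then discarding the nonpositive quantity $-\left(\mu_1-\delta\right)\sum_{k=1}^N\left|\sigma_k\right|$, one obtains
\begin{equation*}
\dot V_\sigma \le -\left(\mu_2 + \mu_3\frac{\dot\varphi\left(t,T_r\right)}{\varphi\left(t,T_r\right)}\right)\sum_{k=1}^N\sigma_k^2 = -2\mu_2 V_\sigma - 2\mu_3\frac{\dot\varphi\left(t,T_r\right)}{\varphi\left(t,T_r\right)}V_\sigma,\qquad \forall t\ge 0.
\end{equation*}
Applying Lemma~\ref{lemma6} with $a=2\mu_2>0$, $b=2\mu_3>0$, and settling time $T_r$ then yields $V_\sigma\left(t\right)\le\varphi\left(t,T_r\right)^{-2\mu_3}\exp^{-2\mu_2 t}V_\sigma\left(0\right)$ on $\left[0,T_r\right)$ and $V_\sigma\left(t\right)\equiv 0$ on $\left[T_r,+\infty\right)$. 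Since $\varphi\left(t,T_r\right)^{-2\mu_3}\to 0$ as $t\to T_r^-$, it follows that $\lim_{t\to T_r}\sigma_k=0$ and $\sigma_k=0$ for all $t\ge T_r$ and all $k\in\mathscr{I}_N$; hence the closed-loop CAN \eqref{eq2} reaches $\mathcal{S}$ at the preassigned time $T_r$ and remains on it thereafter, for every initial value $x_k\left(0\right)$ and $\varsigma_k\left(0\right)$. Because $T_r$ enters the reaching dynamics only through $\varphi\left(\cdot,T_r\right)$ and is picked a priori, the reaching time depends neither on the initial conditions nor on the other design parameters nor on the interaction topology.

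The only genuinely delicate point is the well-posedness of the closed-loop solutions and the precise meaning of $\dot V_\sigma$, since the right-hand side of the $\sigma_k$-dynamics is discontinuous (the ${\rm{sign}}$ term) while the gain $\dot\varphi\left(t,T_r\right)/\varphi\left(t,T_r\right)$ grows without bound as $t\to T_r^-$. I would handle solutions in the Filippov sense, exactly along the lines of the proof of Lemma~1 in \cite{wang2018prescribed} on which Lemma~\ref{lemma6} is patterned, and note that $V_\sigma$ is absolutely continuous along such solutions so that the estimate above holds for almost all $t$; I would also observe that for the Filippov set-valued extension one still has $\sigma_k\,{\rm{sign}}\left(\sigma_k\right)\ge 0$ (it equals $\left|\sigma_k\right|$ when $\sigma_k\ne 0$ and is $\{0\}$ when $\sigma_k=0$), so the discontinuous term can only reinforce the decay of $V_\sigma$ and introduces no difficulty. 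This is the main step to argue carefully; the rest is routine bookkeeping, and I expect the write-up to be short.
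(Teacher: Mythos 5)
Your proposal is correct and follows essentially the same route as the paper's proof: derive $\dot\sigma_k = -\mu_1\,{\rm{sign}}(\sigma_k) - (\mu_2+\mu_3\dot\varphi/\varphi)\sigma_k + d_k$, take $V=\tfrac12\sigma^{\rm{T}}\sigma$, use $\mu_1>\delta$ to discard the $-(\mu_1-\delta)\sum_k|\sigma_k|$ term, and close with Lemma~\ref{lemma6}. Your added care about Filippov solutions and the set-valued ${\rm{sign}}$ at $\sigma_k=0$ goes beyond what the paper writes but does not change the argument.
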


\begin{proof}
Using \eqref{eq2}, \eqref{eq47}, and \eqref{eq48}, one can obtain
\begin{equation}\label{eq49}
{\dot \sigma_k} =  - {\mu_1}{\rm{sign}}\left( {{\sigma_k}} \right) - \left( {{\mu_2}  + {\mu_3} \frac{{\dot \varphi\left(t, T_r\right) }}{\varphi\left(t, T_r\right) }} \right){\sigma_k} + {d_k},\;\; k \in {\mathscr{I}_N}.
\end{equation}
Select the Lyapunov function candidate as $V = \frac{1}{2}{\sigma^{\rm{T}}}\sigma$, where $\sigma = {\left[ {{\sigma_1}, \ldots ,{\sigma_N}} \right]^{\rm{T}}}$. In view of ${\mu_1}> \delta $ and \eqref{eq49}, we have
\begin{equation}\label{eq50}
\begin{split}
\dot{V} = & {\sigma^{\rm{T}}}\dot \sigma\\
= & - {\mu_1}\sum\nolimits_{k = 1}^N {{\sigma_k}{\rm{sign}}\left( {{\sigma_k}} \right)}  + \sum\nolimits_{k = 1}^N {{\sigma_k}{d_k}} \\
&- \left( {{\mu_2}  + {\mu_3} \frac{{\dot \varphi\left(t, T_r\right) }}{\varphi\left(t, T_r\right) }} \right)\sum\nolimits_{k = 1}^N {{\sigma_k^2}} \\
= & - \left( {{\mu_1} - \delta } \right){\sum\nolimits_{k = 1}^N {\left| {{\sigma_k}} \right|}} - 2\left( {{\mu_2}  + {\mu_3} \frac{{\dot \varphi\left(t, T_r\right) }}{\varphi\left(t, T_r\right) }} \right){\sigma^{\rm{T}}}\sigma\\
\le &- 2{\mu_2} V - 2{\mu_3} \frac{{\dot \varphi\left(t, T_r\right) }}{\varphi\left(t, T_r\right) }V,
\end{split}
\end{equation}
which, by Lemma \ref{lemma6}, yields
\begin{equation*}
{V}\left( t \right)\left\{ \begin{array}{l}
 \le {{\varphi\left(t, T_r\right)}^{-2{\mu_3}}}{\mathrm{exp}^{- 2{\mu_2} t}}V\left(0 \right),{\rm{ }}t \in \left[ {0,T_r} \right)\\
 \equiv 0,  \;\;\;\;\;\;\;\;\;\;\;\;\;\;\;\;\;\;\;\;\;\;\;\; t \in \left[ {T_r, + \infty } \right).
\end{array} \right.
\end{equation*}
This implies that $\left\|\sigma\right\| \le {\varphi\left(t, T_r\right)^{-{\mu_3}}}{\exp ^{-{\mu_2} t}}\left\| {\sigma\left( 0 \right)} \right\|$ on $\left[{0,T_r} \right)$, and $\sigma \equiv 0$ on $\left[ {T_r, + \infty }\right)$. By noting that ${\lim_{t \to {T_r^-}}}\varphi\left(t, T_r\right)^{-{\mu_3}} = 0$, we then have ${\lim_{t\to {T_r^-}}}{\left\|\sigma \right\|} = 0$, which gives
${\lim_{t\to {T_r^-}}}{\sigma} = 0$. It
thus follows that ${\lim_{t\to {T_r}}}{\sigma} = 0$ and $\sigma \equiv 0,\forall t\geq T_r $. Therefore, the states of agents reach the sliding manifold $\mathcal{S}$ in the prescribed finite time $T_r$ and remain on the sliding manifold thereafter. The proof is complete.
\end{proof}

From the above Proposition \ref{prop1}, we can obtain ${\sigma}= {\dot{\sigma}} = 0, \forall t \geq T_r$. This, together with \eqref{eq47}, obviously implies
\begin{equation*}
{{\dot{x}}_k} = \left( {{\rho_1} + {\rho_2}\frac{{\dot \varphi\left(t, T_r + T_s\right) }}{\varphi\left(t, T_r + T_s\right) }} \right)\sum\limits_{l \in {{\mathcal{N}}_k}} {{w_{kl}}\left[ {{x_l} - {\rm{sign}}\left( {{w_{kl}}} \right){x_k}} \right]},
\end{equation*}
$k\in {\mathscr{I}_N}$ for all $t \geq T_r$. With this fact and based on Theorems \ref{thm1}-\ref{thm3}, one can readily derive the following results.

\begin{theorem}\label{thm4}
Consider the CAN \eqref{eq2} under the sliding mode control protocol \eqref{eq48} with its sliding variables as designed in \eqref{eq47}, and let $\mathscr{G}$ be strongly connected. Then, the CAN \eqref{eq2} reaches
\begin{enumerate}
  \item prescribed-time bipartite consensus in the pre-specified finite time ${T_r} + {T_s}$, if $\mathscr{G}$ is structurally balanced.
  \item  prescribed-time stability in the pre-specified finite time ${T_r} + {T_s}$, if $\mathscr{G}$ is structurally unbalanced.
\end{enumerate}
\end{theorem}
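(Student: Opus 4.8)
The plan is to combine Proposition~\ref{prop1} with the arguments already developed for Theorem~\ref{thm1}. By Proposition~\ref{prop1} the closed-loop CAN \eqref{eq2} reaches the sliding manifold $\mathcal{S}$ at $t=T_r$ for arbitrary initial states and any admissible disturbance, and $x(T_r)$ is a well-defined finite vector; moreover, as recorded just above the statement, on $\mathcal{S}$ one has $\sigma=\dot\sigma=0$ and hence, for all $t\ge T_r$,
\[
{\dot x}_k=\Bigl(\rho_1+\rho_2\tfrac{\dot\varphi(t,T_r+T_s)}{\varphi(t,T_r+T_s)}\Bigr)\sum_{l\in{\mathcal{N}}_k}w_{kl}\bigl[x_l-{\rm{sign}}(w_{kl})x_k\bigr],\quad k\in\mathscr{I}_N,
\]
which is exactly the nominal closed loop \eqref{eq7} with $T_1$ replaced by $T_r+T_s$, the only difference being that this identity is valid on $[T_r,+\infty)$ rather than on $[0,+\infty)$. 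Since $T_r+T_s$ --- the prescribed settling time --- is precisely the blow-up instant of $\varphi(\cdot,T_r+T_s)$, it suffices to re-run the Lyapunov analysis of Theorem~\ref{thm1} starting from $t=T_r$.

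For item~1, I would take $G$, $p$, $P$, $\epsilon$ and $\bar V_1=\epsilon^{\rm T}P\epsilon$ exactly as in the proof of Theorem~\ref{thm1}. For $t\ge T_r$ one still has $p^{\rm T}G\dot X=0$, so $\dot\epsilon=\dot X$ and $\mathscr{L}\epsilon=\mathscr{L}X$, and the computation \eqref{eq9}--\eqref{eq10} reproduces verbatim
\[
\dot{\bar V}_1\le -2\rho_1 a(\mathscr{L})\bar V_1-2\rho_2 a(\mathscr{L})\tfrac{\dot\varphi(t,T_r+T_s)}{\varphi(t,T_r+T_s)}\bar V_1,\qquad t\ge T_r.
\]
Applying the (time-shifted) conclusion of Lemma~\ref{lemma6} on $[T_r,+\infty)$ then gives $\bar V_1(t)\le(\varphi(t,T_r+T_s)/\varphi(T_r,T_r+T_s))^{-2\rho_2 a(\mathscr{L})}{\mathrm{exp}^{-2\rho_1 a(\mathscr{L})(t-T_r)}}\bar V_1(T_r)$ on $[T_r,T_r+T_s)$ and $\bar V_1\equiv 0$ on $[T_r+T_s,+\infty)$; since $\varphi(t,T_r+T_s)\to+\infty$ as $t\to(T_r+T_s)^-$, we obtain $\lim_{t\to T_r+T_s}\epsilon=0$ and $\epsilon\equiv 0$ for $t\ge T_r+T_s$. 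As $p^{\rm T}GX$ is constant on $[T_r,+\infty)$, say equal to $c$, the identity $\epsilon=X-(p^{\rm T}GX)G1_N$ forces $|x_k(t)|=|c|=:x^*$ for all $t\ge T_r+T_s$ and $\lim_{t\to T_r+T_s}|x_k(t)|=x^*$, which is prescribed-time bipartite consensus with settling time $T_r+T_s$.

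Item~2 is obtained the same way with $E=-\mathscr{L}X$, $\tilde V_1=E^{\rm T}\Omega E$ and $\Omega$ from Lemma~\ref{lemma3}: for $t\ge T_r$ one has $\dot E=-(\rho_1+\rho_2\tfrac{\dot\varphi(t,T_r+T_s)}{\varphi(t,T_r+T_s)})\mathscr{L}E$, so the estimate \eqref{eq11} yields $\dot{\tilde V}_1\le -\rho_1\tfrac{\lambda_{\min}(\tilde{\mathscr{L}})}{\lambda_{\max}(\Omega)}\tilde V_1-\rho_2\tfrac{\lambda_{\min}(\tilde{\mathscr{L}})}{\lambda_{\max}(\Omega)}\tfrac{\dot\varphi(t,T_r+T_s)}{\varphi(t,T_r+T_s)}\tilde V_1$ on $[T_r,+\infty)$; the same integration gives $\lim_{t\to T_r+T_s}E=0$ and $E\equiv 0$ for $t\ge T_r+T_s$, and nonsingularity of $\mathscr{L}$ (Lemma~\ref{lemma2}) converts this into $\lim_{t\to T_r+T_s}X=0$ and $X\equiv 0$ for $t\ge T_r+T_s$, i.e.\ prescribed-time stability.

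The one genuinely new point --- and hence the step to handle with care --- is the transition from Lemma~\ref{lemma6}, which is stated for differential inequalities holding on all of $[0,+\infty)$, to the present case where the inequality holds only on $[T_r,+\infty)$. This is settled by a straightforward time translation: the comparison argument underlying Lemma~\ref{lemma6} goes through on $[T_r,T_r+T_s)$ with initial instant $T_r$ and the finite positive factor $\varphi(T_r,T_r+T_s)$, while on $[T_r+T_s,+\infty)$ one has $\dot\varphi\equiv 0$, so $\dot V\le -aV$ keeps $V$ --- already zero at $t=T_r+T_s$ by continuity of the state --- identically zero. It is worth remarking explicitly that the resulting settling time $T_r+T_s$ is independent of the initial states and of the disturbance realization, because the analysis on $[T_r,T_r+T_s)$ does not use the particular value of $x(T_r)$ beyond its finiteness.
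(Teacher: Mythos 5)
Your proposal is correct and follows essentially the same route as the paper, which simply observes (in the paragraph preceding the theorem) that Proposition~\ref{prop1} forces $\sigma=\dot\sigma=0$ for $t\ge T_r$, so the closed loop reduces to the nominal dynamics with prescribed time $T_r+T_s$, and then appeals to Theorem~\ref{thm1}. You supply more detail than the paper does --- in particular the time-shifted version of Lemma~\ref{lemma6} on $[T_r,+\infty)$ with the finite factor $\varphi(T_r,T_r+T_s)$ in place of $\varphi(0,\cdot)=1$ --- which is exactly the point the paper glosses over with ``one can readily derive.''
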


\begin{remark}
Notice that Theorems \ref{thm1} and \ref{thm4} are valid for any strongly connected signed digraph $\mathscr{G}$, irrespective of whether it is sign-symmetric or not.
\end{remark}

\begin{theorem}\label{thm5}
Consider the CAN \eqref{eq2} under the sliding mode control protocol \eqref{eq48} with its sliding variables as designed in \eqref{eq47}, and let $\mathscr{G}$ be quasi-strongly connected. Then, the CAN \eqref{eq2}  accomplishes
\begin{enumerate}
  \item  prescribed-time interval bipartite consensus in the pre-specified finite time ${T_r} + {T_s}$, if the subgraph $\mathscr{G}_{L}$ of $\mathscr{G}$ is structurally balanced.
  \item prescribed-time stability in the pre-specified finite time ${T_r} + {T_s}$, if the subgraph $\mathscr{G}_{L}$ of $\mathscr{G}$ is structurally unbalanced.
\end{enumerate}
\end{theorem}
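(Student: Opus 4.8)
The plan is to combine Proposition~\ref{prop1} with the nominal analysis of Theorem~\ref{thm2}, exploiting the separation of time scales induced by the choice $T = T_r + T_s$ in the sliding variables \eqref{eq47}. First I would invoke Proposition~\ref{prop1} to conclude that $\sigma(t)\equiv 0$, hence $\dot\sigma(t)\equiv 0$, for all $t\ge T_r$, irrespective of the disturbances $d_k$ and of the initial condition. Substituting $\dot\varsigma_k = -u_k^{nom}$ from \eqref{eq47} into $\dot\sigma_k = \dot x_k + \dot\varsigma_k = 0$ then gives $\dot x_k = u_k^{nom}$ for $t\ge T_r$, i.e., in compact form $\dot X = -\big(\rho_1 + \rho_2\,\dot\varphi(t,T_r+T_s)/\varphi(t,T_r+T_s)\big)\mathscr{L}X$ on $[T_r,+\infty)$, which is precisely the nominal closed-loop dynamics \eqref{eq7} but driven by the scaling function that blows up at $t=T_r+T_s$ instead of at $T_1$. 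Since the right-hand side of \eqref{eq48} is bounded on every compact subinterval of $[0,T_r]$ (the sign terms are bounded and $\dot\varphi/\varphi$ is finite there because $T_r<T_r+T_s$), Filippov solutions exist on $[0,T_r]$ and $X(T_r)$ is a well-defined finite vector; crucially, the nominal results of Theorem~\ref{thm2} hold for \emph{arbitrary} initial data, so nothing about $X(T_r)$ needs to be controlled.

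Next I would replay the Lyapunov computations of Theorem~\ref{thm2} verbatim on the interval $[T_r,T_r+T_s)$. In the case where $\mathscr{G}_L$ is structurally balanced, set $E=-\mathscr{L}X$, with $E_L,E_F$ as in \eqref{eq14}--\eqref{eq15}, form $\bar V_2 = E^{\rm{T}}{\it{\Xi}}E$ with ${\it{\Xi}}$ the block-diagonal matrix built from Lemmas~\ref{lemma3} and \ref{lemma7}, and obtain exactly as before $\dot{\bar V}_2 \le -\rho_1 c\,\bar V_2 - \rho_2 c\,\big(\dot\varphi(t,T_r+T_s)/\varphi(t,T_r+T_s)\big)\bar V_2$ with $c=\lambda_{\min}({\it{\Phi}})/\lambda_{\max}({\it{\Xi}})>0$, valid for all $t\ge T_r$. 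The only new ingredient is a shifted version of Lemma~\ref{lemma6}: integrating the differential inequality from $t_0=T_r$ rather than from $0$ yields $\bar V_2(t)\le \big(\varphi(T_r,T_r+T_s)/\varphi(t,T_r+T_s)\big)^{c\rho_2}\exp\big(-c\rho_1(t-T_r)\big)\bar V_2(T_r)$ on $[T_r,T_r+T_s)$, and since $\varphi(t,T_r+T_s)\to\infty$ as $t\to(T_r+T_s)^-$ we get $\bar V_2\to 0$, hence $E\to 0$, and $E\equiv 0$ on $[T_r+T_s,+\infty)$; this shifted lemma is proved by exactly the argument of Lemma~\ref{lemma6}. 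From $E_L=E_F=0$ for $t\ge T_r+T_s$ I would then reproduce the last part of Theorem~\ref{thm2}: the null space of $\mathscr{L}_L$ being spanned by $G_L 1_K$ (Lemma~\ref{lemma2}) gives $X_L = G_L 1_K c$, so $|x_k|=|c|$ for $k\in\mathcal{L}$; nonsingularity of $\mathscr{L}_F$ (Lemma~\ref{lemma7}) gives $X_F = -\mathscr{L}_F^{-1}\mathscr{L}_{FL}G_L 1_K c$, and $|\zeta|\le 1_{N-K}$ from Lemma~\ref{lemma8} yields $|x_l|\le|c|$ for $l\in\mathcal{F}$; the trivial sub-case $K=1$ is handled exactly as in Theorem~\ref{thm2}. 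This establishes prescribed-time interval bipartite consensus with settling time $T_r+T_s$. The case where $\mathscr{G}_L$ is structurally unbalanced is completely parallel, using the Lyapunov function $\tilde V_2 = E^{\rm{T}}\tilde{\it{\Xi}}E$ of Theorem~\ref{thm2}, the shifted Lemma~\ref{lemma6} to force $E\to 0$ by $t=T_r+T_s$, and nonsingularity of both $\mathscr{L}_L$ and $\mathscr{L}_F$ to conclude $X\to 0$, i.e., prescribed-time stability.

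The main obstacle is purely one of bookkeeping, and a mild one: one must verify that all estimates of Theorem~\ref{thm2}, originally derived on $[0,T_1)$, carry over unchanged to the subinterval $[T_r,T_r+T_s)$ with the shifted scaling function $\varphi(\cdot,T_r+T_s)$, and that the ``initial time'' in the shifted form of Lemma~\ref{lemma6} is $T_r$ rather than $0$. There is also the minor technical point of solution existence for the discontinuous right-hand side \eqref{eq48} on $[0,T_r]$, which is standard in the Filippov sense and does not affect the prescribed-time conclusion, since convergence of $X$ is claimed only from $t=T_r$ onward.
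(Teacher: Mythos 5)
Your argument is exactly the paper's: invoke Proposition~\ref{prop1} to reach the sliding manifold by $T_r$, observe that on the manifold the dynamics reduce to the nominal closed loop \eqref{eq7} driven by $\varphi(\cdot,T_r+T_s)$, and then rerun the analysis of Theorem~\ref{thm2} from the new initial time $T_r$. The paper states this reduction in one line and declares the result ``readily derived,'' so your fleshed-out version (the shifted form of Lemma~\ref{lemma6} starting at $t_0=T_r$, and the remark on well-posedness of the discontinuous closed loop on $[0,T_r]$) is a faithful and if anything more careful rendering of the intended proof.
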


\begin{remark}
Notice that Theorems \ref{thm2} and \ref{thm5} hold for any quasi-strongly signed digraph $\mathscr{G}$, irrespective of its sign patterns. It is worth noticing also that with the quasi-strong connectivity of $\mathscr{G}$, Theorems \ref{thm2} and \ref{thm5} hold irrespective of whether $\mathscr{G}_{F}$ is structurally unbalanced or balanced and can fulfill any connectivity condition or not.
\end{remark}

\begin{theorem}\label{thm6}
Consider the CAN \eqref{eq2} under the sliding mode control protocol \eqref{eq48} with its sliding variables as designed in \eqref{eq47}, and let $\mathscr{G}$ be weakly connected. Then, the CAN \eqref{eq2} achieves
\begin{enumerate}
  \item prescribed-time bipartite containment in the pre-specified finite time ${T_r} + {T_s}$, if at least one CSC of $\mathscr{G}$ is structurally balanced.
  \item prescribed-time stability in the pre-specified finite time ${T_r} + {T_s}$, if  all the CSCs of $\mathscr{G}$ are structurally unbalanced.
\end{enumerate}
\end{theorem}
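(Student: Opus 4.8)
The plan is to split the analysis into a reaching phase and a sliding phase, and then to reuse Theorem~\ref{thm3} on a shifted time interval. First I would apply Proposition~\ref{prop1}: under the protocol \eqref{eq48} with sliding variables \eqref{eq47}, the closed-loop CAN \eqref{eq2} reaches the sliding manifold $\mathcal{S}$ in the prescribed time $T_r$ for arbitrary initial conditions, so that $\sigma_k\equiv 0$, and hence $\dot\sigma_k\equiv 0$, for all $t\ge T_r$ and all $k\in\mathscr{I}_N$, irrespective of the bounded disturbances $d_k$. Substituting $\dot\sigma_k=0$ into \eqref{eq47} shows that, for $t\ge T_r$, the states obey exactly the nominal closed-loop dynamics \eqref{eq7} with the prescribed horizon taken to be $T_1:=T_r+T_s$; that is, on $[T_r,\infty)$ the disturbed CAN \eqref{eq2} behaves like the nominal CAN \eqref{eq3} driven by the protocol \eqref{eq4}.

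Next I would rerun the Lyapunov computations from the proof of Theorem~\ref{thm3} on the interval $[T_r,T_r+T_s)$ instead of $[0,T_1)$, taking $x_k(T_r)$ as the initial data. Concretely, when $\mathscr{G}$ has at least one structurally balanced CSC I would reuse the aggregated variable $\bar E$, the block matrix $\Lambda$, and the Lyapunov function $\bar V_3=\bar E^{\rm{T}}H\bar E$ with $H,\Pi$ built from Lemmas~\ref{lemma3}, \ref{lemma5}, and \ref{lemma9} exactly as before; when all CSCs of $\mathscr{G}$ are structurally unbalanced I would reuse $\tilde V_3=E^{\rm{T}}\Theta E$ with $\Theta,W$ as before. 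In either case the same algebra yields $\dot V\le -aV-b\,\frac{\dot\varphi(t,T_r+T_s)}{\varphi(t,T_r+T_s)}\,V$ for $t\ge T_r$ with positive constants $a,b$. Because $\int_{T_r}^{t}\frac{\dot\varphi(s,T_r+T_s)}{\varphi(s,T_r+T_s)}\,ds=\ln\varphi(t,T_r+T_s)-\ln\varphi(T_r,T_r+T_s)\to+\infty$ as $t\to(T_r+T_s)^-$, integrating this inequality from $T_r$ --- a ``restarted'' instance of Lemma~\ref{lemma6}, which is valid for any starting time in $[0,T_r+T_s)$ --- gives $V(t)\to 0$ as $t\to(T_r+T_s)^-$ and $V\equiv 0$ on $[T_r+T_s,\infty)$, hence $\bar E\equiv 0$ (respectively $E_{Lk}\equiv 0$ and $E_F\equiv 0$) on $[T_r+T_s,\infty)$.

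With the error quantities identically zero after $T_r+T_s$, the remaining purely algebraic part of the proof of Theorem~\ref{thm3} carries over verbatim: using the block decomposition \eqref{eq29}, the non-singularity of $\mathscr{L}_F$ and of the structurally unbalanced blocks $\mathscr{L}_{Lk}$, the fact from Lemma~\ref{lemma2} that the null space of each structurally balanced block $\mathscr{L}_{Lk}$ is spanned by $G_k 1_{N_k}$, and the bound $\sum_{k}|\varpi_{lk}|\le 1$ of Lemma~\ref{lemma9}, one concludes that the leaders in balanced (resp. unbalanced) CSCs reach prescribed-time bipartite consensus (resp. stability) and that every follower state ends up in the symmetric convex hull of the leaders' limit values --- i.e., prescribed-time bipartite containment (resp. prescribed-time stability) is attained in the preassigned time $T_r+T_s$. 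The point that needs the most care, and the main obstacle, is justifying the restart of Lemma~\ref{lemma6} at $t=T_r$: one must check that the closed-loop trajectory is well defined and continuous up to $t=T_r$ so that $\bar V_3(T_r)$ is finite, even though both $u_k^{nom}$ and $u_k^{dis}$ contain time-varying gain terms that blow up as $t\to T_r^-$. This follows because $\varphi(\cdot,T_r+T_s)$ stays bounded on $[0,T_r]$ while, by the exponential-type decay estimate for $\sigma$ obtained in the proof of Proposition~\ref{prop1}, $u_k^{dis}$ remains integrable near $T_r$, so each $x_k$ extends continuously to $t=T_r$.
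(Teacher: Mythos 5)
Your proposal follows essentially the same route as the paper: invoke Proposition~\ref{prop1} to reach the sliding manifold at $T_r$, observe that $\dot\sigma_k\equiv 0$ reduces the closed loop to the nominal dynamics \eqref{eq7} with horizon $T_r+T_s$, and then rerun the Lyapunov argument of Theorem~\ref{thm3} on $[T_r,T_r+T_s)$. Your additional care about restarting Lemma~\ref{lemma6} at $t=T_r$ and about the integrability of $u_k^{dis}$ near $T_r$ (so that the state is finite at the start of the sliding phase) is a welcome tightening of a step the paper passes over with ``one can readily derive,'' but it does not change the underlying argument.
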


\begin{remark}
Note that Theorems \ref{thm3} and \ref{thm6} hold for any weakly connected signed digraph $\mathscr{G}$, irrespective of its sign pattern.
\end{remark}

\begin{remark}
With Theorems \ref{thm1}-\ref{thm6}, we establish general prescribed-time coordination control results for single-integrator CANs without and with external disturbances under arbitrary fixed signed digraphs. Our results significantly improve and extend the results of \cite{gong2020distributed}, where the problems of prescribed-time bipartite consensus and interval bipartite consensus for nominal single-integrator CANs over sign-symmetric signed digraphs were investigated, in four aspects: 1) the sign-symmetry requirement upon signed digraphs is relaxed; 2) the limitation that the design of control protocols relies upon the global information of the interaction topology is removed; 3) the prescribed-time stability and bipartite containment control problems are addressed; and 4) the external disturbances are taken into account.
\end{remark}

Parallel to Corollaries \ref{corol1}-\ref{corol5}, we have

\begin{corollary}\label{corol6}
Consider the CAN \eqref{eq2} under the sliding mode control protocol \eqref{eq48} with its sliding variables as designed in \eqref{eq47}, and let $\mathscr{G}$ be undirected and connected. Then, the  CAN \eqref{eq2} achieves
\begin{enumerate}
  \item prescribed-time signed-average consensus in the pre-specified finite time ${T_r} + {T_s}$, if $\mathscr{G}$ is structurally balanced.
  \item prescribed-time stability in the pre-specified finite time ${T_r} + {T_s}$, if $\mathscr{G}$ is structurally unbalanced.
\end{enumerate}
\end{corollary}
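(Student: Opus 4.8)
The plan is to reduce Corollary \ref{corol6} to Corollary \ref{corol1} by exploiting the two-phase structure of the sliding-mode design. First I would invoke Proposition \ref{prop1}: under \eqref{eq48} the closed-loop CAN \eqref{eq2} reaches the sliding manifold $\mathcal{S}$ at time $T_r$ and stays on it, so $\sigma_k \equiv 0$ and hence $\dot\sigma_k \equiv 0$ for every $k \in \mathscr{I}_N$ and all $t \ge T_r$. Substituting into \eqref{eq47} gives $\dot x_k = u_k^{nom}$ for $t \ge T_r$, i.e. $\dot X = -\left(\rho_1 + \rho_2 \frac{\dot\varphi(t,\,T_r+T_s)}{\varphi(t,\,T_r+T_s)}\right)\mathscr{L}X$ on $[T_r,\infty)$ --- exactly the nominal closed-loop dynamics \eqref{eq7} with the settling-time parameter set to $T_r+T_s$. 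Moreover, under the shift $s=t-T_r$ one has $\frac{\dot\varphi(t,\,T_r+T_s)}{\varphi(t,\,T_r+T_s)} = \frac{\dot\varphi(s,\,T_s)}{\varphi(s,\,T_s)}$, since on $[T_r,\infty)$ the two functions differ only by a constant factor that cancels in the logarithmic derivative. Consequently, any differential inequality $\dot V \le -aV - b\frac{\dot\varphi(t,\,T_r+T_s)}{\varphi(t,\,T_r+T_s)}V$ holding for $t\ge T_r$ becomes, for $\hat V(s):=V(T_r+s)$, exactly the hypothesis of Lemma \ref{lemma6} with settling time $T_s$, and therefore forces $V\to 0$ as $t\to(T_r+T_s)^-$ and $V\equiv 0$ for $t\ge T_r+T_s$.

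For part (1), with $\mathscr{G}$ undirected, connected, and structurally balanced, I would replay the proof of Corollary \ref{corol1}(1) started at $t=T_r$. Choosing $G\in\mathbb{G}_N$ with $G\mathscr{L}G=\mathscr{M}(\mathscr{L})$, $p=\frac1N 1_N$, and $\varepsilon = X - \frac{1_N^{\mathrm{T}}GX}{N}G1_N$, the identity $p^{\mathrm{T}}G\mathscr{L}=0$ gives $1_N^{\mathrm{T}}G\dot X=0$ on $[T_r,\infty)$, so $\dot\varepsilon=\dot X$ and $\mathscr{L}\varepsilon=\mathscr{L}X$ there. Taking $\bar V_1 = \frac1N\varepsilon^{\mathrm{T}}\varepsilon$ (continuous in $t$ because $X$ is) and repeating the computation \eqref{eq9}--\eqref{eq10} yields $\dot{\bar V}_1 \le -2\rho_1 a(\mathscr{L})\bar V_1 - 2\rho_2 a(\mathscr{L})\frac{\dot\varphi(t,\,T_r+T_s)}{\varphi(t,\,T_r+T_s)}\bar V_1$ for $t\ge T_r$; by the time-shifted form of Lemma \ref{lemma6} above, $\varepsilon\to 0$ as $t\to(T_r+T_s)^-$ and $\varepsilon\equiv 0$ afterwards. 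Since $1_N^{\mathrm{T}}GX$ is constant on $[T_r,\infty)$, this is precisely prescribed-time signed-average consensus in the preassigned time $T_r+T_s$.

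For part (2), with $\mathscr{G}$ structurally unbalanced, the argument of Theorem \ref{thm1}(2) transfers verbatim to the restarted system: with $\tilde V_1 = E^{\mathrm{T}}\Omega E$ ($\Omega$ from Lemma \ref{lemma3}) one obtains the inequality \eqref{eq11} with $\varphi(\cdot,\,T_r+T_s)$ in place of $\varphi(\cdot,\,T_1)$ for $t\ge T_r$, the time-shifted Lemma \ref{lemma6} drives $E$ to zero at $T_r+T_s$, and nonsingularity of $\mathscr{L}$ (Lemma \ref{lemma2}(1), as an undirected connected graph is strongly connected) gives $X\to 0$ at $T_r+T_s$ and $X\equiv 0$ thereafter. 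The only delicate point --- the ``main obstacle'' such as it is --- is the bookkeeping at the splice instant $t=T_r$: one must check that $X$, hence $\varepsilon$, $E$, and the Lyapunov functions, is continuous there, so that the comparison argument may legitimately be restarted from the finite values $\bar V_1(T_r)$, $\tilde V_1(T_r)$; everything else is a transcription of the proofs of Corollary \ref{corol1} and Theorem \ref{thm1}.
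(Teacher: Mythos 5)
Your proposal is correct and follows essentially the same route as the paper: Proposition \ref{prop1} forces $\sigma\equiv 0$ for $t\ge T_r$, so \eqref{eq47} reduces the closed loop to the nominal dynamics \eqref{eq7} with settling-time parameter $T_r+T_s$, and the conclusion then follows from the argument of Corollary \ref{corol1} (i.e.\ Theorem \ref{thm1} with $p=\frac{1}{N}1_N$). Your explicit verification that the logarithmic derivative $\dot\varphi(t,T_r+T_s)/\varphi(t,T_r+T_s)$ coincides with $\dot\varphi(s,T_s)/\varphi(s,T_s)$ under the shift $s=t-T_r$, so that Lemma \ref{lemma6} may be restarted from the finite value of the Lyapunov function at $t=T_r$, is a detail the paper leaves implicit but is a correct and welcome addition.
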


\begin{corollary}\label{corol7}
Consider the CAN \eqref{eq2} under the sliding mode control protocol \eqref{eq48} with its sliding variables as designed in \eqref{eq47}. Let $\mathscr{G}$ be strongly connected with its edge weights all positive. Then, the CAN \eqref{eq2} reaches prescribed-time consensus in the pre-specified finite time ${T_r} + {T_s}$.
\end{corollary}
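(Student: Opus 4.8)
The plan is a two-phase reach-then-slide argument that reduces the whole claim to the disturbance-free result of Corollary~\ref{corol2}. First I would invoke Proposition~\ref{prop1}: under the protocol \eqref{eq48} with the sliding variables \eqref{eq47}, every agent reaches the sliding manifold $\mathcal{S}$ within the prescribed time $T_r$ and stays on it, so $\sigma_k = \dot\sigma_k = 0$ for all $t \ge T_r$ and all $k \in \mathscr{I}_N$. Since $\sigma_k = x_k + \varsigma_k$ and $\dot\varsigma_k = -u_k^{nom}$, the identity $\dot\sigma_k = 0$ yields $\dot x_k = u_k^{nom}$ for $t \ge T_r$; that is, on $[T_r,+\infty)$ the closed-loop CAN \eqref{eq2} obeys exactly the nominal dynamics \eqref{eq3} driven by the protocol \eqref{eq4} with settling parameter $T = T_r + T_s$ (the state being continuous by \eqref{eq2}, this reduced description is valid on the closed interval $[T_r,+\infty)$). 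It then remains to show that, started from $x(T_r)$ at time $T_r$, this nominal system reaches consensus by time $T_r + T_s$.

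Second, I would transcribe the computation behind Corollary~\ref{corol2} (equivalently, the first conclusion of Theorem~\ref{thm1}) to the interval $[T_r,+\infty)$. Because $\mathscr{G}$ is strongly connected with all edge weights positive, it is structurally balanced with $G = I_N$ or $G = -I_N$, so $\mathscr{M}(\mathscr{L}) = \mathscr{L}$, and by Lemma~\ref{lemma2} there is a positive vector $p = [p_1,\ldots,p_N]^{\rm{T}}$ with $p^{\rm{T}}\mathscr{L} = 0$ and $p^{\rm{T}}1_N = 1$. Put $\epsilon = X - (p^{\rm{T}}X)1_N$ and $P = {\rm{diag}}\{p_1,\ldots,p_N\}$. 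Using $\dot X = -(\rho_1 + \rho_2\dot\varphi(t,T_r+T_s)/\varphi(t,T_r+T_s))\mathscr{L}X$ for $t \ge T_r$ together with $p^{\rm{T}}\mathscr{L} = 0$ gives $p^{\rm{T}}\dot X = 0$, hence $p^{\rm{T}}X \equiv p^{\rm{T}}X(T_r) =: x^{*}$ on $[T_r,+\infty)$, and also $\dot\epsilon = \dot X$ and $\mathscr{L}\epsilon = \mathscr{L}X$ (since $\mathscr{L}1_N = 0$). For $\bar V_1 = \epsilon^{\rm{T}}P\epsilon$ the manipulations in \eqref{eq9}--\eqref{eq10}, with $\bar{\mathscr{L}} = (P\mathscr{L} + \mathscr{L}^{\rm{T}}P)/2$ and the bound $\epsilon^{\rm{T}}\bar{\mathscr{L}}\epsilon \ge a(\mathscr{L})\epsilon^{\rm{T}}P\epsilon$ from Lemma~\ref{lemma3} (applicable since $p^{\rm{T}}\epsilon = 0$ and $G = \pm I_N$), yield
\[
\dot{\bar V}_1 \le -2\rho_1 a(\mathscr{L})\bar V_1 - 2\rho_2 a(\mathscr{L})\frac{\dot\varphi(t,T_r+T_s)}{\varphi(t,T_r+T_s)}\bar V_1,\qquad t \ge T_r.
\]

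Third, I would apply Lemma~\ref{lemma6} with the clock shifted so that $t = T_r$ plays the role of the origin and $T_r + T_s$ the role of the settling time. On $[T_r,T_r+T_s)$ the function $\varphi(\cdot,T_r+T_s)$ is positive, smooth and increases to $+\infty$, exactly as $\varphi(\cdot,T_1)$ does on $[0,T_1)$, so the comparison argument of Lemma~\ref{lemma6} gives $\bar V_1(t) \le (\varphi(t,T_r+T_s)/\varphi(T_r,T_r+T_s))^{-2\rho_2 a(\mathscr{L})}\exp(-2\rho_1 a(\mathscr{L})(t-T_r))\bar V_1(T_r)$ on $[T_r,T_r+T_s)$ and $\bar V_1 \equiv 0$ on $[T_r+T_s,+\infty)$. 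Since $\varphi(t,T_r+T_s) \to +\infty$ as $t \to (T_r+T_s)^{-}$, this forces $\lim_{t\to T_r+T_s}\epsilon = 0$ and $\epsilon = 0$ for all $t \ge T_r + T_s$; equivalently $\lim_{t\to T_r+T_s}x_k = x^{*}$ and $x_k = x^{*}$ for all $t \ge T_r + T_s$ and all $k \in \mathscr{I}_N$, which is prescribed-time consensus within the preassigned time $T_r + T_s$.

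The only step that is not a mechanical copy of earlier material is the time-shifted use of Lemma~\ref{lemma6}: one has to observe that its proof (which follows Lemma~1 of Reference~\cite{wang2018prescribed}) never uses the particular initial instant $0$, only the blow-up profile of $\varphi$ near the settling instant, and therefore applies verbatim on $[T_r,T_r+T_s)$. Everything else — the structural-balance bookkeeping and the Lyapunov estimate — is line-by-line the proof of Corollary~\ref{corol2}, while the continuity and matching of $X$ and $\bar V_1$ at $t = T_r$ follow from \eqref{eq2} and Proposition~\ref{prop1}.
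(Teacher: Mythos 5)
Your proposal is correct and follows essentially the same route the paper intends: the paper itself gives no separate proof of Corollary~\ref{corol7}, but derives all of Corollaries~\ref{corol6}--\ref{corol10} by exactly your two-phase argument --- Proposition~\ref{prop1} to reach $\mathcal{S}$ by $T_r$, the identity $\dot\sigma_k=0$ to reduce the closed loop to the nominal dynamics \eqref{eq7} with $T=T_r+T_s$ on $[T_r,+\infty)$, and then the Lyapunov estimate of Theorem~\ref{thm1}/Corollary~\ref{corol2} on the shifted interval. Your explicit justification of the time-shifted use of Lemma~\ref{lemma6} fills in a detail the paper leaves implicit, but it is the same proof.
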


\begin{corollary}\label{corol8}
Consider the CAN \eqref{eq2} under the sliding mode control protocol \eqref{eq48} with its sliding variables as designed in \eqref{eq47}. Let $\mathscr{G}$ be quasi-strongly connected and structurally balanced. Then, the CAN \eqref{eq2} achieves prescribed-time bipartite consensus in the pre-specified finite time ${T_r} + {T_s}$.
\end{corollary}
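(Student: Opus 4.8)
The plan is a short reduction: this corollary is to Theorem~\ref{thm5} what Corollary~\ref{corol3} is to Theorem~\ref{thm2}. First I would invoke the first conclusion of Theorem~\ref{thm5}: since a subgraph of a structurally balanced signed digraph is again structurally balanced, $\mathscr{G}$ structurally balanced forces $\mathscr{G}_L$ structurally balanced, so the CAN \eqref{eq2} under \eqref{eq48}--\eqref{eq47} attains prescribed-time interval bipartite consensus in the time $T_r+T_s$; by Definition~\ref{defn4} there is $x_f>0$ with $|x_k(t)|\to x_f$ and $|x_k(t)|=x_f$ for $t\ge T_r+T_s$, $k\in\mathcal{L}$, and $|x_l(t)|\le x_f$ for $t\ge T_r+T_s$, $l\in\mathcal{F}$. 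Then, exactly as in the proof of Corollary~\ref{corol3}, the \emph{full} structural balance of $\mathscr{G}$ upgrades this to bipartite consensus: from the converged-state relations for $t\ge T_r+T_s$ one has $X_L=G_L1_Kc$ and $X_F=-\mathscr{L}_F^{-1}\mathscr{L}_{FL}X_L$ (cf.\ \eqref{eq21}--\eqref{eq23}), and Lemma~\ref{lemma8} gives $\zeta=-\mathscr{L}_F^{-1}\mathscr{L}_{FL}G_L1_K=\mathbb{G}_F1_{N-K}$, whose entries all have modulus one; hence $|x_l(t)|=|c|=x_f$ for every follower, and when $K=1$ the same follows from $-\mathscr{L}_F^{-1}\mathscr{L}_{FL}=\mathbb{G}_F1_{N-1}\mathbb{G}_L$. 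Thus $|x_k(t)|\to x_f$ and $|x_k(t)|=x_f$ for all $t\ge T_r+T_s$ and all $k\in\mathscr{I}_N$, which is prescribed-time bipartite consensus.

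The one ingredient that actually has to be supplied is the proof of Theorem~\ref{thm5}(1), which the paper leaves to the reader, so I would spell it out here. By Proposition~\ref{prop1}, the sliding manifold $\mathcal{S}$ is reached in the prescribed time $T_r$ and kept thereafter, so $\sigma_k=\dot\sigma_k=0$ for $t\ge T_r$; by \eqref{eq47} this yields $\dot x_k=-\dot\varsigma_k=u_k^{nom}$, i.e.\ for $t\ge T_r$ the state obeys the nominal closed-loop system \eqref{eq7} with $\varphi(t,T_1)$ replaced by $\varphi(t,T_r+T_s)$. Since this replacement affects the Lyapunov analysis of Theorem~\ref{thm2}(1) only through the ratio $\dot\varphi/\varphi=\kappa/(T_r+T_s-t)$ and through $\varphi(t,T_r+T_s)^{-b}\to 0$ as $t\to(T_r+T_s)^-$ --- both unchanged by starting the clock at $t=T_r$ rather than at $0$ --- Lemma~\ref{lemma6}, being an ODE-comparison argument insensitive to a shift of the initial time, applies verbatim on $[T_r,T_r+T_s)$, and the chain $\epsilon\to 0$, hence \eqref{eq12}, hence \eqref{eq21}--\eqref{eq23}, goes through with $X(T_r)$ as the new initial state. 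Here $X(T_r)$ is well defined because on $[0,T_r]$ the time-varying gains $\varphi(t,T_r+T_s)$ and $\dot\varphi(t,T_r+T_s)/\varphi(t,T_r+T_s)$ are bounded ($T_r<T_r+T_s$) and $\sigma$ is bounded by the estimate in Proposition~\ref{prop1}, so no finite escape occurs during the reaching phase.

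I expect the only genuinely delicate point to be precisely this time-shift/reparametrization: one must confirm that none of Lemma~\ref{lemma6}, the coercivity inequality $E_L^{\rm T}\bar{\mathscr{L}}_L E_L\ge a(\mathscr{L}_L)E_L^{\rm T}\Xi_L E_L$, or the positivity of the matrix $\Phi$ used in the proof of Theorem~\ref{thm2} is disturbed by replacing $T_1$ with $T_r+T_s$ and the initial time $0$ with $T_r$. None is, because all the graph-theoretic quantities are untouched and $\varphi$ enters solely through $\dot\varphi/\varphi$ and its blow-up; everything else is a direct transcription of the disturbance-free arguments of Theorem~\ref{thm2} and Corollary~\ref{corol3}.
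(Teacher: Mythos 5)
Your proposal is correct and follows essentially the same route the paper intends: the paper gives no explicit proof of Corollary~\ref{corol8}, deriving it ``parallel to'' Corollary~\ref{corol3} via Theorem~\ref{thm5}, which in turn rests on Proposition~\ref{prop1} forcing $\sigma_k=\dot\sigma_k=0$ for $t\ge T_r$ so that the nominal closed loop with $\varphi(t,T_r+T_s)$ governs the dynamics thereafter and the Lyapunov chain of Theorem~\ref{thm2} plus Lemma~\ref{lemma8} applies from the new initial state $X(T_r)$. You actually supply more than the paper does --- the time-shift justification for Lemma~\ref{lemma6}, the no-finite-escape argument on $[0,T_r]$, and the observation that in the $K=1$ case the conserved leader value is $x_1(T_r)$ rather than $x_1(0)$ --- all of which are correct.
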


\begin{corollary}\label{corol9}
Consider the CAN \eqref{eq2} under the sliding mode control protocol \eqref{eq48} with its sliding variables as designed in \eqref{eq47}. Let $\mathscr{G}$ be quasi-strongly connected with its edge weights all positive. Then, the CAN \eqref{eq2} achieves prescribed-time consensus in the pre-specified finite time ${T_r} + {T_s}$.
\end{corollary}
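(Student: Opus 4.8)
The plan is to reduce Corollary~\ref{corol9} to the nominal result Corollary~\ref{corol4} via Proposition~\ref{prop1}. First I would invoke Proposition~\ref{prop1}: under the sliding mode protocol~\eqref{eq48} with sliding variables~\eqref{eq47}, the closed-loop CAN~\eqref{eq2} reaches the sliding manifold $\mathcal{S}$ within the prescribed time $T_r$ for any initial condition, so $\sigma_k \equiv \dot\sigma_k \equiv 0$ for all $t \ge T_r$, $k \in \mathscr{I}_N$. Combining this with $\sigma_k = x_k + \varsigma_k$ and $\dot\varsigma_k = -u_k^{nom}$ gives $\dot x_k = u_k^{nom}$ for $t \ge T_r$; that is, on $[T_r,+\infty)$ the closed loop coincides exactly with the nominal CAN~\eqref{eq3} under the protocol~\eqref{eq4} with settling-time parameter $T_1 = T_r + T_s$, which is the (unnumbered) display preceding Theorem~\ref{thm4}.

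Next I would rerun the proof of Corollary~\ref{corol4} on the interval $[T_r, T_r+T_s)$ in place of $[0, T_1)$. Since every edge weight of $\mathscr{G}$ is positive, $\mathscr{G}$ is structurally balanced with $G_L = \pm I_K$, $\mathscr{L} = \mathscr{M}(\mathscr{L})$, and $\mathscr{L}_F^{-1}\mathscr{L}_{FL}1_K = -1_{N-K}$ by~\eqref{eq28}. I would reuse the Lyapunov functions and differential inequalities established in the proof of Theorem~\ref{thm2} (the case $K \ge 2$ and the case $K = 1$ being treated separately exactly as there), which now take the form $\dot V \le -\rho_1 c_0 V - \rho_2 c_0 \tfrac{\dot\varphi(t,T_r+T_s)}{\varphi(t,T_r+T_s)}V$ for all $t \ge T_r$, with $c_0 > 0$ a constant depending only on $\mathscr{L}$ and the chosen weighting matrices.

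The main obstacle is a purely bookkeeping one: Lemma~\ref{lemma6} requires the differential inequality to hold on $[0,+\infty)$, whereas here it is available only from $t = T_r$ onward, and the ``initial time'' is $T_r$ rather than $0$ (note $\varphi(T_r, T_r+T_s) = (T_r+T_s)^\kappa/T_s^\kappa < +\infty$, so nothing degenerates there). I would resolve this by repeating the comparison step underlying Lemma~\ref{lemma6} on $[T_r, T_r+T_s)$, obtaining $V(t) \le \bigl(\varphi(t,T_r+T_s)/\varphi(T_r,T_r+T_s)\bigr)^{-\rho_2 c_0}\exp^{-\rho_1 c_0(t-T_r)}V(T_r)$ for $t \in [T_r, T_r+T_s)$; since $\varphi(t,T_r+T_s)^{-\rho_2 c_0}\to 0$ as $t\to(T_r+T_s)^-$ and $X$, hence $E$, is continuous with $E\equiv 0$ on $[T_r+T_s,+\infty)$, this forces $\lim_{t\to T_r+T_s}E = 0$ and $E(t)=0$ for all $t \ge T_r+T_s$. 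Finally, inverting $E = -\mathscr{L}X$ blockwise as in the proof of Corollary~\ref{corol4} and using $\mathscr{L}_F^{-1}\mathscr{L}_{FL}1_K = -1_{N-K}$ together with $G_L = \pm I_K$, I conclude that $X$ converges to $\pm 1_N c$ (when $K \ge 2$, with $c\in\mathbb{R}$) or to $1_N x_1(T_r)$ (when $K = 1$), attaining this value at time $T_r + T_s$ and holding it thereafter — which is precisely prescribed-time consensus in the pre-specified finite time $T_r + T_s$.
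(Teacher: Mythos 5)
Your proposal is correct and follows essentially the same route the paper intends: the paper derives Corollaries \ref{corol6}--\ref{corol10} without explicit proof by combining Proposition \ref{prop1} (reaching the sliding manifold at $T_r$, after which the closed loop coincides with the nominal dynamics under protocol \eqref{eq4} with $T = T_r + T_s$) with the corresponding nominal results, here Corollary \ref{corol4}. Your additional care with the shifted initial time $T_r$ in the Lyapunov comparison, and with the limit value $1_N x_1(T_r)$ rather than $1_N x_1(0)$ in the $K=1$ case, correctly fills in bookkeeping the paper leaves implicit.
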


\begin{corollary}\label{corol10}
Consider the CAN \eqref{eq2} under the sliding mode control protocol \eqref{eq48} with its sliding variables as designed in \eqref{eq47}. Let $\mathscr{G}$ be weakly connected with its edge weights all positive. Then, the leaders in each CSC achieve consensus and the followers converge towards the convex hull spanned by all the leaders' states in the pre-specified finite time ${T_r} + {T_s}$.
\end{corollary}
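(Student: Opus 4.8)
The plan is to derive this corollary from Proposition~\ref{prop1} together with Corollary~\ref{corol5}, in exact parallel to the way Theorem~\ref{thm6} is obtained from Proposition~\ref{prop1} and Theorem~\ref{thm3}. First I would apply Proposition~\ref{prop1}: under the sliding mode protocol \eqref{eq48}, the closed-loop CAN \eqref{eq2} reaches the sliding manifold $\mathcal{S}$ within the prescribed time $T_r$ for every initial condition and stays there, so that $\sigma_k \equiv \dot\sigma_k \equiv 0$ for all $t \ge T_r$ and all $k \in {\mathscr{I}_N}$. Inserting $\dot\sigma_k = 0$ and $\dot\varsigma_k = -u_k^{nom}$ from \eqref{eq47} into $\dot x_k = \dot\sigma_k - \dot\varsigma_k$ then yields, for all $t \ge T_r$,
\[
{\dot{x}_k} = \left( {{\rho_1} + {\rho_2}\frac{{\dot \varphi\left(t, T_r + T_s\right) }}{\varphi\left(t, T_r + T_s\right) }} \right)\sum_{l \in {{\mathcal{N}}_k}} {{w_{kl}}\left[ {{x_l} - {\rm{sign}}\left( {{w_{kl}}} \right){x_k}} \right]},\quad k \in {\mathscr{I}_N},
\]
i.e., exactly the nominal closed loop \eqref{eq7} driven by the protocol \eqref{eq4}, but now with settling instant $T_r+T_s$ and with the state reached at $t=T_r$ playing the role of the effective initial condition.

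Next I would note that the gain on $[T_r,\,T_r+T_s)$ is an admissible prescribed-time gain relative to the remaining horizon: since $\dot\varphi\left(t,T\right)/\varphi\left(t,T\right) = \kappa/(T-t)$, we have $\dot\varphi\left(t,T_r+T_s\right)/\varphi\left(t,T_r+T_s\right) = \kappa/(T_r+T_s-t) = \dot\varphi\left(t-T_r,T_s\right)/\varphi\left(t-T_r,T_s\right)$, so the dynamics on $[T_r,\,T_r+T_s)$ are a time-shifted copy of the nominal dynamics analysed in Theorem~\ref{thm3} over $[0,\,T_s)$. Re-running the Lyapunov argument of Theorem~\ref{thm3}, item~1, with $t$ replaced by $t-T_r$ and $T_1$ by $T_s$, and specialising to the all-positive-weight case --- where every CSC of $\mathscr{G}$ is structurally balanced with $G=I$, so that bipartite consensus of the leaders degenerates to ordinary consensus and the symmetric convex hull to the ordinary convex hull, precisely as in Corollary~\ref{corol5} --- gives that the leaders in every CSC reach consensus and the followers enter the convex hull spanned by the leaders' converged states by time $T_r+T_s$, and remain so for all $t \ge T_r+T_s$. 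Alternatively, one may simply invoke Theorem~\ref{thm6}, item~1, observing that positive edge weights force all CSCs of $\mathscr{G}$ to be structurally balanced.

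The only point requiring a little care --- and the \emph{main, though minor, obstacle} --- is the time shift, since Lemma~\ref{lemma6} and the proof of Theorem~\ref{thm3} are phrased with initial time $0$. I would dispose of this by remarking that Lemma~\ref{lemma6} is translation-invariant: if $V$ satisfies $\dot V \le -aV - b\,\bigl(\dot\varphi(t,T_r+T_s)/\varphi(t,T_r+T_s)\bigr)V$ on $[T_r,\,T_r+T_s)$, then $W(s) := V(s+T_r)$ meets the hypothesis of Lemma~\ref{lemma6} with settling time $T_s$; applying the lemma to $W$ and using $\lim_{t\to(T_r+T_s)^-}\varphi(t,T_r+T_s)^{-b}=0$ forces the relevant error vector $\bar E$ (hence all of $E_{Lk}$, $E_F$) to vanish on $[T_r+T_s,\,+\infty)$. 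Everything else --- the block decomposition \eqref{eq29}, the nonsingular $H$-matrix and null-space facts of Lemmas~\ref{lemma7}--\ref{lemma9}, and the convex-combination bound on $\left|x_l\right|$ for $l \in \mathcal{F}$ --- is copied verbatim from the proofs of Theorem~\ref{thm3} and Corollary~\ref{corol5}, with no change needed.
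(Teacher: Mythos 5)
Your proposal is correct and follows essentially the same route as the paper: the paper derives Corollaries \ref{corol6}--\ref{corol10} by combining Proposition \ref{prop1} (the sliding manifold $\mathcal{S}$ is reached by $T_r$, after which $\sigma_k=\dot\sigma_k=0$ reduces \eqref{eq2} to the nominal closed loop \eqref{eq7} with horizon $T_r+T_s$) with the corresponding nominal results, here Corollary \ref{corol5} (equivalently Theorem \ref{thm3} with all CSCs structurally balanced and $G=I$ because all weights are positive). Your extra care about the time shift via the identity $\dot\varphi(t,T_r+T_s)/\varphi(t,T_r+T_s)=\dot\varphi(t-T_r,T_s)/\varphi(t-T_r,T_s)$ and the translation invariance of Lemma \ref{lemma6} is a valid filling-in of a step the paper leaves implicit.
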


\begin{remark}
Note that the settling time ${T_r}$ to reach the sliding manifold $\mathcal{S}$ and the overall settling time ${T_r} + {T_s}$ for the CAN \eqref{eq2} to achieve coordination control are both independent of initial conditions, design parameters, and topology structure among agents, and can be pre-specified explicitly.
\end{remark}

\section{Numerical Simulations}\label{sec4}
 Numerical examples are provided in this section to validate the effectiveness of the proposed results.

\begin{figure}
\subfigure[]{
  \begin{minipage}{0.23\textwidth}
  \centering
  \includegraphics[scale=0.5]{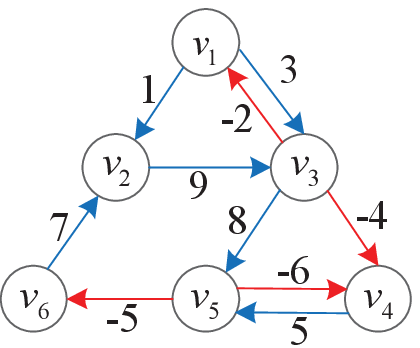}
   \end{minipage}}
\subfigure[]{
  \begin{minipage}{0.23\textwidth}
  \centering
  \includegraphics[scale=0.5]{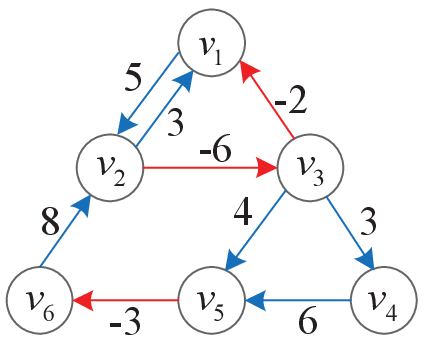}
  \end{minipage}}
\caption{Two strongly connected signed digraphs.\label{Fig1}}
\end{figure}

Example 1. Consider the CAN \eqref{eq3} under the strongly connected signed digraphs in Figure \ref{Fig1}. Obviously, the signed digraph of Figure 1(a) is structurally unbalanced, whereas the signed digraph of Figure 1(b) is structurally balanced. The states of agents are initially set as $X\left( 0 \right) = {\left[ { 5,2, -4, 3, -2, 1} \right]^{\rm{T}}}$. The design parameters of the control protocol \eqref{eq4} are taken as ${T_1}= 0.6s, {\rho_1} = 0.1$,${\rho_2} = 0.3$,$\kappa = 1$. Figure 2 depicts the simulation results. Evidently, stability and bipartite consensus  are, respectively, accomplished within the pre-specified finite time $T_1  = 0.6s$ under the signed digraphs of Figure 1(a) and (b), which illustrates the results of Theorem \ref{thm1}.

\begin{figure}
\subfigure[]{
  \begin{minipage}{0.23\textwidth}
  \centering
  \includegraphics[scale=0.3]{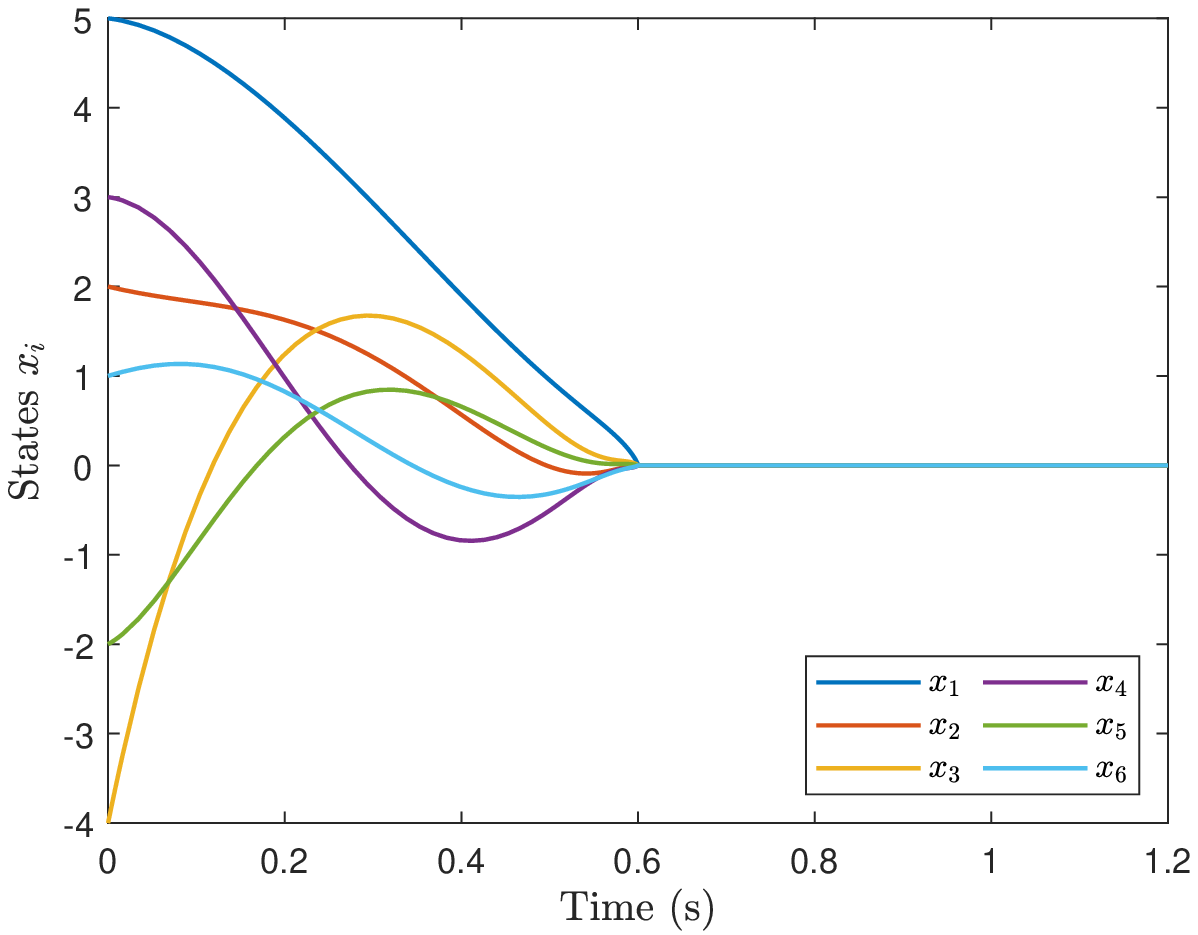}
   \end{minipage}}
\subfigure[]{
  \begin{minipage}{0.23\textwidth}
  \centering
  \includegraphics[scale=0.3]{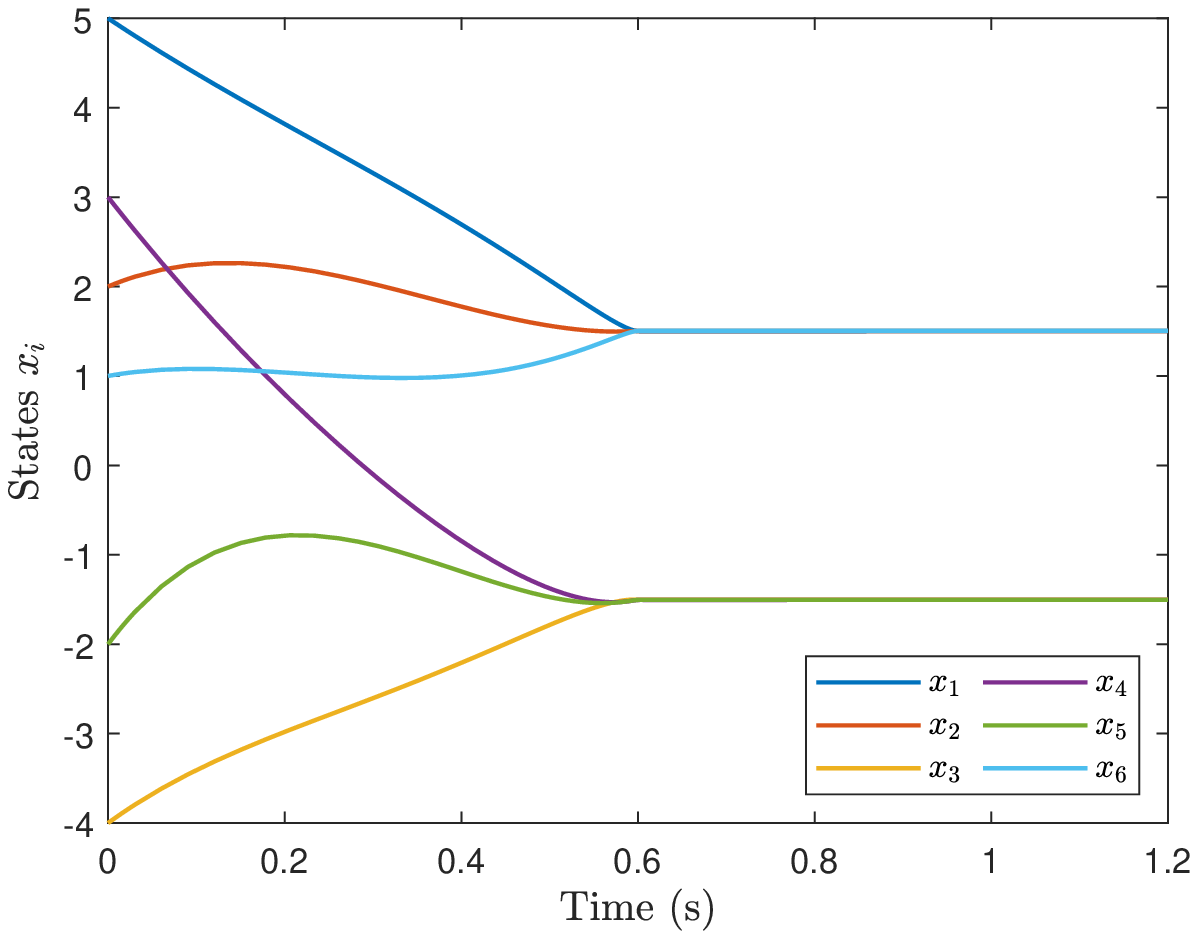}
  \end{minipage}}
\caption{State evolution of the CAN (3) under the signed digraphs of Figure 1. (a) Under Figure 1 (a). (b) Under Figure 1 (b).\label{Fig2}}
\end{figure}

Example 2. Consider the CAN \eqref{eq3} under the quasi-strongly connected signed digraphs in Figure \ref{Fig3}. It is clearly seen that the agents $k$, $1\leq k \leq 3$ are leaders, and the subgraphs of the signed digraphs in Figure \ref{Fig3}(a) and (b) associated with the leaders are structurally balanced and unbalanced, respectively. The design parameters of the control protocol \eqref{eq4} are taken as ${T_1}= 0.6s, {\rho_1} = 0.2$,${\rho_2} = 0.5$,$\kappa = 1$. Set the initial states of agents as $X\left( 0 \right) = {\left[ { - 4,3, - 1,2, - 2,5} \right]^{\rm{T}}}$. The simulation results are presented in Figure 4. One can observe from Figure 4 (a) and (b) that the CAN \eqref{eq3} respectively achieves interval bipartite consensus and stability within the pre-specified finite time $T_1  = 0.6$ under the signed digraphs of Figure 3(a) and (b), which verifies the results of Theorem \ref{thm2}.

\begin{figure}
\subfigure[]{
  \begin{minipage}{0.23\textwidth}
  \centering
  \includegraphics[scale=0.5]{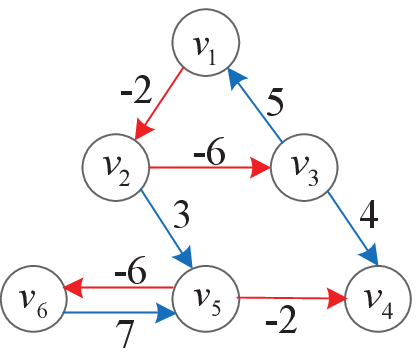}
   \end{minipage}}
\subfigure[]{
  \begin{minipage}{0.23\textwidth}
  \centering
  \includegraphics[scale=0.5]{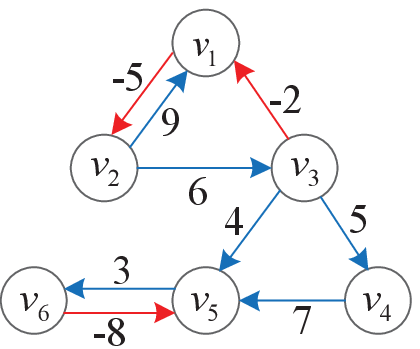}
  \end{minipage}}
\caption{Two quasi-strongly connected signed digraphs.\label{Fig3}}
\end{figure}

\begin{figure}
\subfigure[]{
  \begin{minipage}{0.23\textwidth}
  \centering
  \includegraphics[scale=0.3]{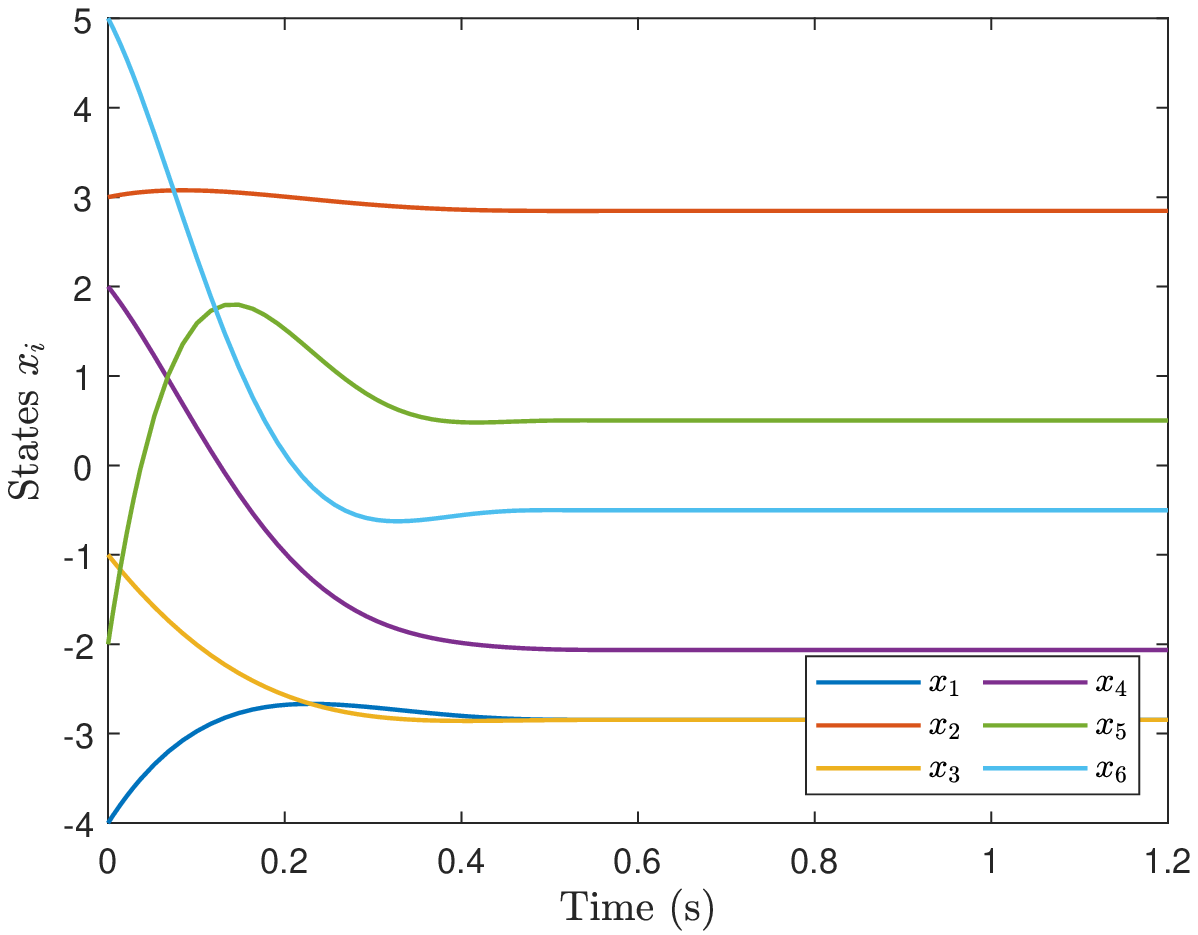}
   \end{minipage}}
\subfigure[]{
  \begin{minipage}{0.23\textwidth}
  \centering
  \includegraphics[scale=0.3]{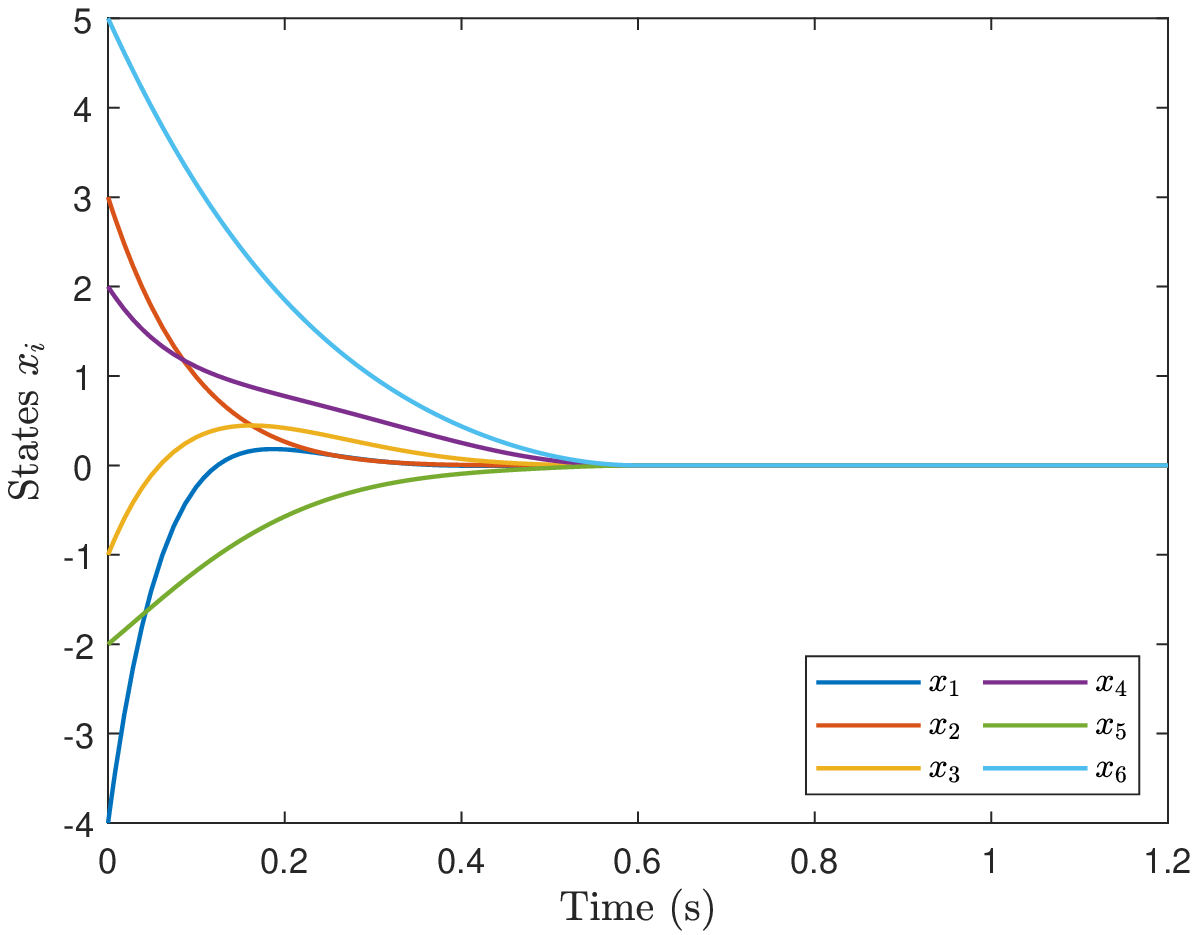}
  \end{minipage}}
\caption{State evolution of the CAN \eqref{eq3} under the signed digraphs of Figure 3. (a) Under Figure 3 (a). (b) Under Figure 3(b).\label{Fig4}}
\end{figure}

Example 3. Consider the CAN \eqref{eq3} under the two weakly connected signed digraphs in Figure \ref{Fig5}. It is easily obtained that the agents $k$, $1\leq k \leq 6$ are leaders and there are two groups of leaders: $\{1, 2, 3\}$ and $\{4, 5, 6\}$. Moreover, the CSCs of the signed digraph in Fig.\ref{Fig5}(a) are not all structurally unbalanced (the CSC of the signed digraph in Fig.\ref{Fig5}(a) associated with the leader group $\{1, 2, 3\}$ is structurally balanced), whereas those of the signed digraph in Fig.\ref{Fig5}(b) are all structurally unbalanced. We select the design parameters of the control protocol \eqref{eq4} as ${T_1}= 0.2s, {\rho_1} = 0.1$,${\rho_2} = 0.3$,$\kappa = 1$. With $X\left( 0 \right) = {\left[ {-6, 4, 5, -7, 8, -5,-3, 7, -5, 6, 4, 2, -5, 3, -8, 1}\right]^{\rm{T}}}$, we conduct simulations and display the simulation results in Figure \ref{Fig6}. From Figure \ref{Fig6}, one can see that bipartite containment and stability are, respectively, achieved under the signed digraphs of Figure \ref{Fig5}(a) and (b) within the pre-specified finite time $T_1 = 0.2$, which confirms Theorem \ref{thm3}.

\begin{figure}
\subfigure[]{
  \begin{minipage}{0.23\textwidth}
  \centering
  \includegraphics[scale=0.5]{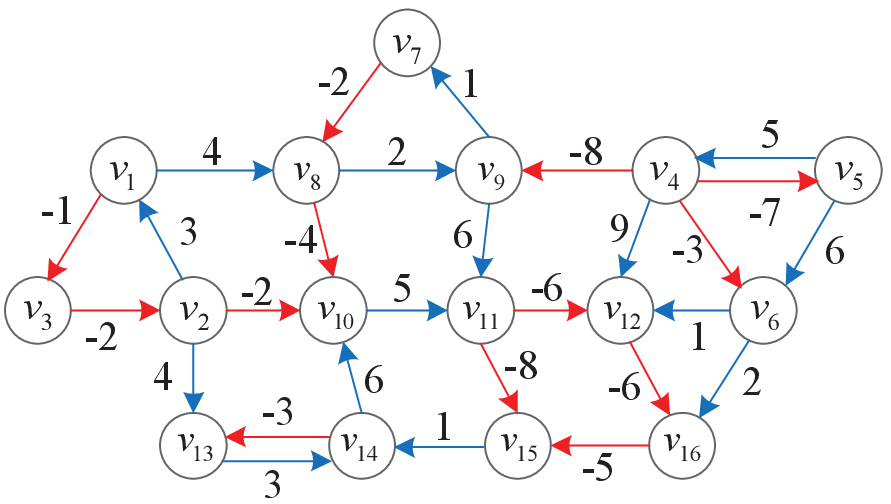}
   \end{minipage}}
\subfigure[]{
  \begin{minipage}{0.23\textwidth}
  \centering
  \includegraphics[scale=0.5]{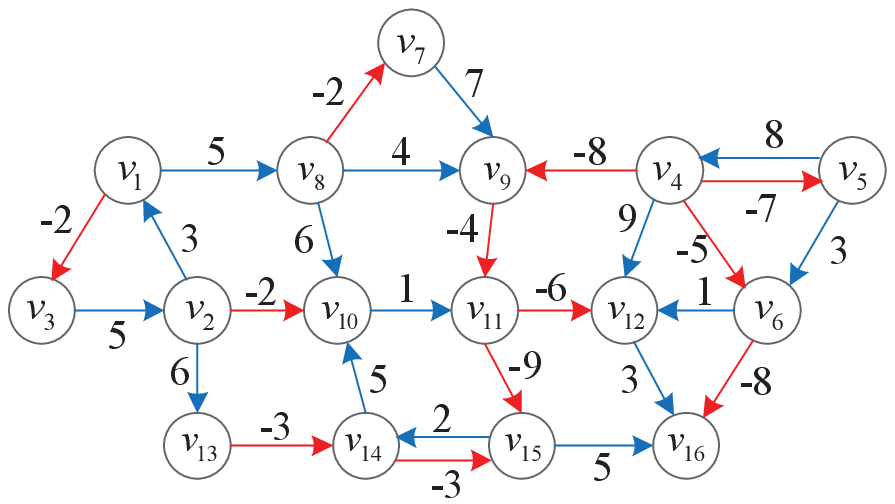}
  \end{minipage}}
\caption{Two weakly connected signed digraphs.\label{Fig5}}
\end{figure}

\begin{figure}
\subfigure[]{
  \begin{minipage}{0.23\textwidth}
  \centering
  \includegraphics[scale=0.3]{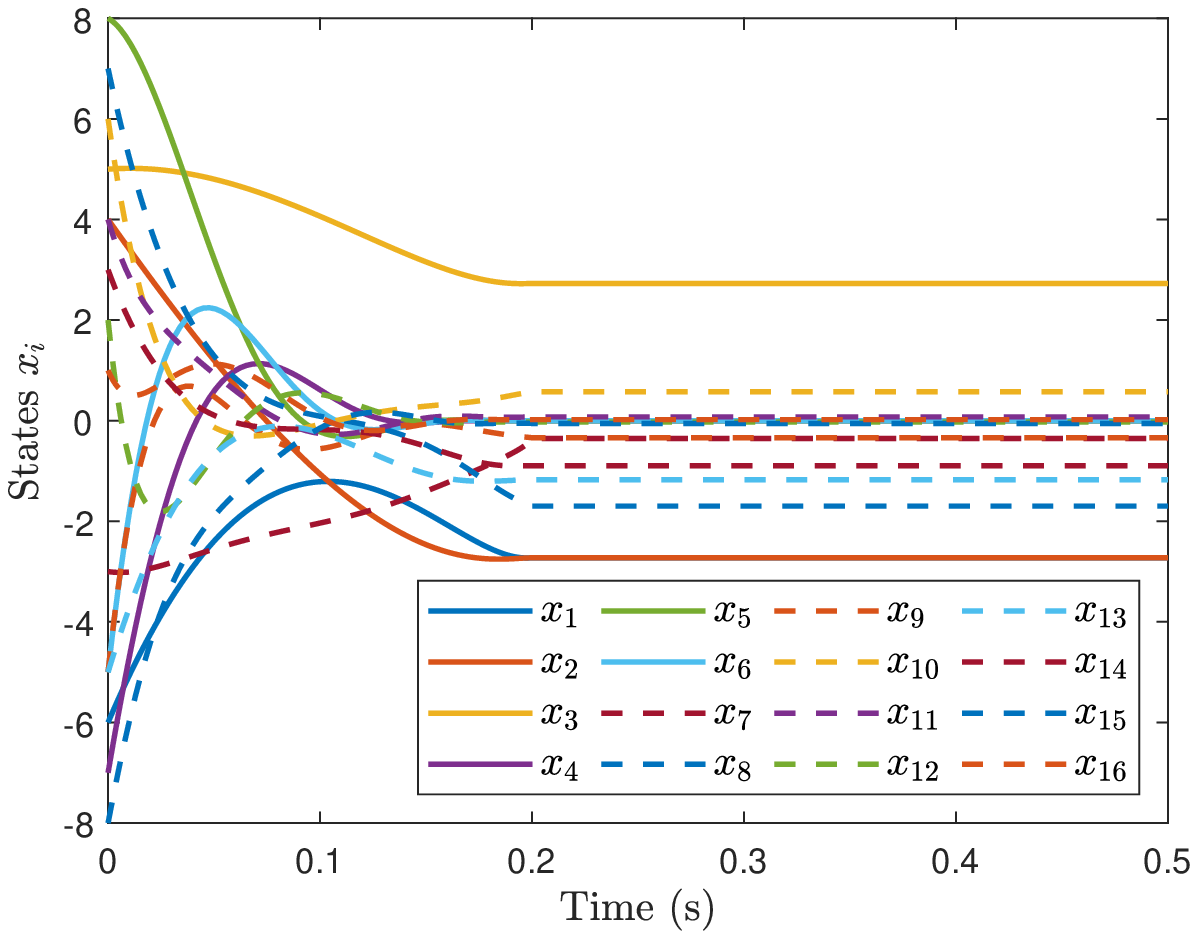}
   \end{minipage}}
\subfigure[]{
  \begin{minipage}{0.23\textwidth}
  \centering
  \includegraphics[scale=0.3]{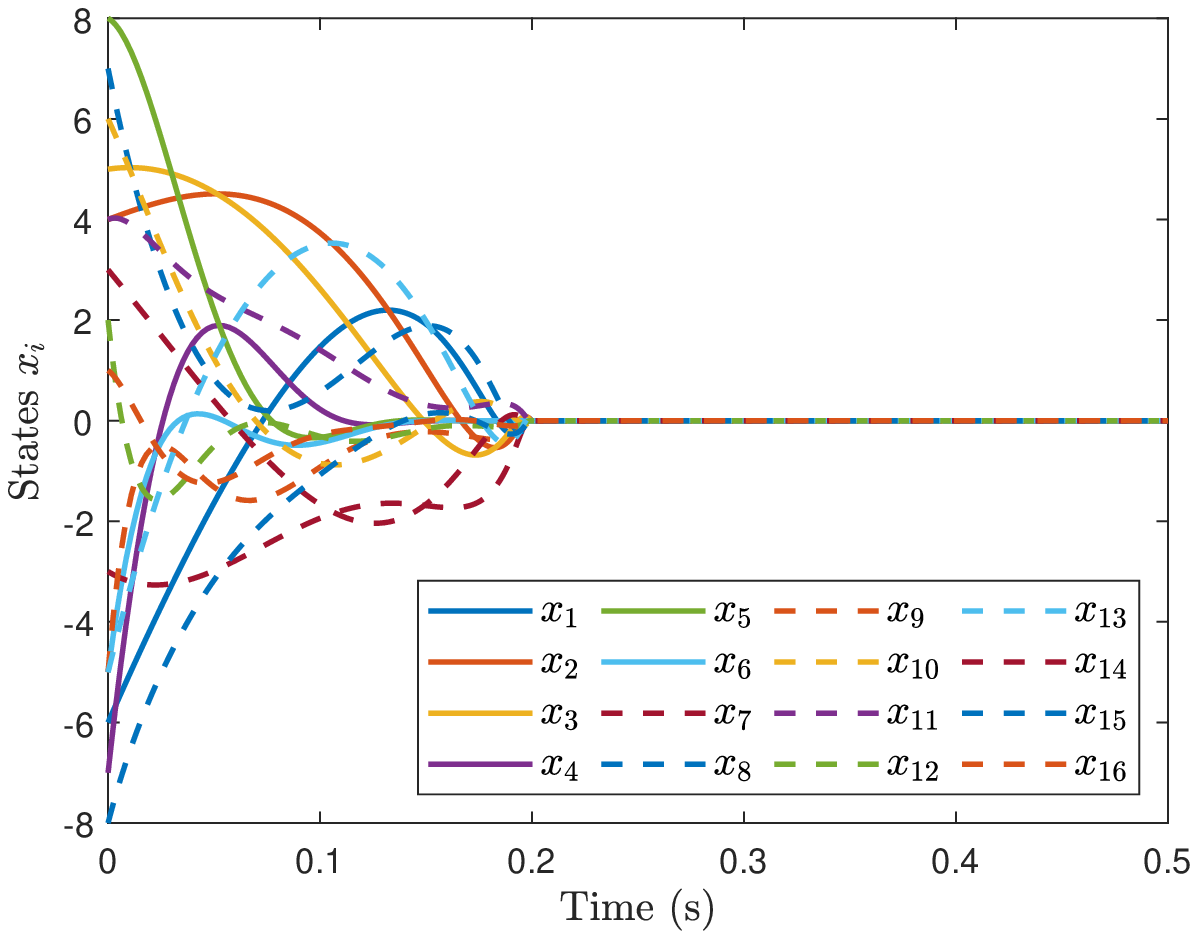}
  \end{minipage}}
\caption{State evolution of the CAN \eqref{eq3} under signed digraphs of Figure 5. (a) Under Figure 5 (a). (b) Under Figure 5 (b).\label{Fig6}}
\end{figure}

Example 4. Let us consider the CAN \eqref{eq2} with ${d_k}= \sin \left( {2kt + \frac{\pi}{3}} \right)$ under the same signed digraphs as considered in Example 1. Since $\left| {{d_k}} \right| \le 1$, we can choose ${\mu_1} = 1.2$. The other control parameters in \eqref{eq47} and \eqref{eq48} are chosen as $\kappa = 2,T_r = 0.5, T_s=1,{\rho_1} = 0.1,{\rho_2} = 0.3,{\mu_2} = 0.6,{\mu_3} = 0.9$.  In the simulations, we set $X\left( 0 \right) = {\left[ {-4,3, -1, 2, -2, 5}\right]^{\rm{T}}}$ and $\sigma\left( 0 \right) = {\left[ {-9,1,  -5,   8,  -4 , 6 }\right]^{\rm{T}}}$. Figures 7 and 8, respectively, show the evolution of the sliding variables and the agents' states under the signed digraphs of Figure 1. From Figure 7, we see that the sliding variables converge to zero within the predefined finite time $T_r = 0.5$. From Figure 8, we see that the CAN \eqref{eq2} reaches stability and bipartite consensus in the predefined finite time ${T_r}+{T_s} = 1.5$ under the signed digraphs of Figure 1(a) and (b), respectively. This demonstrates the results of Theorem \ref{thm4}.

\begin{figure}
\subfigure[]{
  \begin{minipage}{0.23\textwidth}
  \centering
  \includegraphics[scale=0.3]{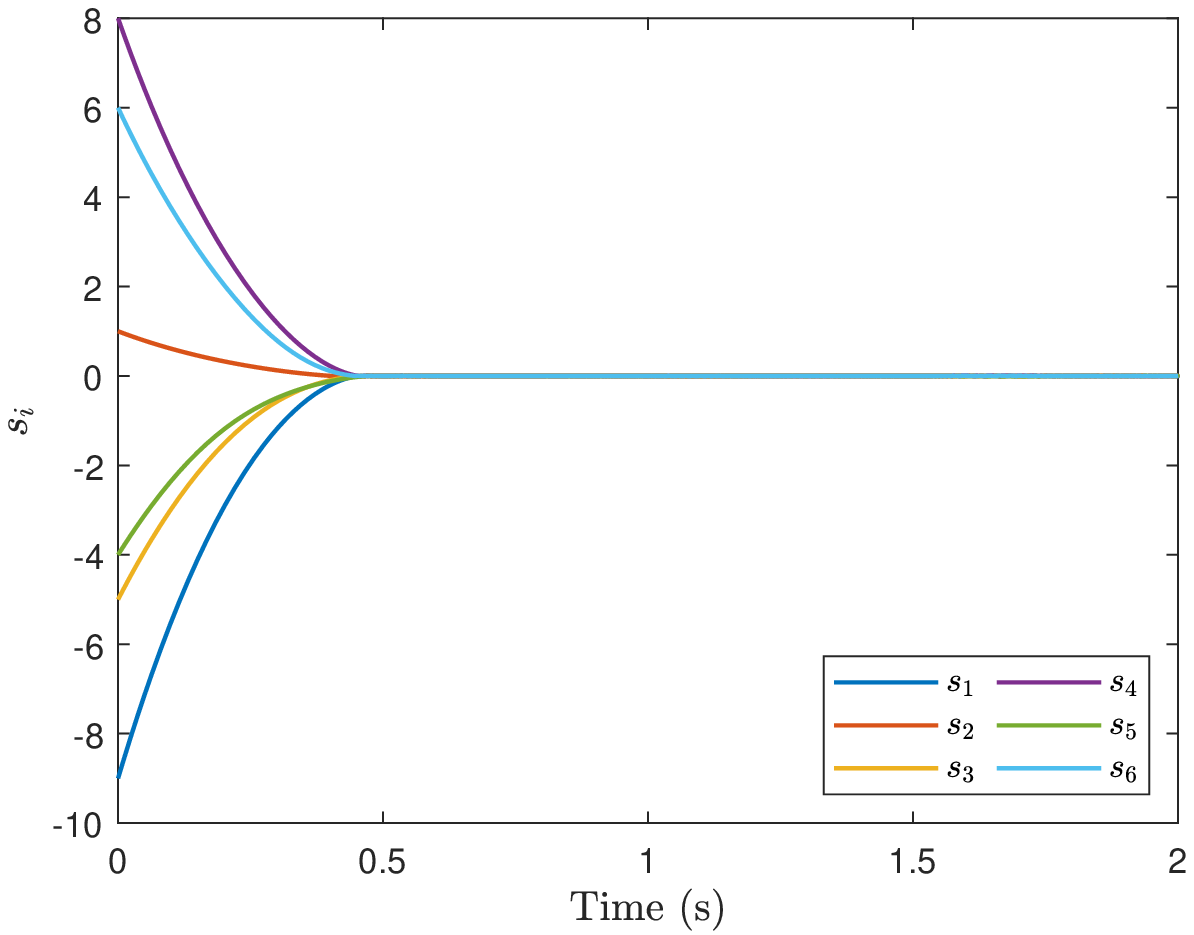}
   \end{minipage}}
\subfigure[]{
  \begin{minipage}{0.23\textwidth}
  \centering
  \includegraphics[scale=0.3]{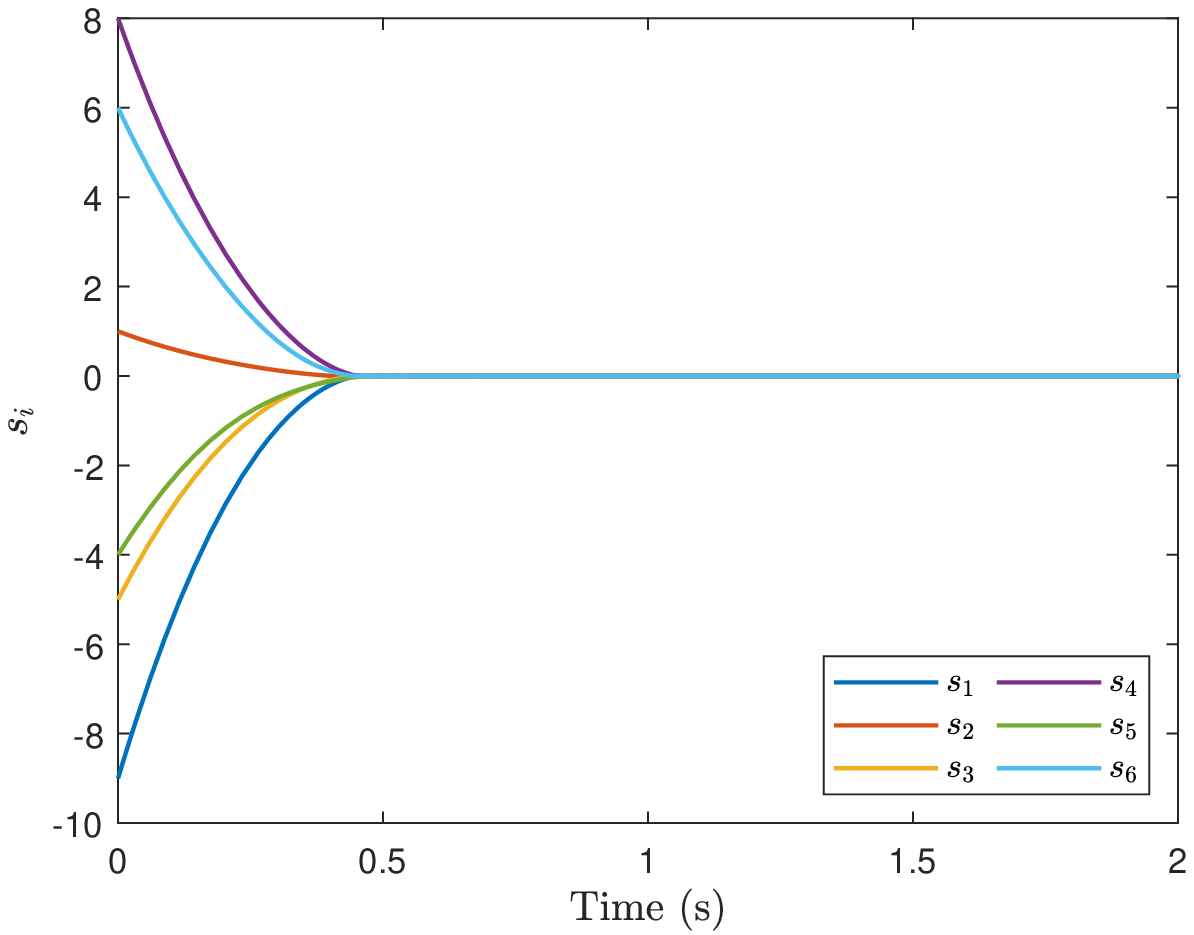}
  \end{minipage}}
\caption{Evolution of the sliding variables under the signed digraphs of Figure 1. (a) Under Figure 1 (a). (b) Under Figure 1 (b)}
\end{figure}\label{Fig7}

\begin{figure}
\subfigure[]{
  \begin{minipage}{0.23\textwidth}
  \centering
  \includegraphics[scale=0.3]{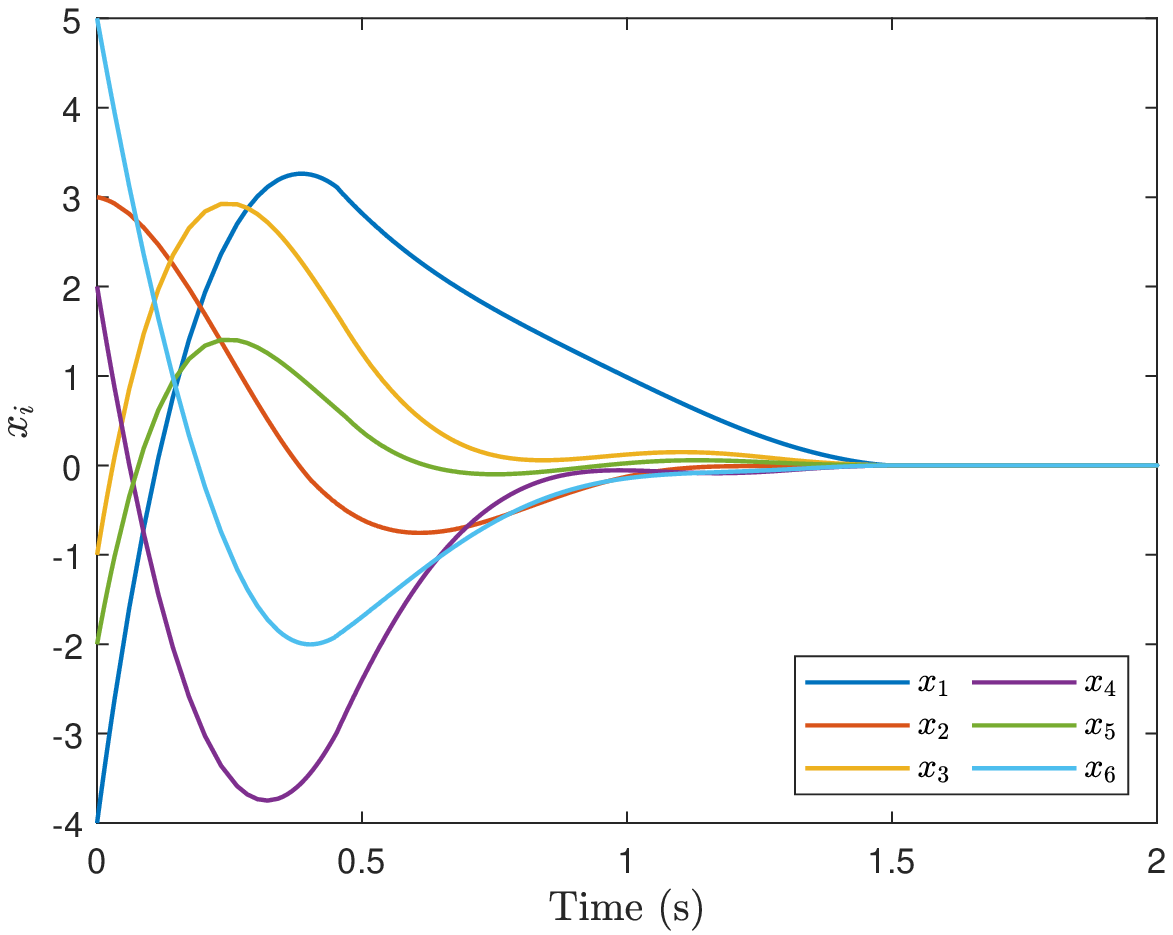}
   \end{minipage}}
\subfigure[]{
  \begin{minipage}{0.23\textwidth}
  \centering
  \includegraphics[scale=0.3]{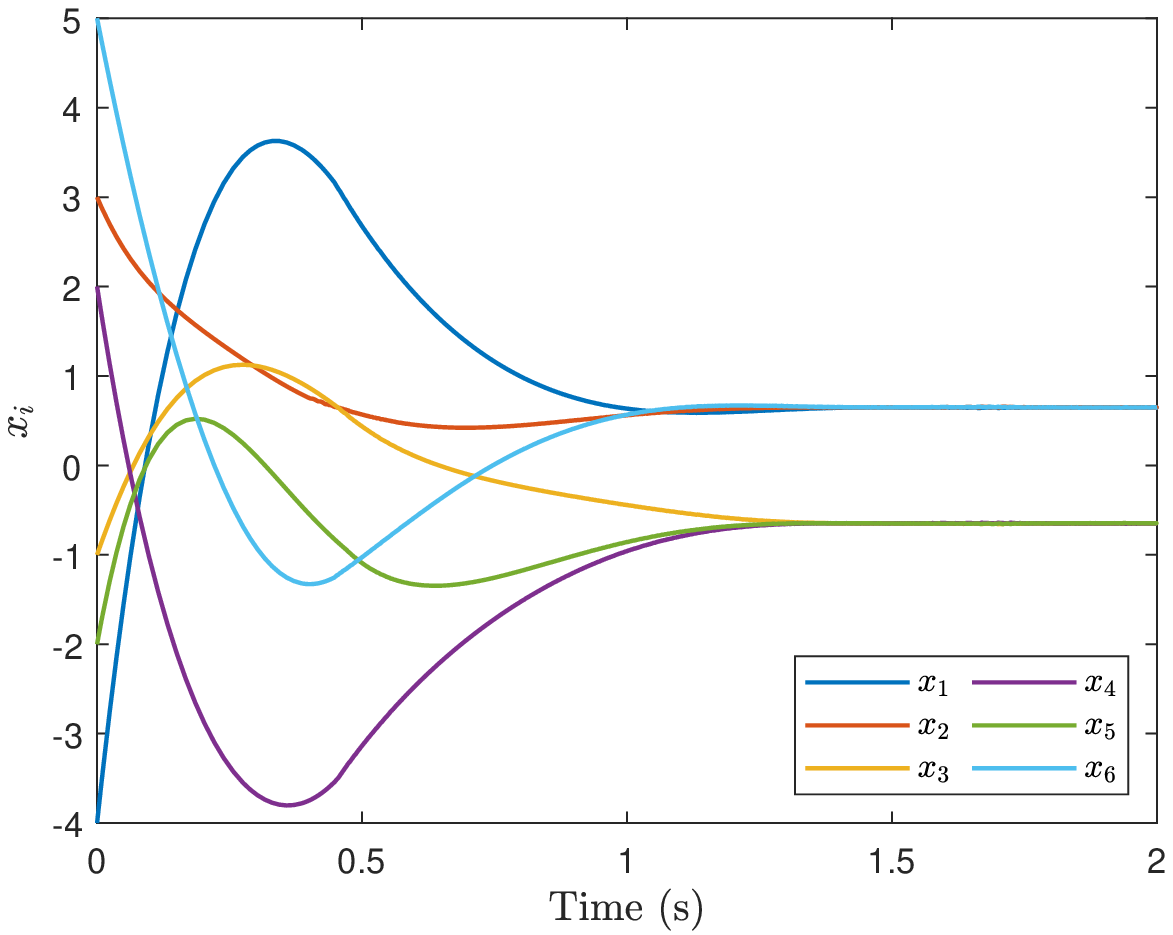}
  \end{minipage}}
\caption{State evolution of the CAN(2) under the signed digraphs of Figure 1. (a) Under Figure 1 (a). (b) Under Figure 1 (b)}
\end{figure}\label{Fig8}

Example 5. Let us consider the CAN \eqref{eq2} with ${d_k}= \sin \left( {2kt + \frac{\pi}{2}} \right)$ under the signed digraphs of Figure 3. The control parameters in \eqref{eq47} and \eqref{eq48} are selected as $\kappa = 3,T_r = 1, T_s= 0.5,{\rho_1} = 0.25,{\rho_2} = 0.3,{\mu_1} = 2,{\mu_2} = 0.4,{\mu_3} = 0.5$.  Let $X\left( 0 \right) = {\left[ {-4,4,5,-7,8,1}\right]^{\rm{T}}}$ and $\sigma\left( 0 \right) = {\left[ {-10, 10, 9, -5, 5  4}\right]^{\rm{T}}}$. Figures 9 and 10,respectively, show the trajectories of sliding variables and the agents under the signed digraphs of Figure 3. From Figure 9, we see that the the convergence of the sliding variables towards zero is attained within the predefined finite time $T_r = 1$. From Figure 10, one can observe that interval bipartite consensus and stability are, respectively, attained within the predefined finite time $T_r + T_s = 1.5$.

\begin{figure}
\subfigure[]{
  \begin{minipage}{0.23\textwidth}
  \centering
  \includegraphics[scale=0.3]{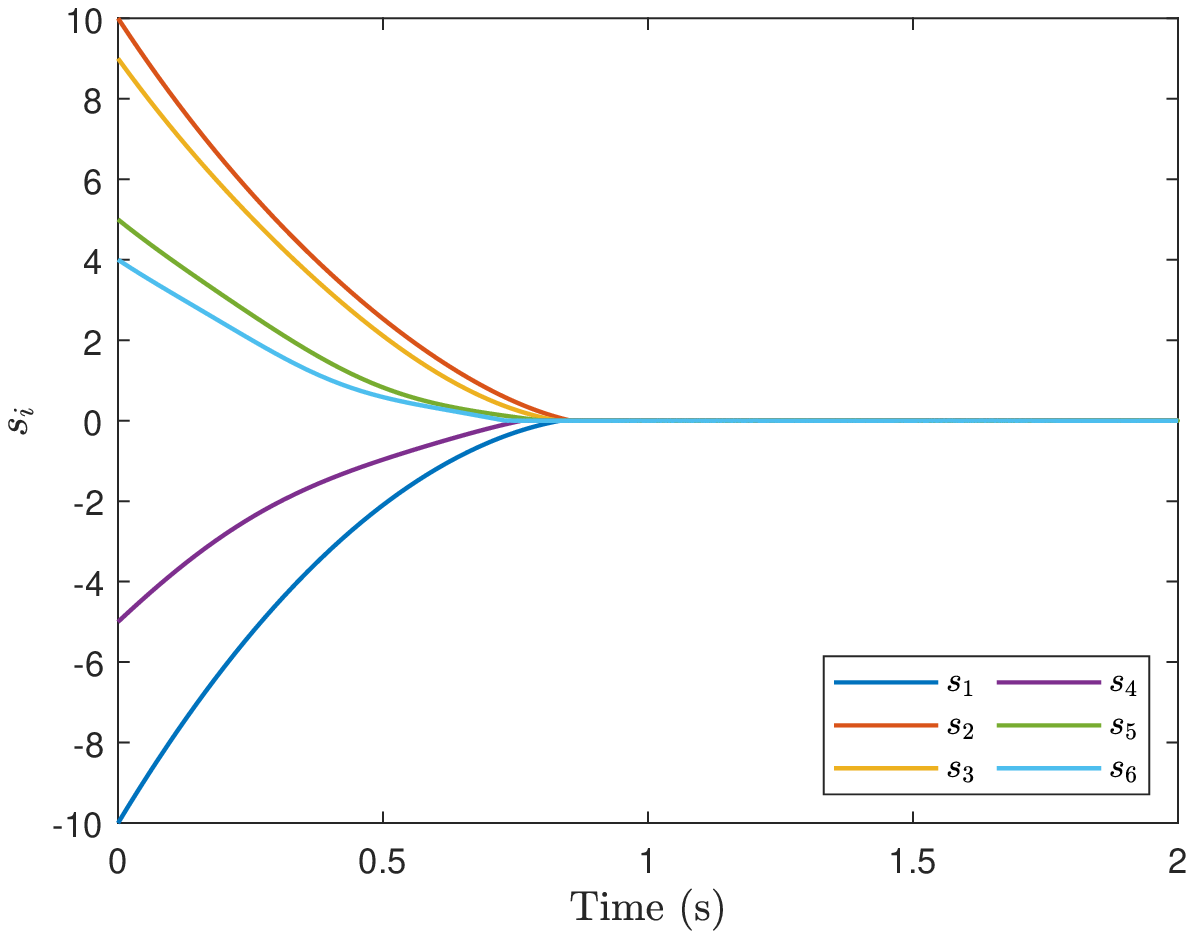}
   \end{minipage}}
\subfigure[]{
  \begin{minipage}{0.23\textwidth}
  \centering
  \includegraphics[scale=0.3]{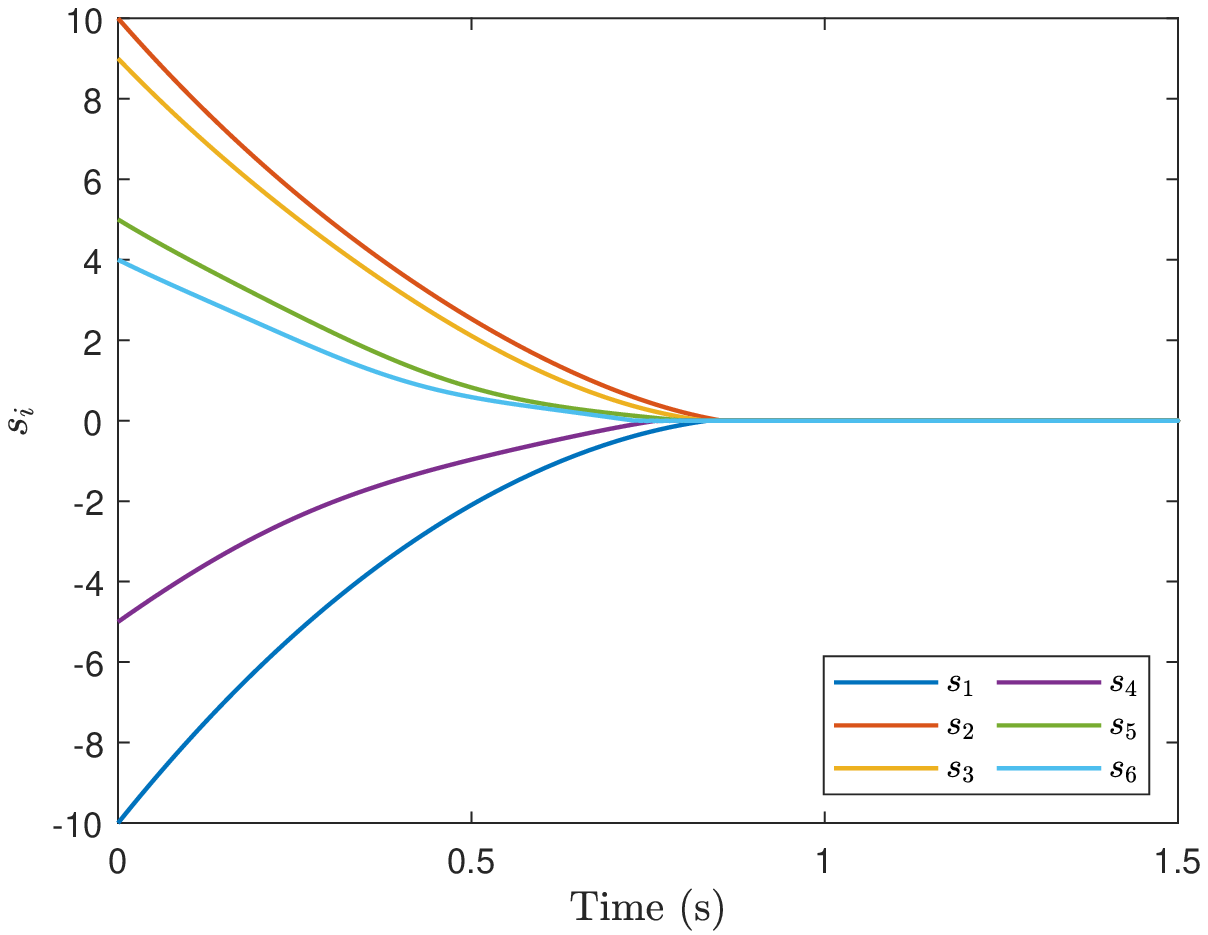}
  \end{minipage}}
\caption{Evolution of the sliding variables  under the signed digraphs of figure 3. (a) Under figure 3 (a). (b) Under figure 3 (b)}
\end{figure}\label{fig9}

\begin{figure}
\subfigure[]{
  \begin{minipage}{0.23\textwidth}
  \centering
  \includegraphics[scale=0.3]{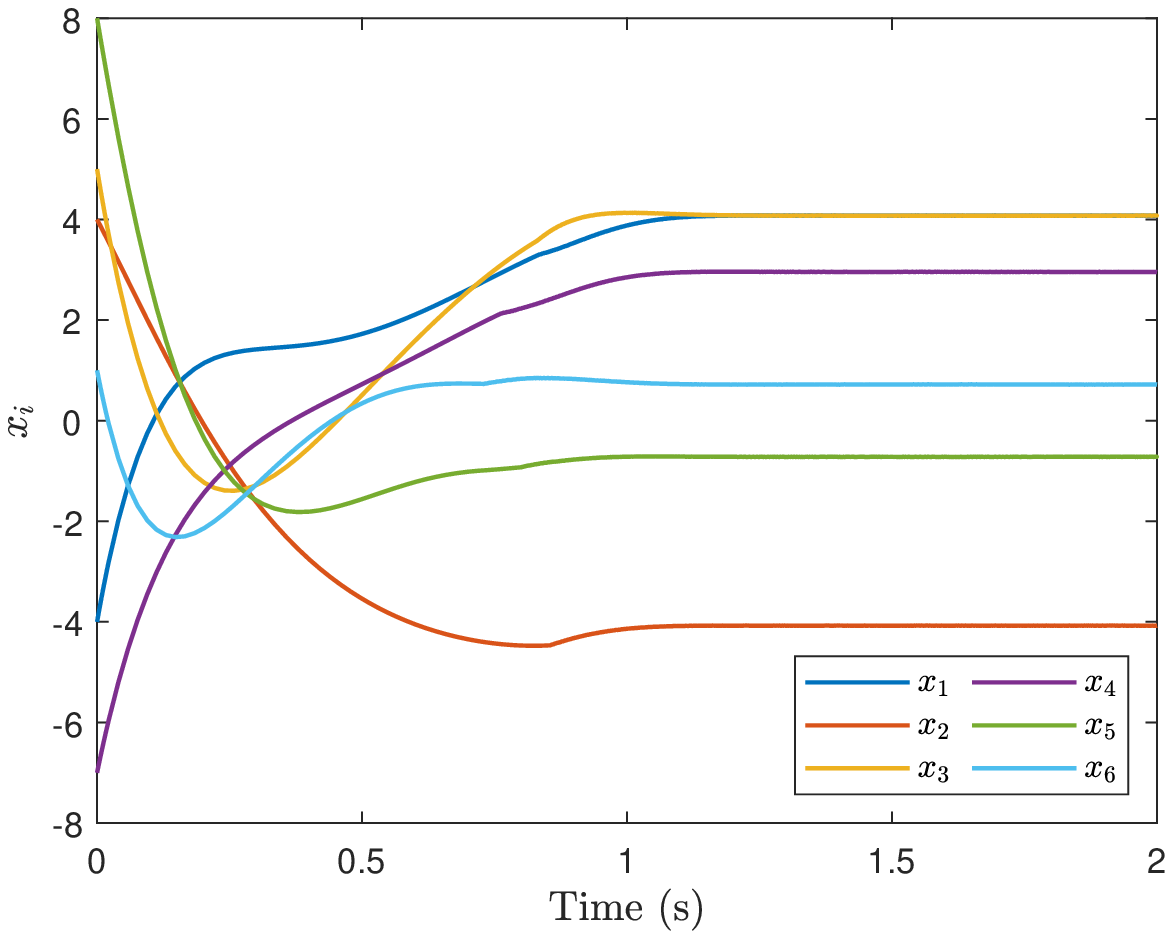}
   \end{minipage}}
\subfigure[]{
  \begin{minipage}{0.23\textwidth}
  \centering
  \includegraphics[scale=0.3]{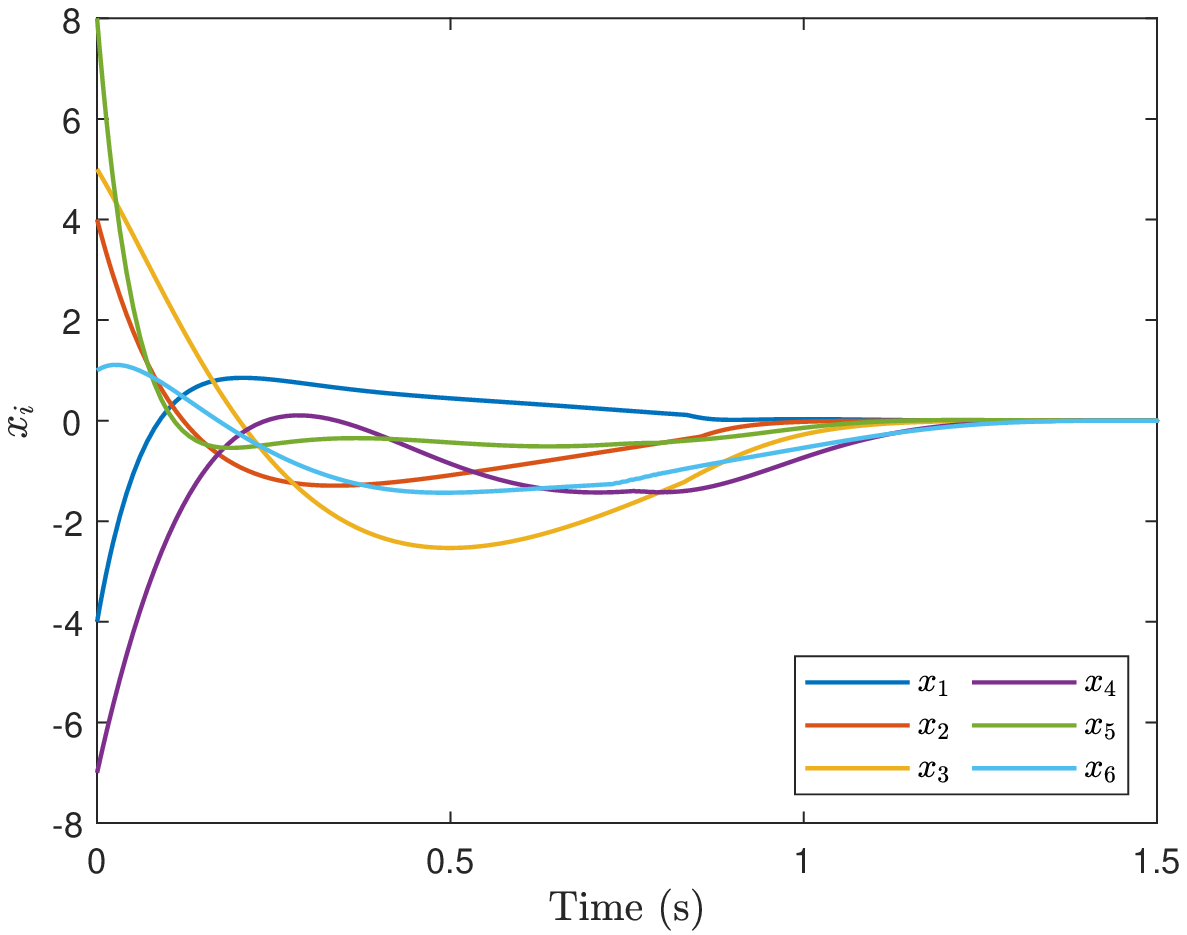}
  \end{minipage}}
\caption{State evolution of the CAN(2) under the signed digraphs of Figure 3. (a) Under Figure 3 (a). (b) Under Figure 3 (b)}
\end{figure}\label{Fig10}

Example 6. Let us consider the CAN \eqref{eq2} with ${d_k}= \cos \left( {kt - \frac{\pi}{3}} \right)$ under the signed digraphs in Figure 5. The control parameters of \eqref{eq47} and \eqref{eq48} are chosen as $\kappa = 3,T_r = 0.4, T_s=0.6,{\rho_1} = 0.2,{\rho_2} = 0.1,{\mu_1} = 2,{\mu_2} = 0.1,{\mu_3} = 0.5$. Set $X\left( 0 \right) = \left[2.6,-1.2,-1.2,-1,-0.2,0.9,-2.9,2,0.3,2.1,-1,-0.3, -2.7,\right.\\
\left.-2,1,-1\right]^{\rm{T}}$ and $\sigma\left( 0 \right) = \left[  2.9,-3,0.5,0,0.75,-0.8,-1.5,\right.\\
\left.3.8,2.3,3.6,0.2,-0.27,-4,-2.3,-0.5,-2.9\right]^{\rm{T}}$. The trajectories of the sliding variables and the agents under the signed digraphs of Figure 5 are depicted in Figures 11 and 12, respectively. From Figure 11, one see that the sliding variables converge to zero within the predefined finite time $T_r = 0.4s$. From Figure 12, one observe that bipartite containment and stability are, respectively, accomplished within the predefined finite time $T_r = 1s$ under the Figure 5(a) and (b), which validates Theorem \ref{thm6}.

\begin{figure}
\subfigure[]{
  \begin{minipage}{0.23\textwidth}
  \centering
  \includegraphics[scale=0.3]{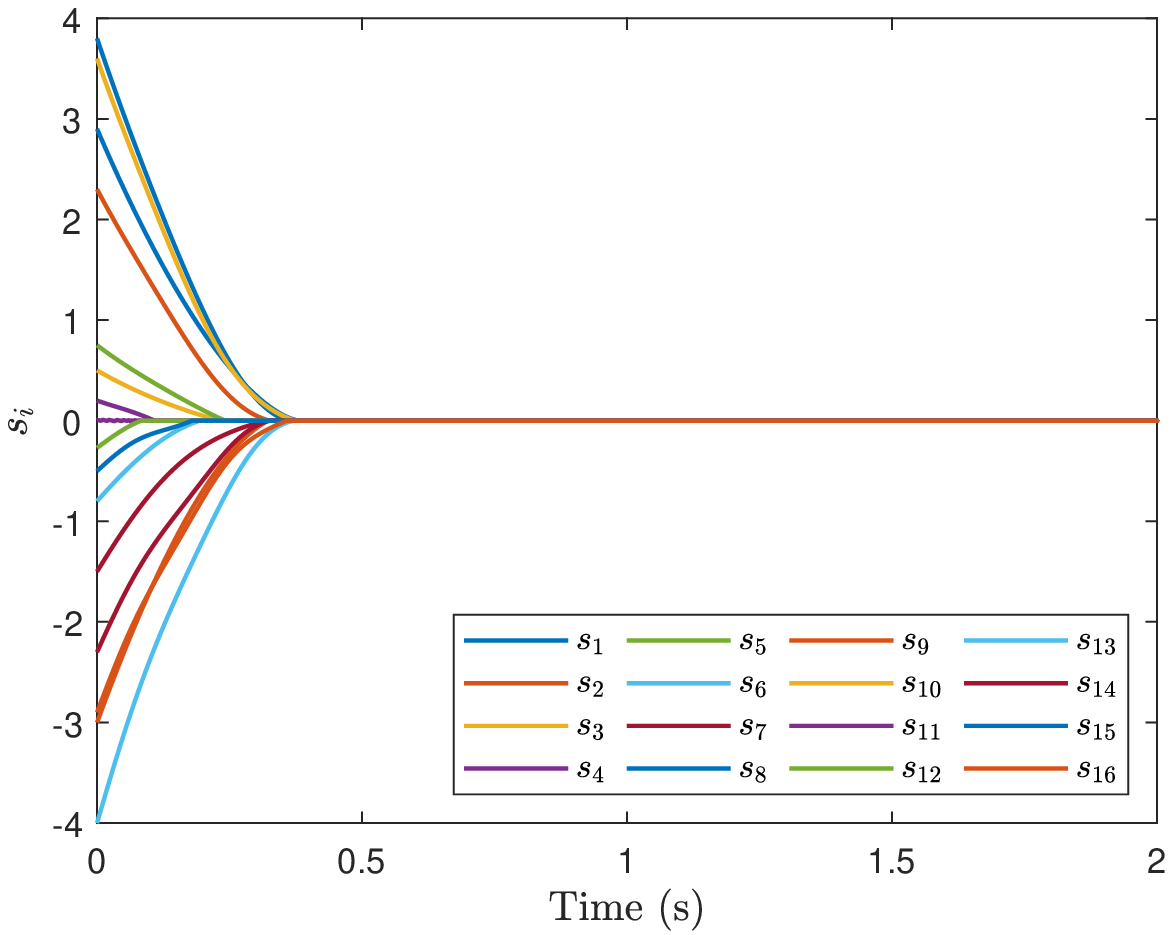}
   \end{minipage}}
\subfigure[]{
  \begin{minipage}{0.23\textwidth}
  \centering
  \includegraphics[scale=0.3]{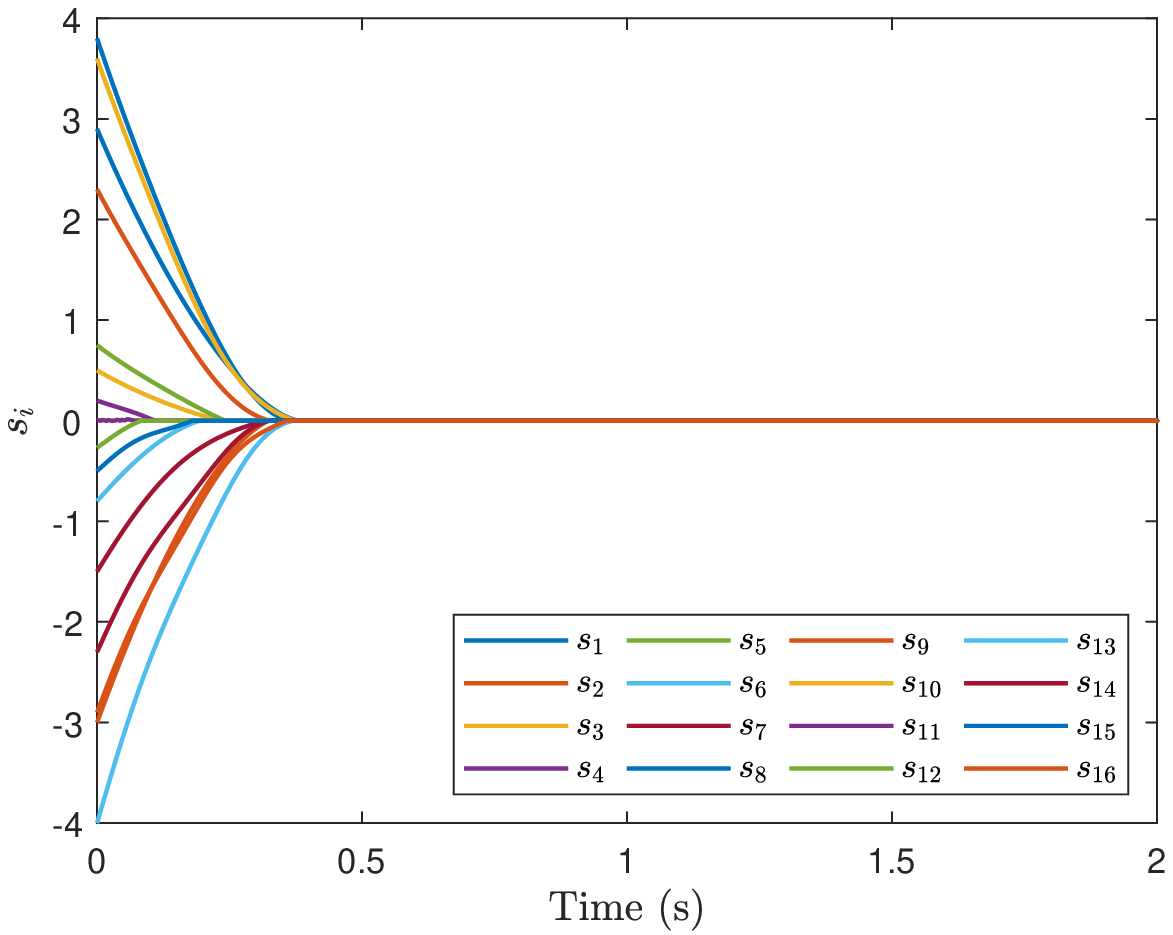}
  \end{minipage}}
\caption{Evolution of the sliding variables  under the signed digraphs of Figure 5. (a) Under Figure 5 (a). (b) Under Figure 5 (b). }
\end{figure}\label{Fig11}

\begin{figure}[!t]
\subfigure[]{
  \begin{minipage}{0.23\textwidth}
  \centering
  \includegraphics[scale=0.3]{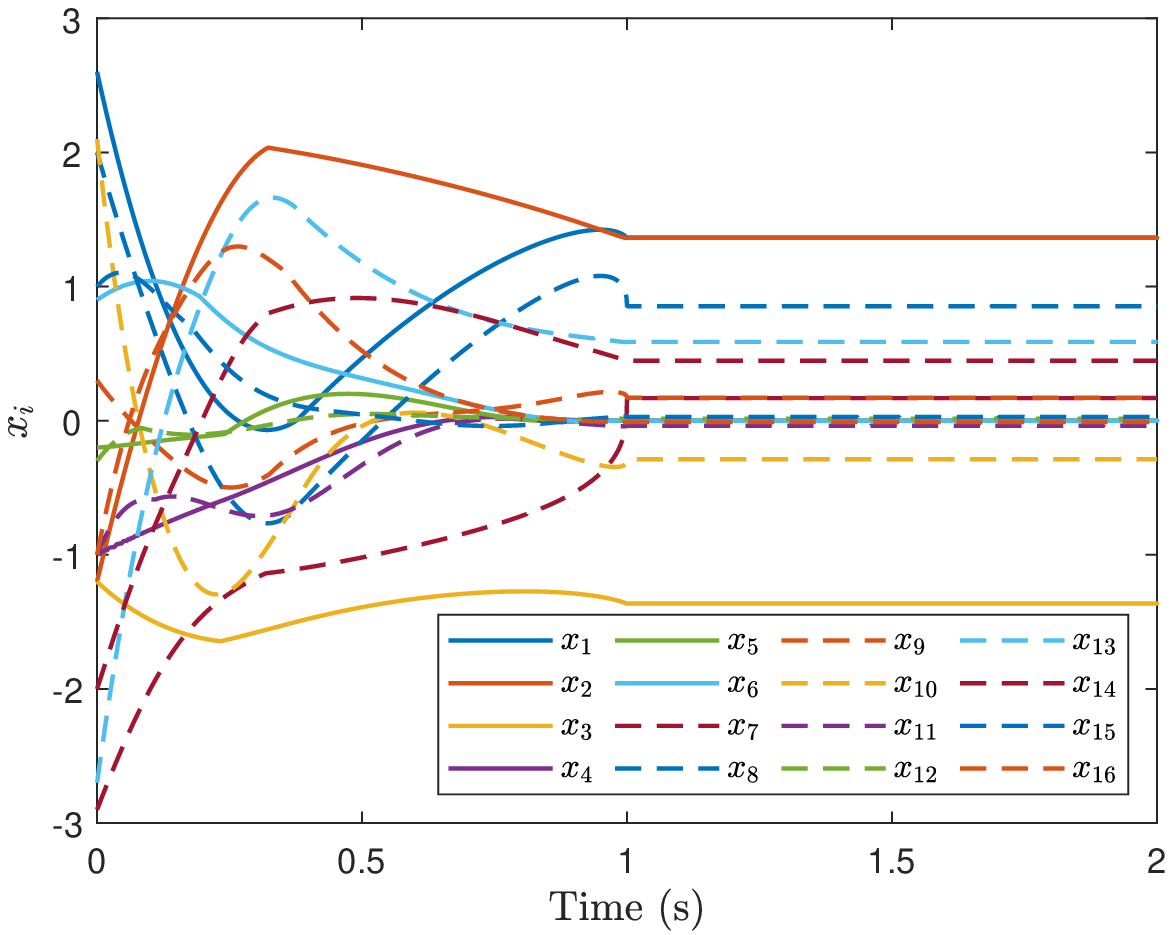}
   \end{minipage}}
\subfigure[]{
  \begin{minipage}{0.23\textwidth}
  \centering
  \includegraphics[scale=0.3]{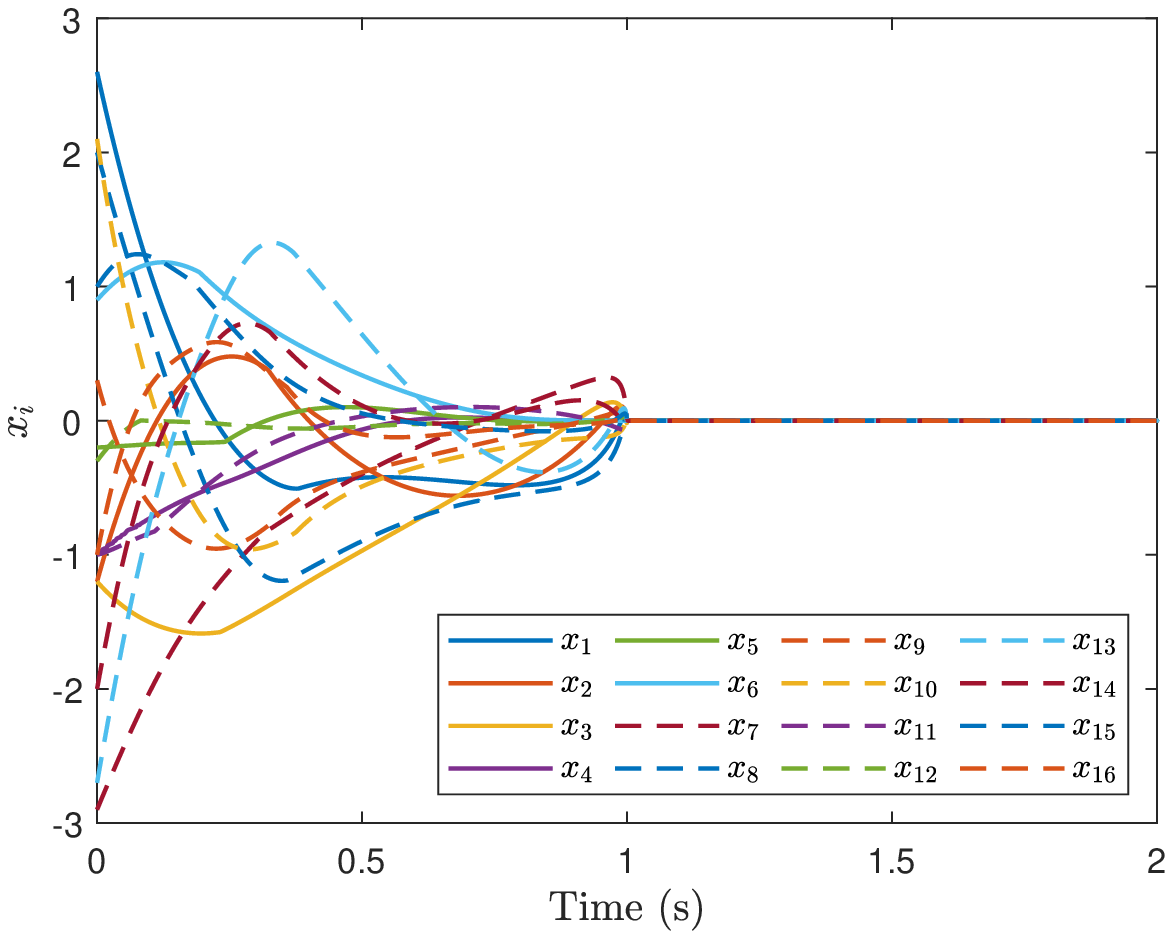}
  \end{minipage}}
\caption{State evolution of the CAN(2) under the signed digraphs of Figure 5.(a) Under Figure 5 (a). (b) Under Figure 5 (b).}
\end{figure}\label{Fig12}

\section{Conclusion}\label{sec5}

The problems of PTCC, including stability, bipartite consensus, interval bipartite consensus, and bipartite containment control, for single-integrator CANs with and without external disturbances over general signed digraphs without any structure restriction have been addressed, for which a unified analysis and design framework has been provided. By using the relative states of neighboring agents, a prescribed-time control protocol with time-invariant and time-varying gains has been firstly developed to handle the PTCC problems for CANs without disturbances. Then, based on the control protocol for the disturbance-free nominal CANs, a novel prescribed-time sliding mode control protocol is developed to address the PTCC problems for CANs subject to external disturbances. In particular, it has been demonstrated that with the proposed control protocols, a CAN can reach bipartite containment (respectively, stability) in prescribed finite time if the underlying signed digraph is weakly connected and has at least one structurally balanced CSC (respectively, all its CSCs are structurally unbalanced). The conditions on sign-symmetry, structural balance, and connectivity of signed digraphs typically assumed in the existing literature are removed. Finally, simulation examples have been provided to validate the derived results.

We have only considered the PTCC problems for CANs with single-integrator dynamics. Possible extensions of the established results include addressing CANs with general linear dynamics, nonlinear dynamics, switching topologies and communication delays.


%

%



\ifCLASSOPTIONcaptionsoff
  \newpage
\fi



\bibliographystyle{IEEEtran}
\bibliography{mybibfile}
\end{document}